\DeclareRobustCommand{\SkipTocEntry}[5]{}
\setlist{itemsep=.5\baselineskip,topsep=.5\baselineskip}
\numberwithin{equation}{section}
\theoremstyle{plain}
\newtheorem{theorem}{Theorem}[section]
\newtheorem{thm}[theorem]{Theorem}
\newtheorem{lemma}[theorem]{Lemma}
\newtheorem{prop}[theorem]{Proposition}
\newtheorem{example}[theorem]{Example}
\newtheorem{defn}[theorem]{Definition}
\newtheorem{cor}[theorem]{Corollary}
\newtheorem{rmk}[theorem]{Remark}
\newcommand{\arr}{\rightarrow}
\newcommand{\incl}{\hookrightarrow}
\newcommand{\R}{\mathbb{R}}
\newcommand{\C}{\mathbb{C}}
\newcommand{\Z}{\mathbb{Z}}
\newcommand{\eps}{\epsilon}
\newcommand{\Id}{\mathbbm{1}}
\newcommand{\mcF}{\mathcal{F}}
\newcommand{\mcG}{\mathcal{G}}
\newcommand{\mcS}{\mathcal{S}}
\newcommand{\mcU}{\mathcal{U}}
\newcommand{\mcV}{\mathcal{V}}
\newcommand{\mbN}{\mathbb{N}}
\newtheorem{claim}[theorem]{Claim}
\newcommand{\beq}{\begin{eqnarray}}
\newcommand{\eeq}{\end{eqnarray}}
\newcommand{\Tr}{\mbox{\rm Tr}}
\newcommand{\setft}[1]{\mathrm{#1}}
\newcommand{\Unitary}{\setft{U}}
\newcommand{\norm}[1]{\left\lVert#1\right\rVert}
\newcommand{\Free}{\mcF}
\renewcommand{\Im}{\operatorname{Im}}
\DeclareMathOperator{\hlp}{hlp}
\DeclareMathOperator{\tr}{tr}
\newcommand{\ntr}{\widetilde{\operatorname{tr}}}
\DeclareMathOperator{\BS}{BS}
\title[Entanglement and hyperlinear profile]{Entanglement in non-local games
and the hyperlinear profile of groups}
\author[William Slofstra]{William Slofstra$^*$}
\thanks{${}^*$Institute for Quantum Computing and Department of
    Pure Mathematics, University of Waterloo, Waterloo, Canada. email:
    \texttt{weslofst@uwaterloo.ca}}
\author[Thomas Vidick]{Thomas Vidick$^\dagger$}
\thanks{${}^{\dagger}$Department of Computing and Mathematical Sciences,
    California Institute of Technology, Pasadena, USA. email:
    \texttt{vidick@cms.caltech.edu}.}
\begin{document}

\begin{abstract}
    We relate the amount of entanglement required to play linear-system
    non-local games near-optimally to the hyperlinear profile of
    finitely-presented groups. By calculating the hyperlinear profile of a
    certain group, we give an example of a finite non-local game for which the
    amount of entanglement required to play $\eps$-optimally is at least
    $\Omega(1/\eps^k)$, for some $k>0$. Since this function approaches infinity
    as $\eps$ approaches zero, this provides a quantitative version of a
    theorem of the first author. 
\end{abstract}

\maketitle

\section{Introduction}

Let $\mcG$ be a two-player non-local game. Such a game is specified by finite sets
of inputs $X$ and $Y$ and outputs $A$ and $B$, associated with the first
(Alice) and second (Bob) players respectively, a probability distribution $\pi$
on input pairs $(x,y)\in X\times Y$, and a winning predicate $V \in
\{0,1\}^{A\times B \times X \times Y}$. A quantum strategy $\mcS$ for the game
$\mcG$ is specified by Hilbert spaces $H_A$ and $H_B$, a state $\ket{\psi} \in
H_A \otimes H_B$, a positive operator-valued measure $\{A_x^a\}_{a\in A}$ on $H_A$ 
for every $x \in X$, and a positive operator-valued measure $\{B_y^b\}_{b \in B}$ on
$H_B$ for every $y \in Y$. A strategy is finite-dimensional if at least one of
$\dim(H_A)$ or $\dim(H_B)$ is finite; in this case we say that the strategy has
dimension $\min\{\dim(H_A),\dim(H_B)\}$. The winning probability of strategy
$\mcS$ in game $\mcG$ is defined as 
\begin{equation*}
    \omega(\mcG;\mcS)\,=\, \sum_{(x,y) \in X\times Y} \,\pi(x,y)\,
        \sum_{(a,b)\in A\times B}\, V(a,b|x,y) \,\bra{\psi} A_x^a \otimes B_y^b \ket{\psi}\;.
\end{equation*}
The quantum value $\omega^q(\mcG)$ of $\mcG$ is the
supremum of winning probabilities $\omega(\mcG;\mcS)$ across all
finite-dimensional quantum strategies $\mcS$. A basic question about $\mcG$ is:
what is the minimum amount of entanglement $E(\mcG,\eps)$ used by a strategy
that achieves winning probability at least $\omega^q(\mcG) - \eps$? Here we
measure entanglement by the Schmidt rank. When lower bounds on $E(\mcG,\eps)$
are known, $\mcG$ can potentially be used to certify entangled states. As a
result, many lower bound results
exist~\cite{PWPVJ08,JPPVW10,JP11,slofstra2011lower,Coladangelo16,CN16,chao2017test,JOP16,NV17,ostrev2016entanglement}. In
particular, it is known that $E(\mcG,\eps)$ can be arbitrarily large. For
example, a result of Ostrev and the second author \cite{ostrev2016entanglement}
states that for any $n \geq 1$, there is a two-player game $\mcG_n$ with input
sets of size $n$ and output sets of size $2$, such that
\begin{equation}\label{eq:ov}
    E\Big(\mcG_n,O\Big(\frac{1}{n^{5/2}}\Big)\Big) \geq 2^{\Omega(\sqrt{n})}\;.
\end{equation}
As with all known results of this type, the lower bound requires games of increasing
size to get $E(\mcG_n,\eps) \arr +\infty$. A recent result of the first author
is that there is a fixed, constant-size non-local game $\mcG$ such that
$E(\mcG,\eps) \arr +\infty$ as $\eps \arr 0$ \cite{Sl17}. In other words, there
is a game which cannot be played optimally using any finite-dimensional Hilbert
space. The purpose of this paper is to prove the following quantitative version
of this result: 
\begin{theorem}\label{T:main}
    Let $E(\mcG,\eps)$ be the smallest possible integer $d$ such that there is
    a quantum strategy $\mcS$ of dimension $d$ for $\mcG$ with success
    probability $\omega(\mcG;\mcS) \geq \omega^q(\mcG) - \eps$. Then there is a
    non-local game $\mcG$ and constants $C,C' > 0$ such that
    \begin{equation*}
        \frac{C}{\eps^{1/6}} \leq E(\mcG,\eps) \leq \frac{C'}{\eps^{1/2}}
    \end{equation*}
    for all $\eps \geq 0$.
\end{theorem}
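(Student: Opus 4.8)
The plan is to route the problem through the theory of linear-system non-local games and their solution groups, making quantitative the qualitative theorem of the first author \cite{Sl17}. Recall that to a linear system $Mx=b$ over $\Z_2$ one associates, on one side, the linear-system game $\mcG(M,b)$ in which Alice is asked for a satisfying assignment to the variables occurring in one equation while Bob is asked for the value of a single variable, and on the other side a finitely presented \emph{solution group} $\Gamma=\Gamma(M,b)$, generated by one involution per variable together with a central order-two element $J$, whose relations encode the equations; a perfect quantum strategy for $\mcG(M,b)$ is the same thing as a unitary representation of $\Gamma$ sending $J$ to $-I$. I would fix $M,b$ so that such a representation exists (equivalently $J\neq e$ in $\Gamma$) but no finite-dimensional one does, which is the qualitative content of \cite{Sl17}. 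The quantitative engine is an \emph{approximate} form of this correspondence: a finite-dimensional strategy of dimension $d$ with value $\omega^q(\mcG)-\eps=1-\eps$ can be rounded, using approximate-consistency and operator-rigidity arguments specific to linear-system games, to self-adjoint unitaries $f(s)$ on a space of dimension $d$, one per generator, whose images of the defining relators all lie within $\poly(\eps)$ of the identity in normalized Hilbert--Schmidt distance and with $\ntr(f(J))\leq -1+\poly(\eps)$; conversely, any such ``$\delta$-approximate representation with $J$ near $-I$'' of dimension $d$ gives, by reading off the generator images as observables, a strategy of dimension $d$ and value at least $1-O(\delta)$.

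I would then package the approximate-representation side as an invariant of the fixed presentation: let the \emph{hyperlinear profile} $\hlp_\Gamma(\delta)$ be the least $d$ for which there is a map $f$ from the generators into $\Unitary(d)$ sending every relator within $\delta$ of the identity (in normalized Hilbert--Schmidt distance) and with $\ntr(f(J))\leq -1+\delta$. Finiteness of $\hlp_\Gamma(\delta)$ for every $\delta>0$ is exactly the assertion $\omega^q(\mcG)=1$, and the rate at which it grows as $\delta\to 0$ controls the attainability of that value. The two halves of the approximate correspondence give, for suitable constants $c_1,c_2>0$ and some $0<\alpha\leq 1$ accounting for the rounding loss,
\begin{equation*}
    \hlp_\Gamma\!\big(c_1\,\eps^{\alpha}\big)\;\leq\; E(\mcG,\eps)\;\leq\;\hlp_\Gamma\!\big(c_2\,\eps\big),
\end{equation*}
the upper inequality being essentially lossless as it is the ``easy'' direction. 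The theorem therefore reduces to proving $\hlp_\Gamma(\delta)=\Omega(\delta^{-\beta})$ and $\hlp_\Gamma(\delta)=O(\delta^{-1/2})$ for the chosen $\Gamma$, with $\alpha$ and $\beta$ satisfying $\alpha\beta=1/6$; the asymmetry between the exponents $1/6$ and $1/2$ in the statement is precisely the combined effect of the loss $\alpha$ incurred in rounding a near-optimal strategy to near-exact operator relations and of the genuine gap between our lower and upper bounds on the profile.

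It remains to choose $\Gamma$ and bound its profile. For the \textbf{upper bound} on $\hlp_\Gamma$ I would exhibit explicit finite-dimensional approximate representations. The solution groups relevant here are built so that $\Gamma$ surjects onto (or contains) an amenable group assembled from cyclic factors via central, Weyl-type extensions; truncating the left regular representation of that group --- concretely, replacing each infinite cyclic factor by a sufficiently large finite cyclic group realized by clock-and-shift matrices --- produces unitaries satisfying every relator to within $\delta$ while forcing $f(J)$ to within $\delta$ of $-I$, and optimizing the truncation parameters against the resulting dimension yields dimension $O(\delta^{-1/2})$, hence $E(\mcG,\eps)\leq C'\eps^{-1/2}$. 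For the \textbf{lower bound} on $\hlp_\Gamma$ I would prove a quantitative rigidity (approximate-representation stability) statement for the relevant sub-presentation: any $\delta$-approximate representation must be within an error controlled by $\delta$ of a genuine representation of the amenable subgroup, and in any genuine such representation the element $J$ --- which is a product of the subgroup's generators that becomes trivial only ``at infinity'' --- has normalized trace bounded away from $-1$ unless the dimension is at least $\Omega(\delta^{-\beta})$; feeding this back through the correspondence yields $E(\mcG,\eps)\geq C\eps^{-1/6}$. The main obstacle is exactly this rigidity step: extracting a stability estimate for the chosen group that is sharp enough to be quantitative, and then carrying the constants through both that argument and the game-to-group rounding so that $\alpha\beta$ indeed equals $1/6$. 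Assembling the two profile bounds with the displayed inequalities then proves Theorem~\ref{T:main}.
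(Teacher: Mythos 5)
Your high-level architecture matches the paper's: pass from the game to the solution group, define a hyperlinear profile, bound it above by an explicit construction and below by a rigidity argument, and combine via an approximate game--group correspondence. But the two steps that carry all the quantitative content are either missing or would fail as described. First, your lower bound on the profile proposes to show that any $\delta$-approximate representation is close to a \emph{genuine} representation of the relevant amenable subgroup and then argue about the trace of $J$ there. In any genuine finite-dimensional representation of the solution group, $J$ maps exactly to $+\Id$ (that is the whole point: the group is hyperlinear but not residually finite), so the only way such an argument produces a dimension bound is through a dimension-dependent stability constant; for Baumslag--Solitar-type relations the known constants of this kind are exponential in the dimension (this is exactly Fritz's example, quoted in Section~\ref{S:bounds}, which yields only a $\sqrt{\log(1/\eps)}$ lower bound). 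The paper never rounds to a genuine representation: the lower bound (Lemma~\ref{L:K-lbound}) is a direct spectral argument showing that the relation $xyx^{-1}=y^2$, together with $\phi(c)=-\Id$ forcing $Y$ into an off-diagonal block form, makes the spectrum of $Y$ approximately closed under taking square roots, and iterating this ``eigenvalue doubling'' $\ell$ times produces $2^{\ell}$ distinct eigenvalues as long as $2^{\ell}\lesssim 1/(\sqrt{d}\,\eps)$, whence $d\gtrsim \eps^{-2/3}$. You would need to supply an idea of this kind; quantitative stability of the subgroup is not it.

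Second, your exponent bookkeeping is inconsistent with what can actually be proved. You claim the profile satisfies $\hlp_\Gamma(\delta)=O(\delta^{-1/2})$ and simultaneously $\hlp_\Gamma(\delta)=\Omega(\delta^{-\beta})$ with $\alpha\beta=1/6$; but the correct lower bound for this type of group is $\Omega(\delta^{-2/3})$, which already contradicts an $O(\delta^{-1/2})$ upper bound. The paper's profile bounds are $\Omega(\delta^{-2/3})$ and $O(\delta^{-1})$ (Proposition~\ref{P:bounds}). The $\eps^{-1/2}$ in Theorem~\ref{T:main} does not come from a $\delta^{-1/2}$ profile: it comes from the fact that the ``easy'' direction is not lossless but quadratically \emph{better} than lossless --- an $\eps$-representation yields an $O(\eps^2)$-perfect strategy (Proposition~\ref{P:approximate}(a)), so $E(\mcG,\eps)\le \hlp(J,2,C'\sqrt{\eps})=O(\eps^{-1/2})$. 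Conversely the hard direction for general (non-maximally-entangled) states costs a fourth root, $E(\mcG,\eps)\ge \hlp(J,2,C\eps^{1/4})=\Omega(\eps^{-1/6})$, and obtaining that $\eps^{1/4}$ requires the Connes joint-distribution trick (Theorem~\ref{T:approximate2}) to cut the state down to a spectral projection on which it is nearly maximally entangled --- a step your ``approximate-consistency and operator-rigidity'' sentence does not address. Finally, note that the group with the computable profile is not itself a solution group; one still needs the embedding machinery of \cite{Sl17} (Proposition~\ref{P:embedding}) to transfer the profile bounds, in both directions, to the generator $J$ of an actual solution group.
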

The game $\mcG$ constructed in the proof of Theorem \ref{T:main} is similar to
the game constructed in \cite{Sl17}, and in particular is an example of a
linear system non-local game with $\omega^q(\mcG) = 1$. Linear system games are
a subclass of non-local games for which the existence of perfect quantum
strategies is controlled by an associated finitely-presented group $\Gamma$,
called the solution group. The proof of Theorem \ref{T:main} is based on the
observation that $\eps$-perfect strategies for linear system games $\mcG$
correspond to approximate representations of the solution group $\Gamma$ of
$\mcG$.
As a result, the function $E(\mcG,\eps)$ is linked to the dimension of
approximate representations of the solution group $\Gamma$ of $\mcG$. For sofic
groups, the asymptotic dimension growth of sofic approximations is measured by
the sofic profile of the group, a concept introduced by Cornulier \cite{Co13}.
Theorem \ref{T:main} can be thought of as a calculation of a ``hyperlinear
profile'' of the solution group $\Gamma$, measuring the asymptotic dimension
growth of unitary approximations to $\Gamma$.

Since hyperlinear profiles do not seem to have been studied heavily before,
we explore some of their properties here. In particular, we give two related
definitions of hyperlinear profile. The first, which we work with throughout
the paper, is defined for finitely-presented groups, and is convenient for
working with non-local games. The second matches Cornulier's definition of
sofic profile, and in particular, is independent of the choice of presentation.
Although we do not know if the two definitions are precisely the same, we show
that they are the same under a natural equivalence relation.

In light of~\eqref{eq:ov}, we do not expect the inverse polynomial scaling from
Theorem \ref{T:main} to be optimal, and we hope that this initial lower bound
opens the door to further results. We find it interesting that the theorem
provides a finite test that may be executed on two spatially isolated
quantum systems, such that the higher the success in the test, the larger the
dimension that can be certified. The correspondence between near-optimal
strategies for linear system games and approximate representations of solution
groups may also be of independent interest. In particular, this correspondence
implies that any $\eps$-optimal strategy can be turned into an
$O(\eps^{1/2})$-optimal strategy with a maximally entangled state (see
Remark~\ref{rk:connes}).

The rest of the paper is organized as follows. In Sections \ref{S:hlprofile}
and \ref{S:bounds}, we introduce our first definition of hyperlinear profile,
and prove our main lower bound. In Section \ref{S:linear}, we recall the notion
of a solution group of a linear system game, and explain how hyperlinear
profile is related to entanglement for strategies with maximally entangled
states. In section \ref{S:generalstates}, we show that hyperlinear profile is
related to entanglement for strategies with general states. In Section
\ref{S:embedding}, we use the embedding theorem of \cite{Sl17} and the results
of Section \ref{S:bounds} to prove explicit bounds on the hyperlinear profile
of the solution group, finishing the proof of Theorem \ref{T:main}. Finally, in
Section \ref{S:presindep} we give the second definition of hyperlinear profile,
and compare the two definitions.

\subsection{Notation}

We use the following notation throughout the paper.  $\Free(S)$ is the free
group generated by $S$, $\mcU(\C^d)$ is the unitary group of $\C^d$, and
$M_d(\C)$ is the set of $d \times d$ matrices.  We use the following 
norms on $M_d(\C)$: the operator norm $\norm{\cdot}_{op}$, the Frobenius norm
$\norm{\cdot}_F$ defined by $\norm{A}_F = \sqrt{\Tr(A^* A)}$, and the
normalized Frobenius norm $\norm{\cdot}_f = \norm{\cdot}_F / \sqrt{d}$.  Given
a positive semidefinite matrix $\rho$, we also let $\norm{\cdot}_{\rho}$ denote
the seminorm defined by $\norm{A}_{\rho} = \sqrt{\Tr(A^* A \rho)}$.  Note that
$\norm{\cdot}_F = \norm{\cdot}_{\Id}$, $\norm{\cdot}_f = \norm{\cdot}_{\Id/d}$,
and $\norm{A}_{\rho} = \norm{A\rho^{1/2}}_F$. We use $\ntr$ for the linear
functional $\Tr(\cdot) / d$ on $M_d(\C)$. 

\subsection{Acknowledgements}

We are indebted to Narutaka Ozawa for suggesting the use of the Connes
embedding trick and the beautiful line of argument now incorporated in Section
\ref{S:generalstates}; this lead to a substantial improvement in our results.
The first author also thanks Martino Lupini for helpful discussions. 

The second author is supported by NSF CAREER Grant CCF-1553477, AFOSR YIP award
number FA9550-16-1-0495, a CIFAR Azrieli Global Scholar award, and the IQIM, an
NSF Physics Frontiers Center (NSF Grant PHY-1125565) with support of the Gordon
and Betty Moore Foundation (GBMF-12500028). 

\section{Hyperlinear profile of finitely-presented groups}\label{S:hlprofile}

In this section, we state our first definition of hyperlinear profile, along
with some basic properties. The starting point is the following definition
from, e.g., \cite{Sl17} or \cite{HS17}.
\begin{defn}[\cite{Sl17}]
    An $\eps$-representation of a finitely-presented group $G = \langle S : R
    \rangle$ is a homomorphism $\phi : \Free(S) \arr \mcU(\C^d)$ from the free
    group $\Free(S)$ generated by $S$ to the unitary group $\mcU(\C^d)$, such
    that 
    \begin{equation*}
        \norm{\phi(r) - \Id}_f \leq \eps
    \end{equation*}
    for all $r \in R$. 

    An element $g \in G$ is \emph{non-trivial in approximate representations}
    if there is some representative $w \in \Free(S)$ of $g$ and $\delta > 0$
    such that for all $\eps > 0$, there is an $\eps$-representation $\phi$ with
    $\norm{\phi(w)-\Id}_f \geq \delta$. 
\end{defn}
Suppose $\langle S : R \rangle$ and $\langle S' : R' \rangle$ are two
presentations for a group $G$. While the set of $\eps$-representations depends
on the choice of presentation, any $\eps$-representation with respect to
$\langle S : R \rangle$ is an $O(\eps)$-representation with respect to $\langle
S' : R'\rangle$ (with the constant depending on the isomorphism between the
two presentations), and vice-versa. Similarly, whether $g$ is non-trivial in approximate
representations is independent of the choice of presentation.  The dependence
on $\delta$ is also somewhat arbitrary, due to the following well-known
consequence of the tensor-power trick:
\begin{lemma}\label{L:tensorpower}
    Suppose $g_1,\ldots,g_n \in G = \langle S : R \rangle$ are all non-trivial
    in approximate representations. Let $w_i \in \Free(S)$ be a representative
    of $g_i$ for each $1 \leq i \leq n$, and choose $\delta \in (0,\sqrt{2})$.
    Then for every $\eps > 0$, there is an $\eps$-representation $\phi$ such
    that $\norm{\phi(w_i) - \Id}_f \geq \delta$ for all $1 \leq i \leq n$.
\end{lemma}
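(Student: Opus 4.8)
The plan is to produce $\Phi$ as a tensor product $\Phi=\rho_1\otimes\cdots\otimes\rho_n$ of approximate representations, with each $\rho_i$ engineered so that the normalized trace $\ntr(\rho_i(w_i))$ has modulus comfortably below $1$. The reason this is the right target is multiplicativity: $\ntr$ is multiplicative over tensor products, so $|\ntr(\Phi(w_i))|=\prod_j|\ntr(\rho_j(w_i))|\le|\ntr(\rho_i(w_i))|$, where for the factors $j\neq i$ I use nothing beyond $|\ntr(\rho_j(w_i))|\le 1$, which holds automatically because $\rho_j(w_i)$ is unitary. Hence if I can arrange $|\ntr(\rho_i(w_i))|\le 1-\delta^2/2$ for every $i$, then $\norm{\Phi(w_i)-\Id}_f^2=2-2\Re\ntr(\Phi(w_i))\ge 2-2|\ntr(\Phi(w_i))|\ge\delta^2$, which is exactly what is wanted. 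So the whole problem reduces to building the $\rho_i$ with small trace modulus at $w_i$, plus bookkeeping the approximation error.

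For the construction, fix $\eps>0$. For each $i$, the hypothesis (using that whether an element is non-trivial in approximate representations, and with what constant up to a multiplicative factor, survives a change of representative — the same observation as the presentation-independence noted above, since $w_i$ differs from the representative the definition provides by a fixed product of conjugates of relators) furnishes a $\delta_i>0$ such that for all small enough $\eps'>0$ there is an $\eps'$-representation $\phi$ with $\norm{\phi(w_i)-\Id}_f\ge\delta_i$; set $\delta_0=\min_i\delta_i>0$. Take such a $\phi_i$, of some dimension $d_i$, for a value $\eps'$ to be fixed at the end, and replace it by $\phi_i'=\phi_i\oplus\Id_{d_i}$, the direct sum of $\phi_i$ with the trivial representation on $\C^{d_i}$. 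Then $\norm{\phi_i'(r)-\Id}_f=\tfrac1{\sqrt2}\norm{\phi_i(r)-\Id}_f$, so $\phi_i'$ is still an $\eps'$-representation; and $\ntr(\phi_i'(w_i))=\tfrac12\bigl(1+\ntr(\phi_i(w_i))\bigr)$, which combined with $\Re\ntr(\phi_i(w_i))\le 1-\delta_i^2/2$ (from the defect bound) and $|\ntr(\phi_i(w_i))|\le1$ gives $|\ntr(\phi_i'(w_i))|^2\le 1-\delta_i^2/4\le 1-\delta_0^2/4$. Writing $\nu=\sqrt{1-\delta_0^2/4}<1$, the key gain is that $|\ntr(\phi_i'(w_i))|\le\nu$ with $\nu<1$ a constant independent of $\phi_i$.

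Now amplify and combine. Let $K$ be the least integer with $\nu^K\le 1-\delta^2/2$ (finite since $\nu<1$ and $\delta<\sqrt2$), and set $\rho_i=(\phi_i')^{\otimes K}$. By multiplicativity of $\ntr$ over tensor powers, $|\ntr(\rho_i(w_i))|=|\ntr(\phi_i'(w_i))|^K\le\nu^K\le 1-\delta^2/2$, the target from the first paragraph; and by the usual telescoping estimate (using $\norm{A\otimes B}_f=\norm A_f\norm B_f$ and $\norm U_f=1$ for unitary $U$), $\rho_i$ is a $K\eps'$-representation. Finally put $\Phi=\rho_1\otimes\cdots\otimes\rho_n$: the same telescoping gives $\norm{\Phi(r)-\Id}_f\le\sum_i\norm{\rho_i(r)-\Id}_f\le nK\eps'$ for each relator $r$, so taking $\eps'=\eps/(nK)$ (or smaller, if needed to stay in the range where $\phi_i$ is available) makes $\Phi$ an $\eps$-representation, while the multiplicativity argument gives $\norm{\Phi(w_i)-\Id}_f\ge\delta$ for every $i$. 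Note that $K$, and hence $\eps'$, depend only on $n$, $\delta$, and the $g_i$'s — not on the $\phi_i$ actually chosen — so there is no circularity.

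The step I expect to need the most care is the passage $\phi_i\mapsto\phi_i\oplus\Id_{d_i}$, and it is worth saying why the two obvious combinations fail on their own. A direct sum $\bigoplus_i\phi_i$ introduces no cross terms, but it dilutes: $\norm{(\bigoplus_i\phi_i)(w_i)-\Id}_f^2$ is only bounded below by $(d_i/\sum_j d_j)\delta_i^2$, which cannot be pushed near $\delta^2$ once $\delta$ is near $\sqrt2$ and $n\ge2$. A tensor product $\bigotimes_i\phi_i$ does not dilute, but the off-diagonal values $\ntr(\phi_j(w_i))$ with $j\neq i$ are completely uncontrolled; in the worst case — each $\phi_j(w_i)$ a scalar unitary — they can multiply $\ntr(\phi_i(w_i))$ back up to near $1$ and kill the defect at $w_i$. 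Averaging with the identity is precisely what forecloses this: it forces $|\ntr(\phi_i'(w_i))|$ strictly below $1$ uniformly, even in the scalar case, where the defect bound alone gives no control on the modulus. After that, the tensor-power amplification and tensor-product combination are routine, and fixing the constants ($\nu$, $K$, $\eps'$) is bookkeeping.
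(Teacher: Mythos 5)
Your proof is correct, and it is essentially the paper's intended argument: the paper states Lemma \ref{L:tensorpower} as a consequence of the tensor-power trick and carries out exactly this kind of argument (direct sum with trivial representations to force $|\ntr(\phi(w))|$ strictly below $1$, then tensor amplification) in the proof of Lemma \ref{L:hlpdelta}. The only cosmetic difference is that the paper also adds the complex conjugate to make the traces real and nonnegative, whereas you bound the modulus directly; both handle the cross terms in the final tensor product in the same way.
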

A finitely-presented group is said to be \emph{hyperlinear} if every
non-trivial element is non-trivial in approximate representations. It is not
known if there is a group which is not hyperlinear, and deciding this is a
major problem in the field. For comparison, if every non-trivial element is
non-trivial in finite-dimensional representations, then the group is said to be
\emph{residually-finite}, and there are examples of finitely-presented,
hyperlinear, but non-residually-finite groups. Suppose $G$ is such a group,
so that there is $w \in \Free(S)$ representing an element which is trivial in all
finite-dimensional representations, but non-trivial in approximate
representations.  If we fix $\delta > 0$, then the dimension of
$\eps$-representations $\phi$ with $\norm{\phi(w) - \Id}_f \geq \delta$ must
increase as $\eps \arr 0$.  The \emph{hyperlinear profile} of $G$ is a
collection of functions which measure these growth rates. 
\begin{defn}\label{D:hlp1}
    Let $G = \langle S : R \rangle$ be a finitely-presented group, and let $T$
    be a finite subset of $\Free(S)$. The \emph{(hyperlinear) profile}
    of $T$ is the function $\hlp(T) : \R_{>0} \times \R_{>0} \arr
    \mbN \cup \{+\infty\}$ such that $\hlp(T;\delta,\eps)$ is the smallest
    integer $d$ for which there is an $\eps$-representation $\phi$ of dimension
    $d$ with 
    \begin{equation*}
        \norm{\phi(w) - \Id}_f \geq \delta \quad\text{ for all } w \in T,
    \end{equation*}
    or $+\infty$ if no such $d$ exists.

    If $T = \{w\}$, then we write $\hlp(w;\delta,\eps)$ for
    $\hlp(T;\delta,\eps)$. The hyperlinear profile of $G$ is the collection of
    functions $\hlp(T)$, where $T$ is a finite subset of $\Free(S)$ not
    containing any element with trivial image in $G$. 
\end{defn}
If $T$ does contain an element with trivial image, then $\hlp(T;\delta,\eps)$
will always be infinite for small enough $\eps$. It is clear that $\hlp(T)$ is
non-decreasing in $\delta$ and non-increasing in $\eps$.  Another easy property
of the hyperlinear profile is that it is non-decreasing under homomorphisms, in
the following sense:
\begin{lemma}\label{L:hlpprop2}
    Let $G_i = \langle S_i : R_i \rangle$, $i=1,2$, be two finitely-presented
    groups, and suppose $\phi : \Free(S_1) \arr \Free(S_2)$ is a homomorphism
    which descends to a homomorphism $G_1 \arr G_2$. Then there is a constant
    $C \geq 1$ such that 
    \begin{equation*}
        \hlp(T, \delta, C\eps) \leq \hlp(\phi(T),\delta,\eps)\;,
    \end{equation*}
    for any finite set $T \subset \Free(S_1)$ and $\eps,\delta > 0$.
\end{lemma}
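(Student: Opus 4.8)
The plan is to show that precomposition with $\phi$ converts any $\eps$-representation of $G_2$ witnessing a lower bound for $\hlp(\phi(T))$ into a $C\eps$-representation of $G_1$, of the same dimension, witnessing the corresponding bound for $\hlp(T)$. So I would start by setting $d = \hlp(\phi(T),\delta,\eps)$; if $d = +\infty$ the claimed inequality is vacuous, so I may assume $d$ is finite and fix an $\eps$-representation $\psi : \Free(S_2) \arr \mcU(\C^d)$ of $\langle S_2 : R_2 \rangle$ with $\norm{\psi(w') - \Id}_f \geq \delta$ for every $w' \in \phi(T)$. The natural candidate representation of $G_1$ is the composition $\psi \circ \phi : \Free(S_1) \arr \mcU(\C^d)$: it has dimension $d$, and since $\norm{(\psi\circ\phi)(w) - \Id}_f = \norm{\psi(\phi(w)) - \Id}_f \geq \delta$ for all $w \in T$, all that remains is to verify that $\psi \circ \phi$ is a $C\eps$-representation of $\langle S_1 : R_1 \rangle$ for some $C$ depending only on $\phi$ and the two presentations.

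For this, the key observation is that ``$\phi$ descends to $G_1 \arr G_2$'' means precisely that every relator $r \in R_1$ is mapped by $\phi$ into the normal closure of $R_2$ in $\Free(S_2)$. I would fix, once and for all, an expression $\phi(r) = \prod_{j=1}^{k_r} u_{r,j}\, s_{r,j}^{\pm 1}\, u_{r,j}^{-1}$ with $s_{r,j} \in R_2$ and $u_{r,j} \in \Free(S_2)$, for each of the finitely many $r \in R_1$. Applying $\psi$ and using that $\norm{\cdot}_f$ is invariant under left and right multiplication by unitaries, each conjugate factor satisfies $\norm{\psi(u_{r,j})\,\psi(s_{r,j})^{\pm 1}\,\psi(u_{r,j})^{-1} - \Id}_f = \norm{\psi(s_{r,j})^{\pm 1} - \Id}_f = \norm{\psi(s_{r,j}) - \Id}_f \leq \eps$; combining these with the elementary bound $\norm{AB - \Id}_f \leq \norm{A - \Id}_f + \norm{B - \Id}_f$ for unitary $A, B$ (again a consequence of unitary invariance of $\norm{\cdot}_f$) yields, by induction on the number of factors, $\norm{(\psi \circ \phi)(r) - \Id}_f = \norm{\psi(\phi(r)) - \Id}_f \leq k_r\, \eps$.

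Taking $C = \max\{1,\, \max_{r \in R_1} k_r\}$ — finite because $R_1$ is finite and the decompositions were chosen in advance — then shows that $\psi \circ \phi$ is a $C\eps$-representation of $G_1$ of dimension $d$ with $\norm{(\psi\circ\phi)(w) - \Id}_f \geq \delta$ for all $w \in T$, whence $\hlp(T,\delta,C\eps) \leq d = \hlp(\phi(T),\delta,\eps)$, as desired.

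I do not expect a genuine obstacle here: the argument is entirely elementary. The only point meriting a little care is checking that $C$ can be taken uniformly — independent of $T$, $\eps$, and $\delta$ — but this follows immediately since the conjugators $u_{r,j}$ and the counts $k_r$ depend only on the fixed normal-closure decompositions of the finitely many words $\phi(r)$. Incidentally, the same computation shows that $C$ may be taken to be the largest ``relator length'' occurring among these decompositions of the $\phi(r)$, $r \in R_1$.
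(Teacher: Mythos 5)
Your argument is correct and is exactly the one the paper intends: the lemma is stated without proof, but the remark immediately following it (that $C$ ``depends on how many times the relations from $R_2$ must be applied to show that the elements $\phi(R_1)$ are trivial'') describes precisely your normal-closure decomposition of each $\phi(r)$ into conjugates of relators from $R_2$. The precomposition $\psi \circ \phi$, the unitary invariance of $\norm{\cdot}_f$, and the telescoping bound $\norm{AB-\Id}_f \leq \norm{A-\Id}_f + \norm{B-\Id}_f$ are all as intended, and your uniform choice of $C$ over the finitely many $r \in R_1$ is exactly right.
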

Note that while the elements $\phi(R_1)$ are trivial in $G_2$, they
do not necessarily appear in $R_2$. The constant $C$ in Lemma \ref{L:hlpprop2}
depends on how many times the relations from $R_2$ must be applied to show
that the elements $\phi(R_1)$ are trivial. 

We are mainly interested in the asymptotic behaviour of $\hlp(T)$ as $\eps \arr
0$ with $\delta$ fixed.  When $\phi$ induces an isomorphism between $G_1$ and
$G_2$, Lemma \ref{L:hlpprop2} shows that the asymptotics of the functions
$\hlp(T)$ are somewhat independent of the choice of presentation. We can say
the same thing about the choice of representatives in $\Free(S)$: 
\begin{lemma}\label{L:hlpprop3}
    Suppose $w_0,w_1 \in \Free(S)$ represent the same element in $G = \langle
    S : R \rangle$. Then there is a constant $C \geq 1$ such that
    \begin{equation*}
        \hlp(w_0,\delta - C \eps, \eps) \leq \hlp(w_1,\delta,\eps)
    \end{equation*}
    for all $\eps > 0$ and $\delta > C \eps$.
\end{lemma}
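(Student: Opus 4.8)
The plan is to use the \emph{same} $\eps$-representation that witnesses the right-hand side. If $\hlp(w_1;\delta,\eps) = +\infty$ there is nothing to prove, so suppose it equals some finite $d$ and fix an $\eps$-representation $\phi : \Free(S) \arr \mcU(\C^d)$ with $\norm{\phi(w_1) - \Id}_f \geq \delta$. I will show that $\norm{\phi(w_0) - \Id}_f \geq \delta - C\eps$ for a constant $C \geq 1$ depending only on $w_0$, $w_1$, and the presentation $\langle S : R\rangle$; since $\phi$ still has dimension $d$, this yields $\hlp(w_0;\delta - C\eps,\eps) \leq d$, as desired.

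The key inputs are two elementary facts about $\norm{\cdot}_f$, which is invariant under left and right multiplication by unitaries. For unitaries $W, W_1, W_2$ one has $\norm{W - \Id}_f = \norm{W^{-1}(W - \Id)}_f = \norm{\Id - W^{-1}}_f$, and, writing $W_1 W_2 - \Id = W_1(W_2 - \Id) + (W_1 - \Id)$, the ``approximate-homomorphism'' estimate $\norm{W_1 W_2 - \Id}_f \leq \norm{W_1 - \Id}_f + \norm{W_2 - \Id}_f$; by induction $\norm{W_1 \cdots W_m - \Id}_f \leq \sum_{j} \norm{W_j - \Id}_f$, and conjugation invariance gives $\norm{W W_j W^{-1} - \Id}_f = \norm{W_j - \Id}_f$.

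Now, since $w_0$ and $w_1$ represent the same element of $G$, the word $w_0 w_1^{-1}$ lies in the normal closure of $R$ in $\Free(S)$, so we may write $w_0 w_1^{-1} = \prod_{j=1}^{m} u_j r_j^{\eta_j} u_j^{-1}$ with $u_j \in \Free(S)$, $r_j \in R$, and $\eta_j \in \{\pm 1\}$, where the number $m$ depends only on $w_0$, $w_1$, and the presentation — crucially, \emph{not} on $\eps$ or on $\phi$. Applying the homomorphism $\phi$ and the facts above, and using $\norm{\phi(r_j)^{\pm 1} - \Id}_f \leq \eps$, we get
\[
    \norm{\phi(w_0 w_1^{-1}) - \Id}_f \;\leq\; \sum_{j=1}^m \norm{\phi(u_j)\phi(r_j)^{\eta_j}\phi(u_j)^{-1} - \Id}_f \;=\; \sum_{j=1}^m \norm{\phi(r_j)^{\eta_j} - \Id}_f \;\leq\; m\eps.
\]
Finally, from $\phi(w_0) - \Id = \big(\phi(w_0 w_1^{-1}) - \Id\big)\phi(w_1) + \big(\phi(w_1) - \Id\big)$, the reverse triangle inequality and right-unitary invariance give $\norm{\phi(w_0) - \Id}_f \geq \norm{\phi(w_1) - \Id}_f - \norm{\phi(w_0 w_1^{-1}) - \Id}_f \geq \delta - m\eps$. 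Taking $C = \max\{m,1\}$ finishes the proof; the hypothesis $\delta > C\eps$ is needed only so that $\delta - C\eps > 0$ lies in the domain of $\hlp$.

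I do not expect a genuine obstacle here: the only point requiring care is that the number $m$ of conjugated relators expressing $w_0 w_1^{-1}$ is a fixed constant, independent of $\eps$ and $\phi$, which is immediate since it depends only on the fixed words and the fixed presentation. Everything else is the standard estimate that $\eps$-representations compose with an additive loss in $\norm{\cdot}_f$.
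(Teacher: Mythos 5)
Your proof is correct, and it is exactly the standard argument the paper has in mind but omits: write $w_0w_1^{-1}$ as a fixed product of $m$ conjugated relators, use unitary invariance of $\|\cdot\|_f$ and the additive estimate $\|W_1W_2-\Id\|_f\le\|W_1-\Id\|_f+\|W_2-\Id\|_f$ to get $\|\phi(w_0w_1^{-1})-\Id\|_f\le m\eps$, and conclude by the reverse triangle inequality (this is the same mechanism the authors describe after Lemma~\ref{L:hlpprop2}, where the constant counts how many times the relations must be applied). The bookkeeping with $C=\max\{m,1\}$ and the monotonicity of $\hlp$ in $\delta$ is handled correctly.
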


Although we won't use it, the following lemma puts some limits on how much the
choice of $\delta$ can affect the asymptotics:
\begin{lemma}\label{L:hlpdelta}
    Let $G = \langle S : R \rangle$, and $0 < \delta < \delta' < \sqrt{2}$.
    Then there is a constant $k \geq 1$ such that 
    \begin{equation*}
        \hlp(T,\delta, \eps) \leq \hlp(T, \delta', \eps) \leq k \hlp(T,\delta,\eps/k)^k
    \end{equation*}
    for all $\eps > 0$ and finite subsets $T \subset \Free(S)$. 
\end{lemma}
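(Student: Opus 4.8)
The left-hand inequality requires no work: it is the monotonicity of $\hlp(T;\delta,\eps)$ in $\delta$ noted just above, applied with $\delta\le\delta'$. So the plan is to establish the right-hand inequality, a quantitative form of the tensor-power trick behind Lemma~\ref{L:tensorpower}. First I would fix the presentation $G=\langle S:R\rangle$ and, for a given $\eps_0>0$, an $\eps_0$-representation $\phi:\Free(S)\arr\mcU(\C^d)$ of the smallest possible dimension $d=\hlp(T,\delta,\eps_0)$ satisfying $\norm{\phi(w)-\Id}_f\ge\delta$ for all $w\in T$ (if $d=+\infty$ there is nothing to prove). For an integer $N\ge1$ to be chosen, I would form the ``averaged tensor power''
\begin{equation*}
    \psi \;=\; \bigoplus_{j=1}^{N}\,\bigl(\phi^{\otimes j}\otimes\Id_{d^{\,N-j}}\bigr),
\end{equation*}
a homomorphism $\Free(S)\arr\mcU\bigl(\C^{Nd^N}\bigr)$: each summand is a tensor power of the homomorphism $\phi$, tensored with a trivial representation, and a direct sum of homomorphisms is a homomorphism. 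The reason for padding every block to the common dimension $d^N$ is that then $\ntr\psi(u)=\tfrac1N\sum_{j=1}^N\bigl(\ntr\phi(u)\bigr)^j$ for all $u\in\Free(S)$, and hence $\norm{\psi(u)-\Id}_f^2=\tfrac1N\sum_{j=1}^N\norm{\phi^{\otimes j}(u)-\Id}_f^2$.

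Next I would prove two estimates. For a relator $r\in R$, write $\mu=\ntr\phi(r)$, so that $|\mu|\le1$ and $2-2\Re\mu=\norm{\phi(r)-\Id}_f^2\le\eps_0^2$; an elementary estimate on the eigenphases of $\phi(r)$ then gives $\norm{\phi^{\otimes j}(r)-\Id}_f^2=2-2\Re(\mu^j)\le C_0\,j^2\eps_0^2$ for an absolute constant $C_0$ (valid for $\eps_0\le1$; and $\norm{\psi(r)-\Id}_f\le2$ holds unconditionally), so averaging over $j\le N$ gives $\norm{\psi(r)-\Id}_f\le\sqrt{C_0}\,N\eps_0$ and $\psi$ is an $O(N\eps_0)$-representation. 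For $w\in T$, write $\lambda=\ntr\phi(w)$, so that $|\lambda|\le1$ and $\Re\lambda\le1-\delta^2/2=:c<1$; the key point is that $|1-\lambda|\ge1-c$, so the partial geometric sum is bounded uniformly in $N$, giving
\begin{equation*}
    \Bigl|\,\sum_{j=1}^{N}\lambda^j\,\Bigr|=\Bigl|\,\frac{\lambda(1-\lambda^N)}{1-\lambda}\,\Bigr|\le\frac{2}{1-c}
    \qquad\text{and hence}\qquad
    \norm{\psi(w)-\Id}_f^2=2-\frac2N\,\Re\!\sum_{j=1}^N\lambda^j\ \ge\ 2-\frac{4}{N(1-c)}.
\end{equation*}
Choosing $N=N_0:=\bigl\lceil 4/\bigl((1-c)(2-(\delta')^2)\bigr)\bigr\rceil$, a constant depending only on $\delta$ and $\delta'$ because $\delta'<\sqrt2$, forces the right-hand side to be $\ge(\delta')^2$, so $\norm{\psi(w)-\Id}_f\ge\delta'$ for every $w\in T$.

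Finally I would fix $N_0$ as above and take $k\ge1$ to be a constant large enough that $\sqrt{C_0}\,N_0(\eps/k)\le\eps$ and $k\ge N_0$; applying the construction to an $(\eps/k)$-representation realizing $\hlp(T,\delta,\eps/k)$ then produces an $\eps$-representation of dimension $N_0\,\hlp(T,\delta,\eps/k)^{N_0}$ with $\norm{\psi(w)-\Id}_f\ge\delta'$ on $T$, whence
\begin{equation*}
    \hlp(T,\delta',\eps)\ \le\ N_0\,\hlp(T,\delta,\eps/k)^{N_0}\ \le\ k\,\hlp(T,\delta,\eps/k)^{k},
\end{equation*}
the last step using $k\ge N_0$, $\hlp\ge1$, and monotonicity of $\hlp$ in $\eps$. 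The one place I expect genuine friction is recognizing that a plain tensor power $\phi^{\otimes N}$ does \emph{not} work: $\bigl(\ntr\phi(w)\bigr)^N$ can be arbitrarily close to $1$ even when $\ntr\phi(w)$ is bounded away from $1$ (e.g.\ when $\ntr\phi(w)$ is near a nontrivial root of unity), so no single power amplifies $\delta$. Averaging over $j=1,\dots,N$ via the direct sum is exactly what converts the hypothesis ``$\Re\lambda\le c<1$'' into the uniform partial-sum bound, and thus into honest amplification of $\delta$ toward $\sqrt2$; the remaining ingredients — the eigenphase estimate for the relators and the bookkeeping of the constants $N_0$ and $k$ — are routine.
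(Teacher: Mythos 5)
Your proof is correct, and it takes a genuinely different route through the tensor-power trick than the paper does. Both arguments must overcome the same obstruction, which you identify explicitly: $\ntr\phi(w)$ may be a complex number of modulus near $1$ that is far from $1$ (e.g.\ near a root of unity), so a single tensor power $\phi^{\otimes N}$ need not amplify $\delta$. The paper's fix is to symmetrize first: it replaces $\phi$ by $\phi\oplus\overline{\phi}\oplus(\text{trivial})^{\oplus 2d}$, which forces $\ntr\psi(w)$ to be real and non-negative while keeping it bounded away from $1$ by $\delta^2/8$; after that a plain tensor power works, since $\ntr(\psi(w))^n\le(1-\delta^2/8)^n\to 0$. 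Your fix instead keeps $\phi$ as is and takes a Ces\`aro average of its first $N$ tensor powers (padded to a common dimension so the normalized traces average correctly), controlling $\frac1N\sum_j\lambda^j$ via the partial geometric sum and the lower bound $|1-\lambda|\ge 1-\Re\lambda\ge\delta^2/2$. Both yield a bound of the form $k\,\hlp(T,\delta,\eps/k)^k$ with $k$ depending only on $\delta,\delta'$; the paper's version is marginally shorter because the conjugate-plus-trivial padding reduces everything to a real scalar estimate, while yours avoids complex conjugation altogether and handles the complex trace head-on. Your auxiliary estimates ($\norm{\phi^{\otimes j}(r)-\Id}_f\le j\norm{\phi(r)-\Id}_f$ by telescoping and multiplicativity of $\norm{\cdot}_f$ under tensor products, and the identity $\norm{\psi(u)-\Id}_f^2=\frac1N\sum_j\norm{\phi^{\otimes j}(u)-\Id}_f^2$) all check out, as does the final bookkeeping with $N_0$ and $k$.
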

For instance, the choice of $\delta$ within the range $(0,\sqrt{2})$ does
not affect whether $\hlp$ is polynomial or exponential in $1/\eps$. 
\begin{proof}
    We only need to prove the second inequality.  The proof is similar to Lemma
    \ref{L:tensorpower}, in that we can use the tensor power trick. Suppose
    $\phi$ is an $\eps$-representation of dimension $d$ with 
    \begin{equation*}
        \norm{\phi(w) - \Id}_f \geq \delta
    \end{equation*}
    for all $w \in T$. Let $\psi$ be the direct sum of $\phi$ with its complex
    conjugate, and with $2d$ copies of the trivial representation. Then $\psi$
    is an $\eps/2$-representation of dimension $4d$, such that $\ntr(\psi(w))$
    is real and non-negative for all $w \in \Free(w)$, and $\norm{\psi(w) -
    \Id}_f \geq \delta / 2$ for all $w \in T$. Since $\norm{\psi(w) - \Id}^2_f
    = 2 - 2 \ntr(\psi(w))$, we have that $\ntr(\psi(w)) \leq 1 - \delta^2 / 8$.
    Hence
    \begin{equation*}
        \norm{\psi(w)^{\otimes n} - \Id}_f^2 = 2 - 2 \ntr(\psi(w))^n \geq
            2 - 2 \left(1 - \frac{\delta^2}{8}\right)^n \;.
    \end{equation*}
    By choosing $n$ large enough, we can make the right-hand side larger than $(\delta')^2$.
    Since $\psi^{\otimes n}$ is a $(n\eps/2)$-representation of dimension $(4 d)^n$,
    we conclude that
    \begin{equation*}
        \hlp\left(T, \delta', \frac{n\eps}{2}\right) \leq 4^n \hlp(T,\delta,\eps)^n,
    \end{equation*}
    and the lemma follows. 
\end{proof}
    
Note that we have not restricted Definition \ref{D:hlp1} to hyperlinear groups.
In terms of hyperlinear profile, a finitely-presented group $G = \langle S : R
\rangle$ is hyperlinear if and only if $\hlp(T,\delta,\eps) < +\infty$ for all
finite subsets $T \subset \Free(S)$ not containing any elements with trivial
image in $G$, and real numbers $\delta \in (0,\sqrt{2})$, $\eps > 0$.
Similarly, $G$ is residually finite if and only if $\hlp(T,\delta,\eps)$ is
bounded as $\eps \arr 0$ for all finite subsets $T \in \Free(S)$ (again, not
containing any elements with trivial image in $G$) and $\delta \in
(0,\sqrt{2})$. 

Although we can choose any $\delta$ in $(0,\sqrt{2})$ when measuring
$\hlp(T,\delta,\eps)$, it can make sense to take $\delta$ to be greater than or
equal to $\sqrt{2}$. The largest value of $\norm{U - \Id}_f$ when $U$ is a
unitary is $2$, and this is achieved when $U = -\Id$. Thus, $\hlp(w,2,\eps)$
measures the growth rate of $\eps$-representations $\phi$ where $\phi(w) =
-\Id$. The following proposition shows that when $w \in \Free(S)$ represents a
central involution in $G$ (which will be the case in the main example of the
next section, and also when working with linear system games), $\delta = 2$ is a natural choice. 
\begin{prop}\label{P:centralinv}
    Suppose $g \in G$ is a central involution, and $G = \langle S : R \rangle$
    is a presentation such that $g$ has a representative $w \in S$. Let $0 <
    \delta \leq 2$.  Then there is a constant $C \geq 1$ (depending on
    $\delta$) such that
    \begin{equation*}
        \hlp(w,2,C\eps) \leq \hlp(w,\delta,\eps) \leq \hlp(w,2,\eps)
    \end{equation*}
    for all $\eps > 0$. 
\end{prop}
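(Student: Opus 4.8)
The right-hand inequality is immediate: $\hlp$ is non-decreasing in its second argument and $\delta \le 2$, so $\hlp(w,\delta,\eps) \le \hlp(w,2,\eps)$. For the left-hand inequality the plan is to take an optimal $\eps$-representation for the pair $(w,\delta)$ and surgically turn it into a $C\eps$-representation in which $w$ maps exactly to $-\Id$. Concretely, write $d = \hlp(w,\delta,\eps)$ — nothing to prove if $d = +\infty$ — and fix an $\eps$-representation $\phi$ of dimension $d$ with $\norm{\phi(w) - \Id}_f \ge \delta$. If I can build from $\phi$ a $C\eps$-representation $\psi$ of dimension $\le d$ with $\psi(w) = -\Id$, then (since $\norm{-\Id - \Id}_f = 2$) $\hlp(w,2,C\eps) \le d$, as wanted. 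For $\eps$ larger than a threshold depending only on $\delta$ I would instead take $C$ large enough that $C\eps \ge 2$ and use the one-dimensional $C\eps$-representation $w \mapsto -1$, $s \mapsto 1$ (in dimension $1$ every $\norm{\phi(r) - \Id}_f \le 2$, so this is automatically a $C\eps$-representation); so from now on I assume $\eps$ small.

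First I would round $U := \phi(w)$ to a reflection. Since $g$ is an involution, $w^2$ is trivial in $G$; writing it as a product of a bounded number of conjugates of relators and using unitary invariance of $\norm{\cdot}_f$ (exactly as in the proof of Lemma~\ref{L:hlpprop2}) gives $\norm{U^2 - \Id}_f \le C_1\eps$ for a constant $C_1$ depending only on the presentation. Setting $H = \Re U$ and $V = \operatorname{sign}(H)$ (spectral sign), the identities $U - H = \tfrac12(U - U^*)$, $U^2 - \Id = (U - U^*)U$, $H^2 - \Id = \tfrac12\Re(U^2 - \Id)$, and the elementary bound $(1-|\lambda|)^2 \le (1-\lambda^2)^2$ for eigenvalues $\lambda \in [-1,1]$ of $H$, together give $\norm{U - V}_f \le C_1\eps$. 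So $V$ is a genuine self-adjoint unitary with $\norm{V - \Id}_f \ge \delta - C_1\eps$; writing $P$ for the orthogonal projection onto the $(-1)$-eigenspace $E_-$ of $V$, one has $\norm{V - \Id}_f^2 = 4\,\dim E_- / d$, so for small $\eps$ the dimension $r := \dim E_-$ satisfies $\tfrac1{16}\delta^2 d \le r \le d$, and in particular $r \ge 1$.

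Next I would replace $\phi$ by the homomorphism $\phi'$ with $\phi'(w) = V$ and $\phi'(s) = \phi(s)$ for $s \ne w$ (legitimate since $w \in S$); a hybrid estimate over the boundedly many occurrences of $w^{\pm1}$ in each relator, using $\norm{V - U}_f \le C_1\eps$, shows $\phi'$ is a $C_3\eps$-representation. Centrality of $g$ makes each $[w,s]$ trivial in $G$, hence $\norm{[U,\phi(s)]}_f$, and so $\norm{[V,\phi'(s)]}_f$, is $O(\eps)$; in the $E_+ \oplus E_-$ block decomposition this bounds the off-diagonal blocks $\norm{P^\perp\phi'(s)P}_F$ and $\norm{P\phi'(s)P^\perp}_F$ by $O(\sqrt d\,\eps)$. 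Letting $\iota\colon \C^r \hookrightarrow \C^d$ be the isometric inclusion of $E_-$ and $a(s) := \iota^*\phi'(s)\iota$, the matrices $a(s)$ are contractions with $\Id_r - a(s)^*a(s) = (P^\perp\phi'(s)\iota)^*(P^\perp\phi'(s)\iota)$ of Frobenius norm $O(\sqrt d\,\eps)$, so an extended polar decomposition gives unitaries $u(s) \in \mcU(\C^r)$ with $\norm{a(s) - u(s)}_F = \norm{(\Id_r - a(s)^*a(s))^{1/2} - \Id_r}_F = O(\sqrt d\,\eps)$; since $a(w) = -\Id_r$ already, take $u(w) = -\Id_r$, and set $\psi(s) = u(s)$. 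For each relator $r$, telescoping through three error sources — swapping $u(s) \leftrightarrow a(s)$; the failure of multiplicativity $\iota^*\phi'(st)\iota - a(s)a(t) = \iota^*\phi'(s)P^\perp\phi'(t)\iota$, costing $O(\sqrt d\,\eps)$ per step; and $\norm{\iota^*\phi'(r)\iota - \Id_r}_F \le \norm{\phi'(r) - \Id}_F = O(\sqrt d\,\eps)$ — gives $\norm{\psi(r) - \Id_r}_F = O(\sqrt d\,\eps)$. Dividing by $\sqrt r \ge \tfrac14\delta\sqrt d$ yields $\norm{\psi(r) - \Id_r}_f \le C\eps$ with $C = C(\delta)$. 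Thus $\psi$ is a $C\eps$-representation of dimension $r \le d$ with $\psi(w) = -\Id$, finishing the proof after merging $C$ with the threshold constant from the large-$\eps$ case.

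The main obstacle is this error analysis: the compression of $\phi'$ to $E_-$ is only an \emph{approximate} homomorphism into \emph{approximate} unitaries, and each repair (discarding off-diagonal blocks, polar-correcting, re-associating words) costs an error of order $\sqrt d\,\eps$ in the \emph{unnormalized} Frobenius norm, which is a priori too large once $d$ is big. The point — and the reason $\delta = 2$ is the natural target for a central involution — is that both hypotheses on $g$ are used: $w^2 = 1$ puts $U$ near a reflection and $w$ central makes each $\phi(s)$ almost commute with it, while $\norm{\phi(w) - \Id}_f \ge \delta$ forces $E_-$ to have dimension $\Omega_\delta(d)$, so that the normalization $\norm{\cdot}_f = \norm{\cdot}_F/\sqrt r$ absorbs the stray $\sqrt d$. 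I would also check separately the degenerate cases ($r \ge 1$ always, so $\psi$ lives on a nonzero space; and the large-$\eps$ regime, handled by the one-dimensional representation once $C$ is big enough).
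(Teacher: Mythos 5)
Your proof is correct and follows essentially the same strategy as the paper's: round $\phi(w)$ to a self-adjoint unitary, use centrality of $g$ to make the generators approximately commute with it, and compress to the $(-1)$-eigenspace, whose dimension $\Omega(\delta^2 d)$ ensures that the normalized Frobenius norm on the compressed space absorbs the unnormalized $O(\sqrt{d}\,\eps)$ errors. The only organizational difference is that the paper first replaces each $\phi(s)$ by a unitary exactly commuting with the rounded image of $w$ (part (b) of Lemma \ref{L:stability1}), so the compression is exactly multiplicative, whereas you compress directly and then repair unitarity and multiplicativity by a polar correction and a telescoping estimate; both work, and you additionally make explicit the large-$\eps$ and degenerate cases that the paper leaves implicit.
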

The proof of Proposition \ref{P:centralinv} relies on some simple stability properties. 
\begin{lemma}\label{L:stability1} \ 
    \begin{enumerate}[(a)]
        \item Suppose $X$ is a matrix with $\norm{X}_{op} \leq 1$. Then there
            is a unitary matrix $U$ with $\norm{X - U}_f \leq \norm{X^* X -
            \Id}_f$, and in fact $U$ can be any unitary in a polar decomposition $X = UP$ of $X$.  

        \item If $X, Y$ are unitary matrices and $X^2=\Id$, then there is a
            unitary matrix $Z$ such that $ZX = XZ$ and 
            \begin{equation*}
                \norm{Y-Z}_f \leq \norm{XY - YX}_f.
            \end{equation*}

        \item If $X$ is any normal matrix, there is a self-adjoint matrix $Z$
            with $Z^2 = \Id$ such that
            \begin{equation*}
                \norm{X - Z}_f \leq 2 \norm{X^2 - \Id}_f.
            \end{equation*}
    \end{enumerate}
\end{lemma}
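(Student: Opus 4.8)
The plan is to establish the three parts in the order (a), (c), (b), treating (a) and (c) as self-contained spectral estimates and then bootstrapping (a) to prove (b). For part (a), I would fix a polar decomposition $X = UP$ with $P = (X^*X)^{1/2}$ positive semidefinite and $U$ unitary (if $X$ is singular, extend the partial isometry to a unitary $U$ by choosing it arbitrarily from $\ker P$ to $\ker X^*$). Unitary invariance of the Frobenius norm gives $\norm{X - U}_F = \norm{U(P - \Id)}_F = \norm{P - \Id}_F$, so it suffices to show $\norm{P - \Id}_F \leq \norm{P^2 - \Id}_F = \norm{X^*X - \Id}_F$. Diagonalizing $P$, each eigenvalue $\sigma$ (a singular value of $X$) lies in $[0,1]$ because $\norm{X}_{op} \leq 1$, and $(1-\sigma)^2 \leq (1-\sigma)^2(1+\sigma)^2 = (1-\sigma^2)^2$; summing over eigenvalues and dividing by $\sqrt{d}$ proves (a).

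For part (c), I would diagonalize the normal matrix as $X = V\Lambda V^*$ with $\Lambda = \mathrm{diag}(\lambda_1,\dots,\lambda_d)$ and $V$ unitary, and set $Z = V D V^*$ where $D = \mathrm{diag}(d_1,\dots,d_d)$ with $d_i = 1$ if $\Re\lambda_i \geq 0$ and $d_i = -1$ otherwise; then $Z = Z^*$ and $Z^2 = \Id$. Since $\norm{X-Z}_F^2 = \sum_i|\lambda_i - d_i|^2$ and $\norm{X^2 - \Id}_F^2 = \sum_i|\lambda_i^2 - 1|^2$, it remains to verify the scalar inequality $|\lambda - d|^2 \leq 4|\lambda^2 - 1|^2$, i.e. $|\lambda - d| \leq 2|\lambda - 1|\,|\lambda + 1|$, for each pair $(\lambda,d)=(\lambda_i,d_i)$. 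Replacing $\lambda$ by $-\lambda$ if needed, one may assume $\Re\lambda \geq 0$, so $d=1$; the case $\lambda = 1$ is trivial, and otherwise dividing through by $|\lambda - 1|$ reduces the claim to $|\lambda+1| \geq \tfrac12$, which holds because $\Re(\lambda+1)\geq 1$. Summing and normalizing gives $\norm{X - Z}_f \leq 2\norm{X^2 - \Id}_f$. (The same computation actually yields the constant $1$, but $2$ is all that is needed downstream.)

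For part (b), observe that a unitary $X$ with $X^2 = \Id$ satisfies $X^* = X^{-1} = X$, so it is a self-adjoint involution with spectral projections $\Pi_\pm = \tfrac12(\Id \pm X)$. The idea is to first average $Y$ into the commutant of $X$ by setting $Z' = \Pi_+ Y \Pi_+ + \Pi_- Y \Pi_-$, and then restore unitarity using part (a). A short computation gives $XY - YX = 2(\Pi_+ Y \Pi_- - \Pi_- Y \Pi_+)$, and since the two summands here are Frobenius-orthogonal, $\norm{XY - YX}_F^2 = 4(\norm{\Pi_+ Y \Pi_-}_F^2 + \norm{\Pi_- Y \Pi_+}_F^2)$; similarly $Y - Z' = \Pi_+ Y \Pi_- + \Pi_- Y \Pi_+$, so $\norm{Y - Z'}_f = \tfrac12 \norm{XY - YX}_f$. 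Because $Z'$ is block-diagonal with respect to $X$ with blocks that are compressions of $Y$, one has $\norm{Z'}_{op} \leq 1$; performing the polar decomposition within each of the two blocks produces a unitary $Z$ commuting with $X$ that, by part (a) applied blockwise, satisfies $\norm{Z' - Z}_f \leq \norm{(Z')^*Z' - \Id}_f$. Finally, expanding $(Z')^*Z' = \Id - (\Pi_- Y \Pi_+)^*(\Pi_- Y \Pi_+) - (\Pi_+ Y \Pi_-)^*(\Pi_+ Y \Pi_-)$ (using $Y^*Y = \Id$) and applying $\norm{A^*A}_F \leq \norm{A}_{op}\norm{A}_F \leq \norm{A}_F$ with $A = \Pi_\mp Y \Pi_\pm$ yields $\norm{(Z')^*Z' - \Id}_f \leq \tfrac12\norm{XY - YX}_f$. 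The triangle inequality then gives $\norm{Y - Z}_f \leq \norm{XY - YX}_f$, as required.

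The only step that needs genuine care is part (b). One has to check that the polar decomposition can be carried out separately on the two eigenspaces of $X$ so that the resulting unitary $Z$ actually lies in the commutant of $X$, and---more importantly---that the unitarity defect $(Z')^*Z' - \Id$ is bounded by $\norm{XY - YX}$ itself rather than by its square root; the estimate $\norm{Z' - Z}_f \leq \norm{(Z')^*Z' - \Id}_f$ coming from part (a) is only useful here because the operator-norm factor in $\norm{A^*A}_F \leq \norm{A}_{op}\norm{A}_F$ lets us absorb one power of the small quantity. Parts (a) and (c) are routine: (a) is a one-line comparison of singular values, and (c) reduces to an elementary inequality in a single complex variable.
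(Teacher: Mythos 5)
Your proof is correct and follows essentially the same route as the paper: your $Z' = \Pi_+Y\Pi_+ + \Pi_-Y\Pi_-$ is exactly the paper's $Z_0 = \tfrac12(Y+XYX)$, and parts (a) and (b) proceed identically (the paper outsources (c) to \cite[Lemma 3.6]{Sl17}, whereas your direct spectral argument supplies it, with the better constant $1$). One small point in (b): the triangle inequality together with $\norm{A^*A}_F \le \norm{A}_F$ only gives $\norm{(Z')^*Z'-\Id}_F \le \norm{\Pi_-Y\Pi_+}_F + \norm{\Pi_+Y\Pi_-}_F$, which by Cauchy--Schwarz is bounded by $\tfrac{1}{\sqrt2}\norm{XY-YX}_F$ rather than the claimed $\tfrac12\norm{XY-YX}_F$; to recover the constant $\tfrac12$ (and hence the stated constant $1$ in the lemma) you should note that $(\Pi_-Y\Pi_+)^*(\Pi_-Y\Pi_+)$ and $(\Pi_+Y\Pi_-)^*(\Pi_+Y\Pi_-)$ are supported on the orthogonal ranges of $\Pi_+$ and $\Pi_-$ respectively, so their squared Frobenius norms add.
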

We expect that Lemma \ref{L:stability1} is well-known to experts. We give the proof
of parts (a) and (b) for the convenience of the reader. A proof of part (c) can be 
found in \cite[Lemma 3.6]{Sl17}.
\begin{proof}[Proof of Lemma \ref{L:stability1}]
    For part (a), let $X = U P$ be a polar decomposition of $X$, where $U$ is unitary.
    Then
    \begin{equation*}
        \norm{X-U}_f = \norm{P-\Id}_f = \big\|\sqrt{X^* X} - \Id\big\|_f \leq \norm{X^* X -\Id}_f,
    \end{equation*}
    where the last inequality comes from the fact that $1-z \leq 1-z^2$ for all $z \in [0,1]$. 

    For part (b), let $Z_0 = \frac{1}{2}(Y + X Y X)$. Then $Z_0$ and $X$
    commute, and $\norm{Y-Z_0}_f$ and $\norm{Z_0^* Z_0 - \Id}_f$ are both at
    most $\frac{1}{2}\norm{XY - YX}_f$. So part (b) follows from applying part
    (a) to $Z_0$. 
\end{proof}

We also need that small perturbations of $\eps$-representations 
remain $O(\eps)$-representations.
        
\begin{lemma}[\cite{Sl17}, Lemma 2.3]\label{L:hom-stable}
    Let $\phi$ be a $d$-dimensional $\eps$-representation of the
    finitely-presented group $G = \langle S : R \rangle$.
    If $\psi : \Free(S) \arr \mcU(\C^d)$ is a homomorphism such that 
    \begin{equation*}
        \norm{\psi(s) - \phi(s)}_f \leq \eps'
    \end{equation*}
    for all $s \in S$, then $\psi$ is an $\left(\eps +
    O(\eps')\right)$-representation, where the constant depends on the length
    of the relations in $R$. 
\end{lemma}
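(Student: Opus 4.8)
The plan is to prove the slightly more general word estimate that for \emph{any} $w \in \Free(S)$, if $\psi, \phi : \Free(S) \arr \mcU(\C^d)$ are homomorphisms with $\norm{\psi(s) - \phi(s)}_f \leq \eps'$ for all $s \in S$, then
\[
    \norm{\psi(w) - \phi(w)}_f \leq \ell(w)\,\eps',
\]
where $\ell(w)$ denotes the word length of $w$. Granting this, the lemma follows at once from the triangle inequality: for each relation $r \in R$,
\[
    \norm{\psi(r) - \Id}_f \leq \norm{\psi(r) - \phi(r)}_f + \norm{\phi(r) - \Id}_f \leq \ell(r)\,\eps' + \eps,
\]
and since $R$ is finite one may take the implied constant to be $\max_{r \in R} \ell(r)$, which depends only on the lengths of the defining relations, as asserted.

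To establish the word estimate, I would first reduce the hypothesis from generators to all letters. The two facts I rely on are that $\norm{\cdot}_f$ is invariant under multiplication by a unitary on either side (because $\norm{\cdot}_F$ is unitarily invariant) and under taking adjoints (because $\norm{A^*}_F = \norm{A}_F$). The adjoint-invariance handles inverse letters: for each generator $s$,
\[
    \norm{\psi(s^{-1}) - \phi(s^{-1})}_f = \norm{\psi(s)^* - \phi(s)^*}_f = \norm{\psi(s) - \phi(s)}_f \leq \eps',
\]
so in fact $\norm{\psi(t) - \phi(t)}_f \leq \eps'$ for every letter $t \in S \cup S^{-1}$. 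Writing $w = t_1 t_2 \cdots t_\ell$ with each $t_i \in S \cup S^{-1}$, I would then telescope, inserting and cancelling intermediate products to obtain
\[
    \psi(w) - \phi(w) = \sum_{i=1}^{\ell} \psi(t_1 \cdots t_{i-1})\bigl(\psi(t_i) - \phi(t_i)\bigr)\phi(t_{i+1} \cdots t_\ell).
\]
Here the factors flanking each difference are unitary, so by the unitary invariance of $\norm{\cdot}_f$ the $i$-th summand has norm exactly $\norm{\psi(t_i) - \phi(t_i)}_f \leq \eps'$; summing the $\ell$ terms gives $\norm{\psi(w) - \phi(w)}_f \leq \ell\,\eps'$.

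There is no serious obstacle in this argument, which is a standard telescoping estimate; the bulk of the work is bookkeeping. The only points deserving mild care are the treatment of inverse letters, handled above by adjoint-invariance of the Frobenius norm, and the observation that it is the word \emph{length} of the relations, not merely their number, that governs the implied constant. I would flag that the same argument works verbatim with $\norm{\cdot}_f$ replaced by any seminorm $\norm{\cdot}_\rho$ that is invariant under left and right multiplication by unitaries, should a version for general states be wanted later.
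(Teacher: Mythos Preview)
Your proof is correct and is the standard telescoping argument one would expect; the paper itself does not give a proof of this lemma but simply cites it from \cite{Sl17}, so there is nothing to compare against. One small caveat on your closing aside: the seminorm $\norm{\cdot}_\rho$ defined in the paper is invariant under \emph{left} multiplication by unitaries but not in general under right multiplication (since $\norm{AU}_\rho^2 = \tr(U^* A^* A U \rho)$), so the remark as written would need $U$ to commute with $\rho$; this does not affect the proof of the lemma.
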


\begin{proof}[Proof of Proposition \ref{P:centralinv}]
    We only need to prove the first inequality. Suppose $\phi$ is a $d$-dimensional
    $\eps$-representation of $G$ with $\norm{\phi(w) - \Id}_f \geq \delta$. 
    Then by Lemmas \ref{L:stability1} and \ref{L:hom-stable}, there is an
    $O(\eps)$-representation $\psi$ such that $\psi(w)$ is a central involution in
    the group generated by $\psi(S)$, and $\norm{\phi(s) - \psi(s)} \leq
    O(\eps)$ for all $s \in S$. If $\eps$ is small enough such that
    $\norm{\phi(w)-\psi(w)} \leq \delta/2$, we then have $\norm{\psi(w) -
    \Id}_f \geq \delta / 2$.  This inequality implies that the dimension of the
    $(-1)$-eigenspace of $\psi(w)$ is bounded below by $\Omega(d \delta^2)$. If $P$
    is the projection on the $(-1)$-eigenspace of $\psi(w)$, then it is not
    hard to see that $P \psi P$ is an $O(\eps/\delta)$-representation of $G$
    (see for instance the proof of Lemma 3.9 of \cite{Sl17}). Since $\delta$ is
    fixed, there is a constant $C$ such that for every $d$-dimensional
    $\eps$-representation, there is a $C\eps$-representation $\psi$ of
    dimension at most $d$ with $\psi(w) = -\Id$. 
\end{proof}

\section{Bounds on hyperlinear profile for a specific group}\label{S:bounds}

The question of whether there is a non-hyperlinear group is notoriously
difficult, and it seems reasonable to look for related questions which might be
more approachable. In this context, it seems  natural to seek examples
of groups for which the hyperlinear profile grows particularly fast. We note
that very little seems to be known about this question.  In the case of the sofic
profile, any non-residually-finite group has a sofic profile at least as large
as $1 / \eps$, essentially because the smallest positive value of the normalized Hamming metric on
the symmetric group $S_n$ is $1/n$ \cite[Fact 3.8]{Co13}.
In contrast, for the hyperlinear profile, it is not clear that there is even a
``default'' lower bound of this form.  The only explicit lower bound we are
aware of prior to our work is the following example due to Tobias Fritz
\cite{Fr13}:
\begin{example}
    Let $G$ be the Baumslag-Solitar group $\BS(2,3)$, so 
    \begin{equation*}
        G = \langle u,v : v^{-1} u^2 v = u^3 \rangle.
    \end{equation*}
    It is shown in \cite{Fr13} that there is a constant $C>0$ such that if
    $\phi : \Free(u,v) \arr \mcU(\C^d)$ is a homomorphism with
    \begin{equation*}
        \norm{\phi(v)^{-1} \phi(u)^2 \phi(v) - \phi(u)^3}_{op} \leq \eps,
    \end{equation*}
    for some $\eps >0$, then
    \begin{equation*}
        \norm{\phi(w) - \Id}_{op} \leq e^{Cd^2} \eps,
    \end{equation*}
    where $w = uv^{-1} u v u^{-1} v^{-1} u^{-1} v$. Note that $w \neq e$ in
    $G$. If $\phi$ is an $\eps$-representation of $G$ then
    \begin{equation*}
        \norm{\phi(v)^{-1} \phi(u)^2 \phi(v) - \phi(u)^3}_{op} \leq \sqrt{d} \norm{\phi(v)^{-1} \phi(u)^2 \phi(v) - \phi(u)^3}_{f}
            \leq \sqrt{d} \eps.
    \end{equation*}
    Consequently
    \begin{equation*}
        \norm{\phi(w)-\Id}_f \leq \norm{\phi(w)-\Id}_{op} \leq e^{Cd^2} \sqrt{d} \eps,
    \end{equation*}     
    from which by rearranging terms we get the lower bound
    \begin{equation*}
        \hlp(w,\delta,\eps) = \Omega\big(\sqrt{\ln(\delta/\eps)}\big)
    \end{equation*}
    for any $\delta > 0$. 
\end{example}

The main result of this section, and the key result of this paper, is that
there is a group with a hyperlinear profile somewhere between $1/\eps^{2/3}$
and $1/\eps$, up to constants.
 
\begin{prop}\label{P:bounds}
    Define
    \begin{align*} 
        K = \langle a,b,c,x,y\ :\ & x y x^{-1} = y^2, xcx^{-1} = c,\\ & yay^{-1} = b, y
            b y^{-1} = a, \\ & c=ab, a^2 = b^2 = c^2 = e \rangle.
    \end{align*}
    For any $0 < \delta \leq 2$ there exists $C,C'>0$ such that 
    \begin{equation}\label{eq:Kbounds}
         \frac{C}{{\eps}^{\frac{2}{3}}} \leq \hlp(c,\delta,\eps) \leq \frac{C'}{\eps}
    \end{equation}
    for all $\eps > 0$.
\end{prop}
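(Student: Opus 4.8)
The plan is to establish the two inequalities of \eqref{eq:Kbounds} by entirely different methods. For the upper bound $\hlp(c,\delta,\eps) \leq C'/\eps$, the idea is to construct, for each $\eps > 0$, an explicit $\eps$-representation $\phi$ of $K$ of dimension $O(1/\eps)$ with $\phi(c) = -\Id$ (so that $\norm{\phi(c) - \Id}_f = 2 \geq \delta$; by Proposition \ref{P:centralinv} it suffices to work with $\delta = 2$). The relations $a^2 = b^2 = c^2 = e$, $c = ab$, $yay^{-1}=b$, $yby^{-1}=a$, and $xcx^{-1}=c$ are easy to satisfy exactly: take $\phi(a), \phi(b)$ to be anticommuting signed permutation-type involutions (a Pauli pair) tensored suitably, $\phi(c) = \phi(a)\phi(b) = -\Id$ up to a phase, $\phi(y)$ the involution swapping $\phi(a) \leftrightarrow \phi(b)$, and $\phi(x)$ chosen to commute with $\phi(c)$. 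The only relation that cannot be satisfied exactly in finite dimensions is $xyx^{-1} = y^2$ — this is the Baumslag–Solitar-type relation $BS(1,2)$ forcing $\phi(y)$ to be conjugate to its square. The standard trick is to realize $\phi(y)$ as (a block-diagonal sum over $k = 0, \dots, N-1$ of) $e^{2\pi i \theta/2^k}$-type rotations or clock matrices on a cyclic group $\Z/2^N\Z$, and $\phi(x)$ as the shift $k \mapsto k+1$, which conjugates the $k$-th block to the $(k+1)$-st; the relation $xyx^{-1}=y^2$ then holds exactly except at the "boundary" block $k = N-1$, where the error in normalized Frobenius norm is $O(1/N)$ (only one of $N$ blocks is wrong). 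Choosing $N = \Theta(1/\eps)$ and $d = \Theta(N)$ gives the claimed $\eps$-representation; one then checks via Lemma \ref{L:hom-stable} that the combined finite-dimensional representation of the other generators, glued to this $\phi(y), \phi(x)$, is still an $O(\eps)$-representation.

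For the lower bound $\hlp(c,\delta,\eps) \geq C/\eps^{2/3}$, the plan is to show that any $\eps$-representation $\phi$ of $K$ with $\norm{\phi(c) - \Id}_f \geq \delta$ must have dimension $d = \Omega(1/\eps^{2/3})$. The key structural feature is that $K$ is built so that $c$ is central, $c$ is an involution, and $c$ lies "deep" relative to the $BS$-type relation $xyx^{-1}=y^2$: the relations $yay^{-1} = b$, $yby^{-1}=a$ make $y$ conjugate the pair $(a,b)$ by a swap, hence $y^2$ centralizes $a, b$ individually while $y$ swaps them; combined with $xyx^{-1} = y^2$ and $xcx^{-1}=c$, conjugation by $x$ "halves" something each time. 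In an $\eps$-representation these exact facts become approximate: using the stability lemmas (Lemma \ref{L:stability1}) one replaces $\phi$ by a nearby $O(\eps)$-representation in which $\phi(c), \phi(a), \phi(b)$ are exact anticommuting involutions and $\phi(c) = \phi(a)\phi(b)$; then $\phi(y)$ approximately swaps $\phi(a) \leftrightarrow \phi(b)$, so $\phi(y)$ approximately anticommutes with, say, $\phi(a)\phi(b)^{-1}$-type operators, pinning down the spectrum of $\phi(y)$ modulo $O(\eps)$ on a subspace of dimension $\Omega(d\delta^2)$. Feeding this into the relation $\phi(x)\phi(y)\phi(x)^{-1} \approx \phi(y)^2$ one gets a chain of conjugations that would, if iterated too many times relative to $d$, force a contradiction — concretely, $\phi(y)$ has an eigenvalue $\lambda$ with $\lambda = -1$ forced on a large subspace, and conjugation by $\phi(x)^j$ moves this to eigenvalue $\lambda^{2^j}$; for the representation to be consistent up to error $\eps$ over $j = 1, \dots, m$ steps one needs $m \lesssim \log(1/\eps)$ on one hand, but a dimension-counting / eigenvalue-spacing argument on the finite-dimensional space $\C^d$ forces $m \gtrsim$ (something growing with $d$), and balancing gives $d = \Omega(\eps^{-2/3})$. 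I expect the precise exponent $2/3$ to emerge from optimizing a two-parameter trade-off: how much the relation $xyx^{-1}=y^2$ can be satisfied (error accumulates linearly in the number of blocks, $\sim \eps \cdot (\text{blocks})$) against how many distinct eigenvalue "scales" fit in dimension $d$ (at most $\sim \log d$, or polynomially many if we allow a coarser spacing).

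The main obstacle is the lower bound, specifically making the heuristic "conjugation by $x$ halves the scale, so after $m$ steps the error blows up by $2^m$, contradicting finite dimension $d$" into a clean quantitative statement with the correct exponent. The subtlety is that the error in an $\eps$-representation is measured in normalized Frobenius norm, not operator norm, so a bad relation on a $1/d$-fraction of the space costs only $O(1/\sqrt{d})$ — one must track which subspace carries the obstruction and argue it cannot be made small. I would handle this by isolating the $(-1)$-eigenspace of $\phi(c)$ (dimension $\Omega(d)$ by the argument in the proof of Proposition \ref{P:centralinv}), restricting all operators there, and running the conjugation-doubling argument on that subspace where $\phi(c) = -\Id$ exactly, converting Frobenius-norm bounds into a genuine combinatorial constraint on how the spectrum of $\phi(y)$ (restricted further) can be approximately permuted by the shift $\phi(x)$; the dimension bound $d = \Omega(\eps^{-2/3})$ should then follow from a counting argument on eigenvalue orbits under the squaring map, presumably reusing techniques from \cite{Sl17} on approximate representations of the associated linear-system group.
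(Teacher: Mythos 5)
Your overall architecture matches the paper's: reduce to $\delta=2$ via Proposition \ref{P:centralinv}, normalize the involutions by stability, and play the spectrum of $\phi(y)$ against the relation $xyx^{-1}=y^2$. But both halves have a concrete gap. For the upper bound, your error accounting is off by a square root: if the relation $xyx^{-1}=y^2$ fails with operator norm $\Theta(1)$ on a single block occupying a $1/N$ fraction of the space, the \emph{normalized Frobenius} norm of the defect is $\Theta(1/\sqrt{N})$, not $O(1/N)$, so a ``shift with one bad boundary block'' of dimension $\Theta(N)$ is only a $\Theta(1/\sqrt{N})$-representation and yields $\hlp(c,2,\eps)\le O(1/\eps^{2})$, not the claimed $O(1/\eps)$. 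To get $O(1/\eps)$ one needs the defect to be uniformly small in \emph{operator} norm; the paper achieves this by taking the eigenvalues of $Y^2$ to be the uniform grid $\{j2^{-(\ell-1)}\}$ (which is invariant under doubling up to relabeling) and exhibiting a unitary $X$ that matches each eigenvector of $Y$ to an eigenvector of $Y^2$ whose eigenvalue differs by at most $2^{-\ell}=1/d$, so that $\|XYX^\dagger-Y^2\|_{op}\le 1/d$ with no bad block at all.

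For the lower bound you correctly flag the Frobenius-versus-operator-norm issue, but you never identify the mechanism that produces the exponent $2/3$, and the eigenvalue argument you sketch runs in the wrong direction: iterating ``conjugation by $x$ sends $\lambda$ to $\lambda^{2^j}$'' survives only $O(\log)$ steps, and squaring \emph{collapses} eigenvalues rather than multiplying their number. The paper's argument runs the recursion through square roots. Lemma \ref{L:stability2} uses $yay^{-1}=b$, $yby^{-1}=a$, $ab=c=-\Id$ to put $\psi(y)$ into the anti-diagonal block form $\left(\begin{smallmatrix}0&\Id\\U&0\end{smallmatrix}\right)$, which forces \emph{both} square roots of every eigenvalue of $U$ to be eigenvalues of $Y$; feeding these through $\|XYX^\dagger-Y^2\|_{op}\le\eps_1:=C_1\sqrt{d}\,\eps$ branches one eigenvalue into $2^\ell$ eigenvalues that remain pairwise distinct as long as $2^\ell\lesssim 1/\eps_1$. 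The count $d\ge 2^\ell\gtrsim 1/\eps_1=1/(\sqrt{d}\,\eps)$ gives $d^{3/2}\gtrsim 1/\eps$, i.e.\ $d\gtrsim\eps^{-2/3}$: the $2/3$ is exactly the $\sqrt{d}$ paid in converting $\|\cdot\|_f$ to $\|\cdot\|_{op}$, not the outcome of the log-versus-polynomial balance you describe. Without the block structure guaranteeing that both square roots appear in the spectrum, the counting does not get off the ground.
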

For the remainder of this section, we let $K$ denote the finitely-presented
group defined in Proposition \ref{P:bounds}. The group is similar to the group defined
in \cite[Section 5]{Sl17}. In particular, $K$ is sofic, and hence hyperlinear.
The element $c$ is a central involution which is trivial in all
finite-dimensional representations, so $K$ is non-residually finite.  Because
we have chosen a presentation for $K$ in which $c$ is a generator, Proposition
\ref{P:centralinv} implies that we only need to prove Proposition
\ref{P:bounds} when $\delta = 2$. In fact, it is not necessary to make $c$ a
generator---we could replace $c$ with $ab$, the relation $c=ab$ with $[a,b]=e$,
and remove $c$ from the presentation---but we keep the slightly redundant
presentation for simplicity. 

The proof of Proposition \ref{P:bounds} is split into several steps. 
For both the upper and lower bound, it is helpful to look at the subgroup 
\begin{equation*}
    K_0 = \langle y,a,b,c : yay^{-1} = b, yby^{-1} = a, c=ab, a^2 = b^2 = c^2 = e\rangle
\end{equation*}
of $K$. Abstractly, $K_0$ is the semidirect product $\Z \ltimes \Z_2 \times
\Z_2$ of $\Z_2 \times \Z_2$ by the automorphism switching the order of the
factors.  Given $U \in \mcU(\C^d)$, we can define a $2d$-dimensional
representation $\phi$ of $K_0$ by
\begin{equation}\label{E:K0rep}
    \phi(a) = \begin{pmatrix} \Id & 0 \\ 0 & -\Id \end{pmatrix}, 
    \phi(b) = \begin{pmatrix} -\Id & 0 \\ 0 & \Id \end{pmatrix},
    \text{ and } \phi(y) = \begin{pmatrix} 0 & \Id \\ U & 0 \end{pmatrix}.
\end{equation}
The first lemma needed for the lower bound in Proposition \ref{P:bounds} is
that every $d$-dimensional $\eps$-representation of $K$ can be turned into a $2d$-dimensional 
$O(\eps)$-representation which restricts to a representation of $K_0$ that satisfies~\eqref{E:K0rep}.

\begin{lemma}\label{L:stability2}
    Let $\phi$ be a $d$-dimensional $\eps$-representation of $K$ such that
    $\phi(c) = -\Id$. Then there is a $2d$-dimensional
    $O(\eps)$-representation $\psi$ of $K_0$ such that
    \begin{equation*}
        \psi(a) = \begin{pmatrix} \Id & 0 \\ 0 & -\Id \end{pmatrix},\quad 
        \psi(b) = \begin{pmatrix} -\Id & 0 \\ 0 & \Id \end{pmatrix},\quad
        \text{ and }\quad \psi(y) = \begin{pmatrix} 0 & \Id \\ U & 0 \end{pmatrix},
    \end{equation*}
    where the blocks are of size $d \times d$, and $U \in \mcU(\C^d)$. 
\end{lemma}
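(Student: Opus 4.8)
The plan is to start from a $d$-dimensional $\eps$-representation $\phi$ of $K$ with $\phi(c) = -\Id$, and massage the matrices $\phi(a)$, $\phi(b)$, $\phi(y)$ one at a time, using the stability lemmas (Lemma~\ref{L:stability1} and Lemma~\ref{L:hom-stable}), until they are in the desired block form, paying only an $O(\eps)$ cost in the $\|\cdot\|_f$-error at each step. The key structural observation is that the relation $c = ab$ together with $\phi(c) = -\Id$ forces $\phi(a)\phi(b) \approx -\Id$, so $\phi(b) \approx -\phi(a)^{-1}$, and the relations $a^2 = b^2 = e$ say $\phi(a),\phi(b)$ are approximately self-adjoint involutions. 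So morally $\phi(a)$ is an approximate reflection and $\phi(b)$ is its negative; the $2d$-dimensional padding will be used to round this off exactly.

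First I would apply Lemma~\ref{L:stability1}(c) to the matrix $\phi(a)$ (which satisfies $\|\phi(a)^2 - \Id\|_f \le \eps$ and, after a small correction via part~(a) or via polar decomposition, may be taken normal) to obtain a genuine self-adjoint involution $A_0$ with $\|\phi(a) - A_0\|_f = O(\eps)$. Since $A_0^2 = \Id$ is an exact self-adjoint involution on $\C^d$, after conjugating by a fixed unitary it is $\operatorname{diag}(\Id_{d_+}, -\Id_{d_-})$ with $d_+ + d_- = d$; padding with extra trivial coordinates if necessary, I can arrange $d_+ = d_- $ and hence, on a space of dimension $2d$ (the original $d$ plus $d$ padding coordinates split appropriately), write $\psi(a) = \left(\begin{smallmatrix} \Id & 0 \\ 0 & -\Id\end{smallmatrix}\right)$ exactly, with $d\times d$ blocks. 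The padded version of $\phi$ is still an $O(\eps)$-representation by Lemma~\ref{L:hom-stable}. Next, since the relations $b^2 = e$, $c^2 = e$, $c = ab$ give $\phi(b) = -\phi(a)^{-1} + O(\eps)$ in $\|\cdot\|_f$, and $\psi(a)$ is now an exact involution with $\psi(a)^{-1} = \psi(a)$, I set $\psi(b) := -\psi(a) = \left(\begin{smallmatrix} -\Id & 0 \\ 0 & \Id\end{smallmatrix}\right)$; this differs from the (padded) $\phi(b)$ by $O(\eps)$ in $\|\cdot\|_f$, so again Lemma~\ref{L:hom-stable} keeps us an $O(\eps)$-representation, and now $\psi(c) := \psi(a)\psi(b) = -\Id$ exactly, consistent with $c^2 = e$ and $c = ab$.

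It remains to put $\psi(y)$ into the anti-diagonal form $\left(\begin{smallmatrix} 0 & \Id \\ U & 0\end{smallmatrix}\right)$. The relations $yay^{-1} = b$ and $yby^{-1} = a$ say $\psi(y)$ approximately conjugates $\psi(a)$ to $\psi(b) = -\psi(a)$, i.e. $\psi(y)$ approximately anticommutes with $\psi(a)$; in block form with respect to the $\pm 1$ eigenspaces of $\psi(a)$, an operator that anticommutes with $\operatorname{diag}(\Id,-\Id)$ is exactly anti-block-diagonal, so $\psi(y)$ is $O(\eps)$-close to its anti-block-diagonal part $\left(\begin{smallmatrix} 0 & V \\ W & 0\end{smallmatrix}\right)$. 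I would first use Lemma~\ref{L:stability1}(b) (with $X = \psi(a)$, the exact involution) to replace $\psi(y)$ by a unitary commuting-structure-compatible matrix at cost $O(\|\psi(a)\psi(y) - \psi(y)\psi(a)\|_f) = O(\eps)$ — more precisely, to make the anti-block-diagonal approximation exact while staying unitary. For a unitary of the form $\left(\begin{smallmatrix} 0 & V \\ W & 0\end{smallmatrix}\right)$, unitarity forces $V$ and $W$ to be unitary; then conjugating by the block-diagonal unitary $\operatorname{diag}(V, \Id)$ (which commutes with $\psi(a)$ and $\psi(b)$, so does not disturb their normal forms) turns it into $\left(\begin{smallmatrix} 0 & \Id \\ WV & 0\end{smallmatrix}\right)$, giving the desired $U := WV \in \mcU(\C^d)$. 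One final appeal to Lemma~\ref{L:hom-stable} confirms the resulting homomorphism $\psi$ of $\Free(y,a,b,c)$ is still an $O(\eps)$-representation of $K_0$, and by construction it satisfies the stated block forms.

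The main obstacle, and the place where I would be most careful, is the bookkeeping around dimensions and the order of operations: each stability lemma is stated for $\|\cdot\|_f$ on a fixed-dimensional space, so I need to fix the $2d$-dimensional ambient space and the splitting of its coordinates first (choosing the padding so that the $\pm1$ eigenspaces of $\psi(a)$ are balanced $d$-and-$d$), and then verify that the later corrections to $\psi(b)$ and $\psi(y)$ are genuinely $O(\eps)$ in the \emph{normalized} Frobenius norm on this $2d$-dimensional space rather than only on a subspace — padding with trivial coordinates can only shrink $\|\cdot\|_f$, so this is fine, but it needs to be said. A secondary subtlety is that Lemma~\ref{L:stability1}(c) requires $\phi(a)$ normal; if $\phi(a)$ is merely close to unitary one first replaces it by the unitary from its polar decomposition via part~(a) at cost $O(\eps)$, which is unitary hence normal. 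Everything else is a routine chain of triangle inequalities and applications of Lemma~\ref{L:hom-stable}.
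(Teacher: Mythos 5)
Your overall strategy matches the paper's: stabilize $\phi(a)$ to an exact self-adjoint involution via Lemma~\ref{L:stability1}(c), set $\psi(b)=-\psi(a)$, use the relations $yay^{-1}=b$, $yby^{-1}=a$ to argue $\psi(y)$ is nearly anti-block-diagonal, extract unitary off-diagonal blocks via Lemma~\ref{L:stability1}(a), and conjugate by $\operatorname{diag}(V,\Id)$ to normalize one block. But there is a genuine gap at the step where you double the dimension. You balance the $\pm1$ eigenspaces of $\psi(a)$ by ``padding with extra trivial coordinates,'' and assert the padded map is still an $O(\eps)$-representation. It is not: on the padding you must specify how $y$ acts, and whatever eigenvalue pattern you assign to $\psi(a)$ on the padding, the relation $yay^{-1}=b=-a$ forces $\psi(y)$ to (approximately) swap the $+1$ and $-1$ eigenspaces \emph{of the padding as well}. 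If $\psi(y)$ is the identity on the padding while $\psi(a)=\pm\Id$ there, then $\psi(y)\psi(a)\psi(y)^{-1}-\psi(b)=2\psi(a)$ on that block, contributing a constant (not $O(\eps)$) to $\norm{\cdot}_f$ over the $2d$-dimensional space. Your remark that ``padding can only shrink $\norm{\cdot}_f$'' applies to the norm of a fixed padded matrix, not to the relation defects of the padded homomorphism; and Lemma~\ref{L:hom-stable} is about perturbations on a fixed space, so it does not cover padding. The paper's fix is precisely designed around this: it takes $\beta=\alpha\oplus\widetilde{\alpha}$, where $\widetilde{\alpha}$ is $\alpha$ with the roles of $a$ and $b$ exchanged. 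Since exchanging $a$ and $b$ permutes the relations of $K_0$ (using that $\alpha(a)\alpha(b)=\alpha(b)\alpha(a)=-\Id$ exactly after the correction), $\widetilde{\alpha}$ is again an $O(\eps)$-representation, and $\tr(\beta(a))=\tr(\alpha(a))+\tr(\alpha(b))=0$ exactly, giving balanced $d$-dimensional eigenspaces without breaking any relation. Your argument needs this (or an equivalent device) to go through; note also that without it you have not even shown $d_+=d_-$ is achievable, only that $|d_+-d_-|=O(\eps d)$.

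A secondary, fixable point: Lemma~\ref{L:stability1}(b) as stated produces a unitary \emph{commuting} with the involution $X$, whereas you need the anticommuting analogue to make $\psi(y)$ exactly anti-block-diagonal. The paper instead proves a small claim (Claim~\ref{claim:y-block}) directly: the hypothesis $\norm{B\beta(a)-\beta(b)B}_f\leq\eps'$ with $\beta(b)=-\beta(a)$ shows the diagonal blocks of $B=\beta(y)$ are $O(\eps')$-small, and unitarity of $B$ then makes the off-diagonal blocks nearly unitary, so part (a) applies. Your instinct here is right, but you should either prove the anticommuting variant (take $Z_0=\tfrac12(Y-XYX)$) or argue on the blocks directly as the paper does.
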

\begin{proof}
    Since $\norm{\phi(a)^2 - \Id}_f \leq \eps$, part (c) of Lemma
    \ref{L:stability1} implies that there is a unitary matrix $Z$ such that $\norm{\phi(a)-Z}_f \leq 2\eps$ and $Z^2 = \Id$. By Lemma \ref{L:hom-stable}, replacing
    $\phi(a)$ with $Z$ and $\phi(b)$ with $-Z$ yields an
    $O(\eps)$-representation $\alpha$ with $\alpha(a)^2 = \alpha(b)^2 = \Id$
    and $\alpha(a) \alpha(b) = -\Id = \alpha(c)$. Let $\beta = \alpha \oplus \widetilde{\alpha}$,
    where $\widetilde{\alpha}$ is the $O(\eps)$-representation defined from $\alpha$ by 
    switching $a$ and $b$, so 
    \begin{equation*}
        \widetilde{\alpha}(x) = \alpha(x), \widetilde{\alpha}(y) = \alpha(y), \widetilde{\alpha}(a) = \alpha(b),
        \widetilde{\alpha}(b) = \alpha(a), \text{ and } \widetilde{\alpha}(c) = -\Id. 
    \end{equation*}
    Then $\beta(a)^2 = \beta(b)^2 = \Id$, $\beta(a)\beta(b) = -\Id =\beta(c)$, and
    $\tr(\beta(a)) = 0$, so there is a choice of basis such that
    \begin{equation*}
        \beta(a) = \begin{pmatrix} \Id & 0 \\ 0 & -\Id \end{pmatrix} \quad\text{ and }\quad 
        \beta(b) = \begin{pmatrix} -\Id & 0 \\ 0 & \Id \end{pmatrix},
    \end{equation*}
    where the blocks are of size $d\times d$. 
		
		\begin{claim}\label{claim:y-block} Let 
		$$B = \begin{pmatrix} Y_1 & Y_2 \\ Y_3 & Y_4 \end{pmatrix} $$
		be a $2d\times 2d$ unitary matrix such that $\norm{B \beta(a)-\beta(b)B}_f \leq \eps'$, for some $\eps'\geq 0$. Then there are $d\times d$ unitaries $V_2$ and $V_3$ such that 
		$$ \norm{B - \begin{pmatrix} 0 & V_2 \\ V_3 & 0 \end{pmatrix}}_f \,=\,O(\eps'). $$
		\end{claim}
		
		\begin{proof}
		Let
    \begin{equation*}
        Y = \begin{pmatrix} 0 & Y_2 \\ Y_3 & 0 \end{pmatrix}.
    \end{equation*}
    The assumption made in the claim implies that $\norm{Y-B}_f\leq \eps'/2$, and since $B$ is assumed unitary, $\norm{Y^* Y - \Id}_f \leq O(\eps')$ (to avoid having
    to add $(\eps')^2$ to the right side of this inequality, note that we can
    assume $\eps' \leq 2$). It follows that $\norm{Y_i^* Y_i - \Id}_f \leq O(\eps')$ 
    for $i=2,3$. Since $B$ is unitary, $\norm{Y_i}_{op} \leq 1$ for $i=2,3$,
    and hence by part (a) of Lemma \ref{L:stability1} there are unitaries
    $V_i$ with $\norm{Y_i - V_i}_f \leq O(\eps')$, $i=2,3$. 
		\end{proof}
		
		Let
    \begin{equation*}
        \beta(y) = \begin{pmatrix} Y_1 & Y_2 \\ Y_3 & Y_4 \end{pmatrix}.
    \end{equation*}
		Applying Claim~\ref{claim:y-block} to $B=\beta(y)$, there exists unitaries  $V_2$ and $V_3$ such that 
		$$ \norm{B - \begin{pmatrix} 0 & V_2 \\ V_3 & 0 \end{pmatrix}}_f \,=\,O(\eps). $$
		Applying Lemma \ref{L:hom-stable}
    once again, we conclude that the homomorphism $\gamma : \Free(\{a,b,c,x,y\}) \arr \mcU(\C^{2d})$ defined by
    \begin{equation*}
        \gamma(y) = \begin{pmatrix} 0 & V_2 \\ V_3 & 0 \end{pmatrix}, \gamma(s) = \beta(s) \text{ for } s \in \{a,b,c,x\}
    \end{equation*}
    is an $O(\eps)$-representation. Then
    \begin{equation*}
        \psi = \begin{pmatrix} V_2^* & 0 \\ 0 & \Id \end{pmatrix} \cdot \gamma
            \cdot \begin{pmatrix} V_2 & 0 \\ 0 & \Id \end{pmatrix}
    \end{equation*}
    satisfies the conditions of the lemma with $U = V_3 V_2$.
\end{proof}

With Lemma \ref{L:stability2}, we can prove the lower bound in~\eqref{eq:Kbounds}. 
\begin{lemma}\label{L:K-lbound}
    There is a constant $C$ such that 
    \begin{equation}\label{eq:K-lbound}
     \hlp(c,2,\eps) \geq \frac{C}{\eps^{\frac{2}{3}}} \;, 
    \end{equation}
    for all $\eps > 0$. 
\end{lemma}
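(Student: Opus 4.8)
The plan is to use Lemma~\ref{L:stability2} to reduce~\eqref{eq:K-lbound} to a statement about how closely the spectral measure of a $d\times d$ unitary can approximate the uniform measure on $\T$, and then to reach a contradiction via a frequency-localized estimate. Since $\norm{\phi(c)-\Id}_f\le 2$ always, with equality only when $\phi(c)=-\Id$, $\hlp(c,2,\eps)$ is the least $d$ admitting a $d$-dimensional $\eps$-representation $\phi$ of $K$ with $\phi(c)=-\Id$; fix such a $\phi$ and set $d=\hlp(c,2,\eps)$ (if $d=\infty$ there is nothing to prove). Lemma~\ref{L:stability2} then yields a $2d$-dimensional $O(\eps)$-representation $\psi$ of $K$ whose restriction to $K_0$ has the form~\eqref{E:K0rep}, so that $Y:=\psi(y)=\left(\begin{smallmatrix}0&\Id\\ U&0\end{smallmatrix}\right)$ for some $U\in\mcU(\C^d)$. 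Inspecting~\eqref{E:K0rep}, $\psi$ satisfies every relation of $K$ except $xyx^{-1}=y^2$ exactly; since $\psi$ is an $O(\eps)$-representation and $\psi(y)$ is unitary, this last relation gives $\norm{\psi(x)Y\psi(x)^{-1}-Y^2}_f=O(\eps)$.

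Next I would extract moment bounds on $U$. From~\eqref{E:K0rep}, $Y^{2k}=U^k\oplus U^k$ while $Y^{2k+1}$ has zero diagonal blocks, so $\ntr(Y^{2k})=\ntr(U^k)$ and $\ntr(Y^{2k+1})=0$. As $\psi(x)Y\psi(x)^{-1}$ and $Y$ are conjugate (hence have the same normalized trace powers) and $\norm{\psi(x)Y\psi(x)^{-1}-Y^2}_f=O(\eps)$, the telescoping bound $|\ntr(M^n)-\ntr(N^n)|\le n\norm{M-N}_f$ for unitaries gives $|\ntr(Y^n)-\ntr(Y^{2n})|=O(n\eps)$ for every $n\ge 1$. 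For odd $n$ this reads $|\ntr(U^n)|=O(n\eps)$; writing a general $n$ as $2^j m$ with $m$ odd and iterating the doubling estimate from $U^m$ up to $U^n$ (the successive error terms form a geometric progression with sum $O(n\eps)$) gives $|\ntr(U^n)|=O(n\eps)$ for all $n\ge 1$.

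Let $\sigma$ be the spectral measure of $U$ on $\T$, an atomic probability measure with at most $d$ atoms, of masses $p_1,p_2,\dots$, and with Fourier coefficients $\widehat\sigma(n)=\ntr(U^n)$. Let $F_N$ be the Fej\'er kernel, so $F_N\ge 0$, $F_N(1)=N+1$, and $\sum_{|n|\le N}(1-\tfrac{|n|}{N+1})|\widehat\sigma(n)|^2=\iint_{\T\times\T}F_N(z\overline w)\,d\sigma(z)\,d\sigma(w)$. Keeping only the diagonal in the double integral and using $\sum_j p_j^2\ge 1/d$,
\[
    \sum_{|n|\le N}\Big(1-\tfrac{|n|}{N+1}\Big)|\widehat\sigma(n)|^2\ \ge\ (N+1)\sum_j p_j^2\ \ge\ \frac{N+1}{d},
\]
while the moment bounds give $\sum_{|n|\le N}(1-\tfrac{|n|}{N+1})|\widehat\sigma(n)|^2\le 1+2\sum_{n=1}^{N}|\widehat\sigma(n)|^2\le 1+O(\eps^2N^3)$. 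Taking $N=\lceil 3d\rceil$ forces $3\le (N+1)/d\le 1+O(\eps^2 d^3)$, hence $\eps^2 d^3=\Omega(1)$, i.e. $\hlp(c,2,\eps)=d=\Omega(\eps^{-2/3})$. (For $\eps$ bounded away from $0$, \eqref{eq:K-lbound} holds trivially after shrinking $C$.)

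Granting Lemma~\ref{L:stability2}, everything is bookkeeping except the final balancing: the estimate $|\widehat\sigma(n)|=O(n\eps)$ is vacuous for $|n|\gtrsim 1/\eps$, so one must work at frequencies $|n|\lesssim 1/\eps$ and balance the ``atomic'' lower bound $(N+1)/d$ against the ``quasi-uniform'' upper bound $1+O(\eps^2 N^3)$; the crossover $N\asymp d$ is exactly what produces the exponent $2/3$, whereas a cruder $L^\infty$ bound on $F_N*\sigma$ only yields $\eps^{-1/2}$. One can ask whether tracking the full Wasserstein distance between the spectral measures of $Y$ and $Y^2$, rather than just their moments, would push the exponent toward $1$ (matching the upper bound in~\eqref{eq:Kbounds}): the hypothesis $\norm{\psi(x)Y\psi(x)^{-1}-Y^2}_f=O(\eps)$ says precisely that these measures are $O(\eps)$-close, and passing from $\operatorname{spec}(Y^2)$ to $\operatorname{spec}(Y)$ is a $\tfrac12$-Lipschitz map of probability measures whose only fixed point is Haar measure --- but the moment argument above already gives~\eqref{eq:K-lbound}.
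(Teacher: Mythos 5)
Your proof is correct, and while it starts the same way as the paper --- both reduce via Lemma~\ref{L:stability2} to a pair of unitaries with $Y=\left(\begin{smallmatrix}0&\Id\\ U&0\end{smallmatrix}\right)$ and $XYX^{-1}\approx Y^2$ --- the analytic core is genuinely different. The paper pays a factor $\sqrt{d}$ to convert the hypothesis $\norm{XYX^{-1}-Y^2}_f=O(\eps)$ into an \emph{operator-norm} bound $\eps_1=O(\sqrt{d}\,\eps)$, and then runs an eigenvalue-chasing induction: starting from one eigenvalue of $U$ it repeatedly extracts square roots to produce $2^\ell$ pairwise distinct eigenvalues of $Y$ as long as $\eps_1\lesssim 2^{-\ell}$, giving $d\gtrsim 1/\eps_1$ and hence $d\gtrsim\eps^{-2/3}$. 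You instead stay entirely in the trace/Frobenius world: conjugation-invariance of $\ntr$, the telescoping bound $|\ntr(M^n)-\ntr(N^n)|\le n\norm{M-N}_f$, and the vanishing of odd-power traces of $Y$ give $|\widehat\sigma(n)|=O(n\eps)$ for the spectral measure $\sigma$ of $U$, and the Fej\'er-kernel positivity argument pits the atomic lower bound $(N+1)/d$ against the quasi-uniformity upper bound $1+O(\eps^2N^3)$ at $N\asymp d$. (I checked the iteration over $n=2^jm$ and the double-integral identity; both are fine, and your reading of Lemma~\ref{L:stability2} as producing an $O(\eps)$-representation of all of $K$, not just $K_0$, matches how the paper itself uses it.) It is a pleasant coincidence that the two mechanisms --- the $\sqrt{d}$ norm-conversion loss versus the $N\asymp d$ crossover --- land on the same exponent $2/3$; your version has the conceptual advantage of making explicit that the obstruction is quantitative equidistribution of $\operatorname{spec}(U)$, which is also what your closing remark about Wasserstein distance is gesturing at as a possible route past $2/3$.
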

\begin{proof}
Let $\phi$ be a $d$-dimensional $\eps$-representation of $K$ such that $\norm{\phi(c)-\Id}_f \geq 2$, i.e. $\phi(c)=-\Id$.  
    Applying Lemma~\ref{L:stability2} we deduce the
    existence of a  unitary $U\in\Unitary(\C^{2d})$, and unitaries
    $X=\psi(x),Y=\psi(y) \in \Unitary(\C^{2d})$ such that
    \begin{equation}\label{eq:operator-bound}
    Y = \begin{pmatrix} 0 & \Id \\ U & 0 \end{pmatrix}\qquad\text{and}\qquad \big\|XYX^\dagger - Y^2\big\|_{op} \,\leq\, C_1 \, \sqrt{d}\, \eps,
    \end{equation}
    for some constant $C_1>0$ (the second condition uses $\|\cdot\|_{op} \leq
    \sqrt{d}\|\cdot\|_f$). Let $\eps_1 = C_1 \sqrt{d} \eps$. We show that satisfying both
    conditions requires $d  \geq C_2\eps_1^{-1}$ for some constant $C_2>0$; this
    will prove the lemma. 

    Let $\Theta \subseteq \R$ be such that the elements $(\theta\bmod 1)$ are pairwise distinct for $\theta\in\Theta$, and the set of eigenvalues of $U$ is $\Lambda =
    \{ e^{2i\pi \theta},\,\theta\in\Theta\}$. We prove a lower bound on $d$ by showing that the set $\Theta$ must be large. 
		
		For $a,b\in\R$, let $d(a,b)$ measure the arc-length distance between $x=e^{2i\pi a}$ and $y=e^{2i\pi b}$. We first prove a simple lemma that will facilitate manipulation of angles. 
		
		\begin{claim}\label{claim:sq-root}
	Let $a,b\in \R$, and set $x=e^{2i\pi a}$ and $y=e^{2i\pi b}$. Then $d(a,b)\leq \pi|x-y|$. Moreover, for any square root $x' = e^{2i\pi a'}$ of $x$ there is a square root $y'=2^{2i\pi b'}$ of $y$ such that $d(a',b')= \frac{1}{2} d(a,b)$. 
		\end{claim}

\begin{proof}
The statement of the lemma is invariant under rotation, so we may assume $a=0$, $x=1$, and $b \in (-1/2,1/2]$. If $|x-y| \geq 1$ the bound on $d(a,b)$ is trivial, so for the first part of the lemma we can assume in particular that $b\in[-1/4,1/4)$. Then 
\begin{align*}
d(b,0) &= 2\pi |b| \\
&\leq \pi|\sin(2\pi b)| \\
&\leq \pi |1-y|.
\end{align*} 
For the moreover part of the lemma, note that the square roots $x'=e^{2i\pi a'}$ of $x$ are such that $a' \in \frac{1}{2}\Z$. For any such square root, $y'=e^{2i\pi b'}$ for $b' = a' + \frac{1}{2}b$ is a square root of $y$ such that $d(b',a')=d(b/2,0)=d(b,0)/2$. 
\end{proof}

The next claim shows that as soon as $\Theta\neq \emptyset$ it must contain many distinct values. 

    \begin{claim}
        Let $\theta \in \Theta$. For every integer $\ell \geq 0$ there exists angles 
            $$\Big\{\theta_{j}^{(\ell)},\, j\in \{1,\ldots,2^\ell\}\Big\} \subseteq \Theta$$
        such that for each $j\in \{1,\ldots,2^\ell\}$, 
        \begin{equation}\label{eq:ind-bound}
           d\big(\theta_j^{(\ell)},\,2^{-\ell} \theta + j2^{-\ell}\big) \,\leq\, 2\pi\eps_1.
        \end{equation} 
  \end{claim}
	
    \begin{proof}
		If $\eps_1 \geq 1$ the claim is trivial, so assume $\eps_1<1$. 
        We prove the claim by induction on $\ell$.  The base case $\ell=0$ is obtained
        by setting $\theta_j^{(0)} = \theta$. Assume the statement of the claim true
        for some $\ell \geq 0$. Let $j\in\{1,\ldots,2^\ell\}$.
				By assumption $d(\theta_j^{(\ell)},2^{-\ell}( \theta + j)) \leq 2\pi\eps_1$, 
				which using Claim~\ref{claim:sq-root} implies that $\lambda_j^{(\ell)} = e^{2i\pi\theta_j^{(\ell)}}$ has square roots $y_1=e^{2i\pi\varphi_1}$ and $y_2=e^{2i\pi\varphi_2}$ such that 
			\begin{equation}\label{eq:theta-a1}
			\max\big\{ d\big(\varphi_1, 	2^{-(\ell+1)}( \theta + j)\big),\, d\big(\varphi_2, 	2^{-(\ell+1)}( \theta + j + 2^\ell)\big)\big\} \,\leq\, \pi\eps_1.
			\end{equation}
				Since $\theta_j^{(\ell)}
        \in \Theta$, $\lambda_j^{(\ell)}$ is an eigenvalue
        of $U$, which using the first equation in~\eqref{eq:operator-bound} implies that both $y_1$ and $y_2$ are eigenvalues of $Y$.
        Using the second equation in~\eqref{eq:operator-bound} there must exist eigenvalues
        $\lambda_{2j-1}^{(\ell+1)}= e^{2i\pi \theta_{2j-1}^{(\ell+1)}}$ and
        $\lambda_{2j}^{(\ell+1)}= e^{2i\pi \theta_{2j}^{(\ell+1)}}$ of $Y^2$
				such that
        $$\max\big\{ \big|\lambda_{2j-1}^{(\ell+1)} - y_1
        \Big|,\, \big|\lambda_{2j}^{(\ell+1)} - y_2 \big|
        \big\}\,\leq\, \eps_1.$$
				Using Claim~\ref{claim:sq-root}, this implies 
				$$ \max\big\{ d( \theta_{2j-1}^{(\ell+1)},\varphi_1),\,d(\theta_{2j}^{(\ell+1)},\varphi_2)\big\}\,\leq\,\pi\eps_1.$$
	Using the triangle inequality and~\eqref{eq:theta-a1},
$$
\max\big\{  d\big(\theta_{2j}^{(\ell+1)}, 2^{-(\ell+1)}(\theta +j)\big),\,d\big(\theta_{2j-1}^{(\ell+1)} , 2^{-(\ell+1)}(\theta +j + 2^{\ell})\big) \big\} \leq \pi \eps_1 + \pi \eps_1, $$
        which (after relabeling the $\{\theta_j^{(\ell+1)},\,j\in\{1,\ldots,2^{\ell+1}\}\}$) completes the induction. 
    \end{proof}
    As long as $2\pi \eps_1 < 2^{-(\ell+1)}\pi$ the bound~\eqref{eq:ind-bound}
    implies that all $\theta_j^{(\ell)}$ are distinct modulo $1$, and hence
    that $d \geq 2^{\ell}$. If we choose $\ell$ so that $2^{-(\ell+2)} \leq 2
    \eps_1 \leq 2^{-(\ell+1)}$, then $d \geq 2^{\ell} > 1/(8\eps_1)$, proving
    the lemma. 
\end{proof}

Finally we prove the key lemma for the upper bound. 
\begin{lemma}\label{L:K-ubound}
    Let $d = 2^\ell$, where $\ell \geq 2$. Then there is a homomorphism
    \begin{equation*}
        \phi : \Free(\{a,b,c,x,y\}) \arr \mcU(\C^d)
    \end{equation*}
    such that $\phi(c) = -\Id$, and 
    \begin{equation*}
        \norm{\phi(r) - \Id}_{op} \leq 2^{-\ell}
    \end{equation*}
    for all defining relations $r$ of $K$.  
\end{lemma}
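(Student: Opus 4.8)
The plan is to exhibit $\phi$ explicitly, using the family of representations of $K_0$ from \eqref{E:K0rep} together with the normalization $\phi(c)=-\Id$. The point of this normalization is that it forces every defining relation of $K$ except $xyx^{-1}=y^2$ to hold \emph{exactly}, so that the whole lemma collapses to a single approximation problem. Concretely, with $d=2^\ell$ I would set, in block form with $2^{\ell-1}\times 2^{\ell-1}$ blocks,
\[
\phi(a)=\begin{pmatrix}\Id&0\\0&-\Id\end{pmatrix},\qquad
\phi(b)=\begin{pmatrix}-\Id&0\\0&\Id\end{pmatrix},\qquad
\phi(c)=-\Id,\qquad
\phi(y)=Y:=\begin{pmatrix}0&\Id\\U&0\end{pmatrix},
\]
for a $2^{\ell-1}$-dimensional unitary $U$ to be chosen, and $\phi(x)=X$ a $2^\ell$-dimensional unitary to be chosen. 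Since $Y^2=U\oplus U$, one checks directly that $YAY^{-1}=\phi(b)$, $YBY^{-1}=\phi(a)$, $\phi(a)^2=\phi(b)^2=\phi(c)^2=\Id$ and $\phi(a)\phi(b)=-\Id=\phi(c)$, so the relations $yay^{-1}=b$, $yby^{-1}=a$, $c=ab$, $a^2=b^2=c^2=e$ hold identically, and $\phi(c)=-\Id$ makes $xcx^{-1}=c$ hold identically as well. Hence it suffices to choose $U$ and $X$ with $\norm{XYX^{-1}-Y^2}_{op}$ of order $2^{-\ell}$ (an absolute constant here is harmless for Proposition~\ref{P:bounds}), since for the single remaining relation $r=xyx^{-1}y^{-2}$ one has $\norm{\phi(r)-\Id}_{op}=\norm{XYX^{-1}-Y^2}_{op}$.

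For the choice, I would take $U$ diagonal with the $2^{\ell-1}$-th roots of unity as eigenvalues. Then $\operatorname{spec}(Y)$ consists of their square roots, i.e.\ of all $2^\ell$-th roots of unity, each simple, while $\operatorname{spec}(Y^2)=\operatorname{spec}(U\oplus U)$ consists of the $2^{\ell-1}$-th roots of unity, each of multiplicity $2$. Group the $2^\ell$-th roots of unity into the consecutive pairs $\{\zeta^{2k},\zeta^{2k+1}\}$, where $\zeta=e^{2\pi i/2^\ell}$ and $0\le k<2^{\ell-1}$, and note $\zeta^{2k}$ equals the $2^{\ell-1}$-th root of unity $\omega^k$ with $\omega:=\zeta^2$. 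Let $X$ be a unitary carrying the span of the $\zeta^{2k}$- and $\zeta^{2k+1}$-eigenvectors of $Y$ onto the two-dimensional $\omega^k$-eigenspace of $Y^2$, sending those two eigenvectors to an orthonormal basis of it. Then, in the eigenbasis of $Y^2$, the matrix $XYX^{-1}$ is block diagonal with $2\times 2$ blocks $\operatorname{diag}(\omega^k,\zeta\omega^k)$, so $XYX^{-1}-Y^2$ is block diagonal with blocks $\operatorname{diag}\!\big(0,(\zeta-1)\omega^k\big)$, and therefore $\norm{XYX^{-1}-Y^2}_{op}=|\zeta-1|=2\sin(\pi/2^\ell)$, which is of order $2^{-\ell}$; taking instead the odd powers of $\zeta$ as the spectrum of $U$ sharpens the constant, and any residual constant is absorbed into $C'$ in Proposition~\ref{P:bounds}. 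Setting $\phi(x)=X$ completes the definition of the homomorphism $\Free(\{a,b,c,x,y\})\arr\mcU(\C^{2^\ell})$.

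There is no deep obstacle: the lemma is a concrete construction, and the only real content is the reduction in the first paragraph, which turns the lemma into the question of how well $Y$ of the form \eqref{E:K0rep} can be unitarily conjugated into $Y^2$. Once that is in place, the controlling quantity is the Hausdorff distance between $\operatorname{spec}(Y)$ (the square roots of $\operatorname{spec}(U)$) and $\operatorname{spec}(Y^2)$ (the doubled $\operatorname{spec}(U)$); choosing $\operatorname{spec}(U)$ to be a full set of roots of unity makes these two spectra interleave at scale $2^{-\ell}$, exactly the regime that matches the lower bound of Lemma~\ref{L:K-lbound}. The only fussy steps are the eigenvalue/multiplicity bookkeeping needed to define $X$ cleanly and the routine verification that the relations $xcx^{-1}=c$, $yay^{-1}=b$, $yby^{-1}=a$, $c=ab$, $a^2=b^2=c^2=e$ hold on the nose.
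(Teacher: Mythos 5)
Your construction is essentially the paper's own proof: the paper likewise takes $U$ diagonal with the $2^{\ell-1}$-th roots of unity, observes that all relations except $xyx^{-1}=y^2$ hold exactly in the block form \eqref{E:K0rep} with $\phi(c)=-\Id$, diagonalizes $Y$ to see that its spectrum is the full set of $2^\ell$-th roots of unity, and defines $X$ as a unitary matching each eigenvector of $Y$ to a nearby eigenvector of $Y^2$. The only difference is bookkeeping in the matching, and your honest accounting that the resulting bound is $\norm{XYX^{-1}-Y^2}_{op}=2\sin(\pi 2^{-\ell})$ rather than literally $2^{-\ell}$ — a constant-factor discrepancy that the paper's own proof glosses over (it measures eigenvalue differences by angle rather than chordal distance) and that is in any case absorbed into the constant $C'$ of Proposition~\ref{P:bounds}.
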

\begin{proof}
    Let $d=2^\ell$ be a power of two. Let $\{{f_i},\,i\in\{1,\ldots,d\}\}$ denote
    the canonical basis of $\C^d$. Let $$A = \begin{pmatrix} \Id & 0 \\ 0 & -\Id
    \end{pmatrix},\qquad B = \begin{pmatrix} -\Id& 0 \\ 0 & \Id \end{pmatrix}$$ in
    this basis, where each block has size $d/2$. Let $$Y = \sum_{j=1}^{d/2}
    \,{f_j}{f_{\frac{d}{2}+j}}^* +  \sum_{j=1}^{d/2}\, e^{2i\pi \theta_j}
    {f_{\frac{d}{2}+j}}{f_{j}}^*,$$ where for $j\in \{1,\ldots,d/2\}$, $\theta_j =
    j2^{-(\ell-1)}$. Then $Y$ is unitary and satisfies $YAY^\dagger = B$ and
    $YBY^\dagger = A$. For $j\in \{1,\ldots,d/2\}$, let $${u_j} =  \frac{{f_j}+
    e^{2i\pi \frac{\theta_j}{2}}{f_{\frac{d}{2}+j}}}{\sqrt{2}},\qquad  {v_j} =
    \frac{{f_j}-e^{2i\pi \frac{\theta_j}{2}}{f_{\frac{d}{2}+j}}}{\sqrt{2}}.$$ We
    can diagonalize $Y$ as $$ Y = \sum_{j=1}^{d/2} e^{2i\pi
    \frac{\theta_j}{2}}{u_j}u_j^* +  \sum_{j=1}^{d/2} e^{2i\pi
    \big(\frac{\theta_j}{2}+\frac{1}{2}\big)} {v_j}v_j^*, $$ and verify that $Y^2 =
    \sum_{j=1}^{d/2}  e^{2i\pi \theta_j} ({f_j}f_j^* +
    {f_{\frac{d}{2}+j}}f_{\frac{d}{2}+j}^*)$. Let $X$ be the unitary defined as
    follows:  if $j$ is even, $X{u_j}={f_{\frac{j}{2}}}$ and
    $X{v_j}={f_{\frac{d}{4}+\frac{j}{2}}}$, while if $j$ is odd, $X{u_j} =
    {f_{\frac{d}{2}+\frac{j+1}{2}}}$ and $X{v_j} =
    {f_{\frac{3d}{4}+\frac{j+1}{2}}}$. We verify that $X$ maps eigenvectors of $Y$
    to eigenvectors of $Y^2$ in a way such that the associated eigenvalues always
    differ by at most $2^{-\ell}$. 

    For the case of even $j$, the eigenvalue of $Y$ associated to $u_j$ is the same
    as the eigenvalue of $Y^2$ associated to $f_{\frac{j}{2}}$. The eigenvalue of
    $Y$ associated to $v_j$ is $e^{2i\pi(\frac{j2^{-(\ell-1)}}{2}+\frac{1}{2})}$,
    while the eigenvalue of $Y^2$ associated to $f_{\frac{d}{4}+\frac{j}{2}}$ is
    $e^{2i\pi (\frac{d}{4}+\frac{j}{2})2^{-(\ell-1)}}$, which given $d=2^\ell$ are
    identical.  

    For the case of odd $j$, the difference between the eigenvalue of $Y$
    associated to ${v_j}$ and the eigenvalue of $Y^2$ associated to
    $f_{\frac{3d}{4} + \frac{j+1}{2}}$ is
    \begin{align*}
        \Big|\Big(\frac{\theta_j}{2} + \frac{1}{2}\Big) -& \theta_{\frac{3d}{4}+\frac{j+1}{2}}\mod 1 \Big|\\
        &= \Big| \Big(j2^{-\ell} + \frac{1}{2} \Big)- \Big(\frac{j+1}{2} 2^{-(\ell-1)}+ \frac{3\cdot 2^\ell}{4}2^{-(\ell-1)}\Big)\mod 1 \Big|
        \\
        &= 2^{-\ell}.
    \end{align*}
    A similar calculation holds for the remaining case. Thus $\|XYX^\dagger - Y^2
    \| \leq 2^{-\ell}$. The proof of the lemma is concluded by defining a
    homomorphism $\phi:\Free(\{a,b,x,y\}) \arr \mcU(\C^{d})$  by $\phi(a)=A$,
    $\phi(b)=B$, $\phi(x)=X$ and $\phi(y)=Y$.
\end{proof}

\begin{proof}[Proof of Proposition \ref{P:bounds}]
Using Proposition~\ref{P:centralinv}, the lower bound is shown in Lemma~\ref{L:K-lbound}. The upper bound follows directly from Lemma~\ref{L:K-ubound} by noting that $\norm{\cdot}_f \leq \norm{\cdot}_{op}$. 
\end{proof}

\section{Linear system games}\label{S:linear}

We now turn to the connection between hyperlinear profile and linear system
games. Let $Ax=b$ be an $m \times n$ linear system over $\Z_2$. For every $1
\leq i \leq m$, let $V_i$ be set of indices of variables which appear in
equation $i$, i.e.  $V_i = \{ 1 \leq j \leq n : A_{ij} \neq 0 \}$. The linear
system non-local game associated to $Ax=b$ is played as follows \cite{CM14}:
\begin{enumerate}[(1)]
    \item The first player (Alice) is given an integer $i$, chosen uniformly at
        random from $\{1,\ldots,m\}$, and must reply with a vector ${y} \in
        \Z_2^{V_i}$ satisfying $\sum_{k \in V_i} y_k = b_i$. The output is
        interpreted as an assignment to the variables in equation $i$.
    \item The second player (Bob) is given an integer $j$, chosen uniformly at
        random from $\{1,\ldots,n\}$, and must reply with $x_j \in \Z_2$,
        interpreted as an assignment to the variable $x_j$. 
    \item The two players win if either $j \not\in V_i$, or $y_j = x_j$. 
\end{enumerate}
A quantum strategy $\mcS$ for the game (presented in terms of observables)
consists of
\begin{enumerate}[(a)]
    \item  two Hilbert spaces $H_A$ and $H_B$, along with a unit vector
        $\ket{\psi} \in H_A \otimes H_B$,
    \item a collection of $\{\pm 1\}$-valued observables $Y_{ij}$, $1 \leq i \leq m$,
        $j \in V_i$, on $H_A$ such that:
        \begin{enumerate}[(i)]
            \item $\prod_{j \in V_i} Y_{ij} = (-\Id)^{b_i}$ for all $1 \leq i \leq m$, and
            \item $Y_{ij} Y_{ij'} = Y_{ij'} Y_{ij}$ for all $1  \leq i \leq m$, $j,j' \in V_i$,
        \end{enumerate}
        and
    \item a collection of $\{\pm 1\}$-valued observables $X_j$, $1 \leq j \leq n$, on $H_B$.
\end{enumerate}
As mentioned in the introduction, a quantum strategy for a non-local game is
usually presented in terms of POVMs, rather than observables. A strategy as
above, presented in terms of observables, is equivalent to a strategy in the
usual form, in which the measurements are projective \cite{CM14,Sl17}. Indeed,
if $\{A_i^y\}_{y \in \Z_2^{V_i}}$, $\{B_j^c\}_{c\in \Z_2}$, $\ket{\psi}$ is a
strategy for a linear system game in which $\{A_i^y\}_y$ and $\{B_j^c\}_c$ are
projective measurements, then we can express this strategy in terms of
observables by setting $Y_{ij} = \sum_{{y}} (-1)^{y_j} A_i^{{y}}$ for each $1
\leq i \leq m$, $j\in V_i$, and $X_j = B_j^0- B_j^1$. In this way, the
observables $Y_{ij}$ and $X_j$ correspond to Alice's and Bob's assignments to
variable $x_j$. 

Using Naimark dilation, any strategy with POVMs can be turned into a strategy
with projective measurements. This multiplies the dimension of the Hilbert
space by the number of outputs.
\begin{lemma}\label{L:POVM}
    Let $\{A_i^a\}_{a=1}^k$ and $\{B_j^b\}_{b=1}^{\ell}$ be two families of
    POVMs, on Hilbert spaces $H_A$ and $H_B$ respectively, and let $\ket{\psi}$
    be a state in $H_A \otimes H_B$. Then there are families of
    projective measurements $\{\widehat{A}_i^a\}_{a=1}^k$ and
    $\{\widehat{B}_j^b\}_{b=1}^{\ell}$, on Hilbert spaces $\widehat{H}_A$
    and $\widehat{H}_B$ respectively, and a state $\ket{\widehat{\psi}} \in
    \widehat{H}_A \otimes \widehat{H}_B$, such that 
    \begin{equation*}
        \bra{\psi} A_i^a \otimes B_j^b \ket{\psi} = \bra{\widehat{\psi}} 
            \widehat{A}_i^a \otimes \widehat{B}_j^b \ket{\widehat{\psi}}  
    \end{equation*}
    for all $i,j,a,b$. Furthermore, the Hilbert spaces $\widehat{H}_A$ and
    $\widehat{H}_B$ can be chosen so that $\dim \widehat{H}_A = k \dim H_A$
    and $\dim \widehat{H}_B = \ell \dim H_B$. 
\end{lemma}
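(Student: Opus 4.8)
The plan is to prove this by Naimark's dilation theorem, applied separately on Alice's and Bob's sides, but with one twist: I must dilate \emph{all} of the POVMs $\{A_i^a\}_a$ (for the various $i$) into a single common Hilbert space $\widehat H_A$, and I must exhibit a \emph{single} dilated state $\ket{\widehat\psi}$ that reproduces every joint probability at once. So I would take $\widehat H_A = H_A \otimes \C^k$ and $\widehat H_B = H_B \otimes \C^\ell$ (which gives the claimed dimensions), fix the canonical isometric inclusions $V_A : H_A \arr \widehat H_A$, $\ket{\xi} \mapsto \ket{\xi} \otimes \ket{1}$, and $V_B : H_B \arr \widehat H_B$ analogously, and set $\ket{\widehat\psi} = (V_A \otimes V_B)\ket{\psi}$, which is again a unit vector and has the required tensor form.

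Next I would realize each POVM as a compression of a fixed ``model'' projective measurement. For each $i$, the assignment $\ket{\xi} \otimes \ket{1} \mapsto \sum_{a=1}^k \big(\sqrt{A_i^a}\ket{\xi}\big) \otimes \ket{a}$ defines an isometry of the subspace $H_A \otimes \ket{1}$ of $\widehat H_A$ into $\widehat H_A$ (isometric exactly because $\sum_a A_i^a = \Id$); I would extend it to a unitary $U_i$ on $\widehat H_A$ and set $\widehat A_i^a = U_i^*\,(\Id_{H_A}\otimes\ket{a}\bra{a})\,U_i$. Being a unitary conjugate of the projective measurement $\{\Id_{H_A}\otimes\ket{a}\bra{a}\}_a$, the family $\{\widehat A_i^a\}_{a=1}^k$ is itself a projective measurement on $\widehat H_A$. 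The point of building $U_i$ to extend that particular isometry is that then $U_i V_A\ket{\xi} = \sum_a(\sqrt{A_i^a}\ket{\xi})\otimes\ket{a}$, whence $V_A^*\,\widehat A_i^a\,V_A = (\sqrt{A_i^a})^*\sqrt{A_i^a} = A_i^a$ since $\sqrt{A_i^a}$ is self-adjoint; I would run the symmetric construction on Bob's side with unitaries $U_j'$ extending $\ket{\xi}\otimes\ket{1} \mapsto \sum_b(\sqrt{B_j^b}\ket{\xi})\otimes\ket{b}$. Then
\[
    \bra{\widehat\psi}\,\widehat A_i^a \otimes \widehat B_j^b\,\ket{\widehat\psi}
    = \bra{\psi}\,\big(V_A^*\widehat A_i^a V_A\big)\otimes\big(V_B^*\widehat B_j^b V_B\big)\,\ket{\psi}
    = \bra{\psi}\,A_i^a\otimes B_j^b\,\ket{\psi}
\]
for all $i,j,a,b$, which is exactly the statement. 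The remaining details — that the model measurement is projective, that an isometry of a subspace into a Hilbert space extends to a unitary when the orthogonal complements have equal dimension, and the dimension bookkeeping — are routine.

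I do not expect a genuine obstacle here; the one thing to be careful about is precisely the feature that distinguishes this from a verbatim quotation of textbook Naimark dilation, namely that a separate dilation for each $i$ (landing in a space depending on $i$) would not be compatible with a fixed bipartite state. Expressing every dilation as conjugation by a unitary of the \emph{same} model measurement on the \emph{same} space $H_A\otimes\C^k$ turns the one inclusion $V_A$ into a universal compression map, so that the single state $(V_A\otimes V_B)\ket{\psi}$ works for all inputs and outputs simultaneously.
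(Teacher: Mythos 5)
Your proof is correct and is precisely the Naimark-dilation argument that the paper invokes without writing out: the one genuinely non-routine point --- using a single inclusion $V_A:H_A\arr H_A\otimes\C^k$ and realizing every POVM $\{A_i^a\}_a$ as a unitary conjugate of the fixed model measurement $\{\Id\otimes\ket{a}\bra{a}\}_a$ on the same space, so that the single state $(V_A\otimes V_B)\ket{\psi}$ reproduces all correlations --- is exactly the point you identify and handle correctly. The dimension count $\dim\widehat H_A=k\dim H_A$ and the identity $V_A^*\widehat A_i^aV_A=A_i^a$ both check out, so there is nothing to add.
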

As a result, it suffices to prove Theorem \ref{T:main} for strategies with
projective measurements. Thus for the remainder of the paper, we assume
that all strategies are projective, and presented in terms of observables. 

Given a strategy $\mcS$ for a non-local game presented in terms of observables
as above, if $j \in V_i$, then Alice and Bob win with probability $p_{ij}(\mcS)
= (b_{ij}+1)/2$ on inputs $(i,j)$, where 
\begin{equation*}
    b_{ij}(\mcS) := \bra{\psi} Y_{ij} \otimes X_j \ket{\psi}.
\end{equation*}
The quantity $b_{ij}(\mcS)$ is called the \emph{success bias on inputs
$(i,j)$}. The strategy $\mcS$ is \emph{perfect} if $p_{ij}(\mcS)=1$ for all
inputs $(i,j)$, and \emph{$\eps$-perfect} if $p_{ij}(\mcS) \geq 1-\eps$ for all
inputs $(i,j)$.  Note that a strategy is perfect if and only if
$\omega(\mcG;\mcS)=1$, but it is not true that a strategy is $\eps$-perfect if
and only if $\omega(\mcG;\mcS)=1-\eps$.  It is not hard to show that the two
are related up to a factor depending on the size of the game:
\begin{lemma}\label{L:eps-perfect}
    Let $\mcG$ be the linear system game associated to the $m \times n$ linear
    system $Ax=b$. If $\mcS$ is an $\eps$-perfect strategy, then $\omega(\mcG;\mcS)
    \geq 1-\eps$. On the other hand, if $\mcS$ is a strategy such that $\omega(\mcG;\mcS)
    \geq 1-\eps$, then $\mcS$ is $(nm\eps)$-perfect. 
\end{lemma}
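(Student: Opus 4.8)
The plan is to express the winning probability $\omega(\mcG;\mcS)$ as a uniform average of the per-input winning probabilities and then split the input pairs $(i,j)$ according to whether $j\in V_i$. Since the input distribution is uniform on $\{1,\dots,m\}\times\{1,\dots,n\}$, we have
\begin{equation*}
    \omega(\mcG;\mcS)\,=\,\frac{1}{mn}\sum_{(i,j)} q_{ij}(\mcS),
\end{equation*}
where $q_{ij}(\mcS)$ denotes the probability that the players win on inputs $(i,j)$. By the rules of the game $q_{ij}(\mcS)=1$ whenever $j\notin V_i$ (the players win automatically), while $q_{ij}(\mcS)=p_{ij}(\mcS)$ whenever $j\in V_i$. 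Write $N=\#\{(i,j):j\in V_i\}$, so that $0\le N\le mn$.

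First I would prove the forward implication. Assume $\mcS$ is $\eps$-perfect, so $p_{ij}(\mcS)\ge 1-\eps$ for every $(i,j)$ with $j\in V_i$. Bounding the $mn-N$ automatic-win pairs by $1$ and the remaining $N$ pairs by $1-\eps$,
\begin{equation*}
    \omega(\mcG;\mcS)\,\ge\,\frac{1}{mn}\big((mn-N)+N(1-\eps)\big)\,=\,1-\frac{N}{mn}\,\eps\,\ge\,1-\eps,
\end{equation*}
using $N\le mn$.

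For the converse I would argue by contraposition: suppose $\mcS$ is not $(nm\eps)$-perfect, so there is a pair $(i_0,j_0)$ with $j_0\in V_{i_0}$ and $p_{i_0 j_0}(\mcS)<1-nm\eps$. Using $q_{ij}(\mcS)\le 1$ for each of the other $mn-1$ pairs,
\begin{equation*}
    \omega(\mcG;\mcS)\,<\,\frac{1}{mn}\big((mn-1)+(1-nm\eps)\big)\,=\,1-\eps.
\end{equation*}
Taking the contrapositive yields the second statement.

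I do not expect any real obstacle: the argument is a short counting estimate. The only point requiring a little care is the bookkeeping for the pairs with $j\notin V_i$, which contribute $1$ to the average regardless of the strategy (this is what makes the forward direction clean), combined with the observation that a single bad pair is diluted by a factor $mn$ in the uniform average, which is precisely what forces the $nm$ blow-up in the converse. It is also worth noting that this blow-up is essentially unavoidable, since a strategy failing badly on a single input pair and perfectly elsewhere saturates both inequalities.
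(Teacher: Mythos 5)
Your proof is correct and follows essentially the same averaging/counting argument as the paper, which disposes of both directions in a single sentence by noting that each of the $nm$ input pairs is chosen with probability $1/(nm)$. Your write-up just makes explicit the bookkeeping for the automatically-won pairs with $j\notin V_i$ and the dilution of a single bad pair, so there is nothing to add.
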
 
\begin{proof}
    The game $\mcG$ has $nm$ pairs of inputs in total, each chosen with probability
    $1/(nm)$. Therefore success probability $1-\eps$ in the game implies that each
    question must lead to a valid answer with probability at least $1-nm\eps$. 
\end{proof}

The \emph{solution group} \cite{CLS16} associated to $Ax=b$ is the
finitely-presented group $\Gamma(A,b)$ generated by indeterminates
$x_1,\ldots,x_n$ and $J$, and satisfying the relations
\begin{enumerate}[(a)]
    \item $J^2 = e$, and $[x_j,J]=e$ for all $1 \leq j \leq n$, 
    \item $x_j^2 = e$ for all $1 \leq j \leq n$,
    \item $\prod_{j \in V_i} x_j = J^{b_i}$ for all $1 \leq i \leq m$, and
    \item $x_j x_{j'} = x_{j'} x_j$ for all $j,j' \in V_i$, $1 \leq i \leq m$.
\end{enumerate}
Let $\mcG$ be the linear system non-local game associated to the linear system
$Ax=b$. Several criteria relating representations of $\Gamma(A,b)$ to the
existence of perfect strategies for $\mcG$ have been given in
\cite{CM14,CLS16,Sl17}, including:
\begin{prop}[Proposition 3.4, \cite{Sl17}]\label{P:nontrivial}
    If $J$ is non-trivial in approximate representations of $\Gamma(A,b)$, then
    $\omega^q(\mcG) = 1$.
\end{prop}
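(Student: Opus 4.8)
The plan is to produce, for every $\eps>0$, a finite-dimensional quantum strategy for $\mcG$ with winning probability at least $1-\eps$; since $\omega^q(\mcG)\le 1$ always, this gives $\omega^q(\mcG)=1$. These strategies will be built on a maximally entangled state out of approximate representations of $\Gamma(A,b)$ in which $J$ is far from $\Id$, which exist by hypothesis.

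First I would normalize. By hypothesis there is $\delta>0$ such that for every $\eps>0$ there is an $\eps$-representation $\phi$ of $\Gamma(A,b)$ with $\norm{\phi(J)-\Id}_f\ge\delta$. Since $J$ is a central involution occurring as a generator of the given presentation, Proposition~\ref{P:centralinv} (in fact its proof) upgrades this to: for every $\eps>0$ there is an $O(\eps)$-representation $\psi$ of some finite dimension $d$ with $\psi(J)=-\Id$ exactly. Because $\norm{\psi(x_j)^2-\Id}_f=O(\eps)$, part~(c) of Lemma~\ref{L:stability1} gives self-adjoint involutions $Z_j$ with $\norm{\psi(x_j)-Z_j}_f=O(\eps)$, and replacing each $\psi(x_j)$ by $Z_j$ while keeping $\psi(J)=-\Id$ still yields an $O(\eps)$-representation by Lemma~\ref{L:hom-stable}. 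After renaming I have an $O(\eps)$-representation $\psi$ such that $\psi(J)=-\Id$, each $\psi(x_j)$ is a self-adjoint involution, relations (a)--(b) hold exactly, and relations (c)--(d) hold up to some $\mu=O(\eps)$ in $\norm{\cdot}_f$: in particular $\norm{\psi(x_{j_1})\cdots\psi(x_{j_r})-(-\Id)^{b_i}}_f\le\mu$ for each equation $i$ with $V_i=\{j_1,\dots,j_r\}$ in a fixed order, and $\norm{[\psi(x_j),\psi(x_{j'})]}_f\le\mu$ for all $j,j'\in V_i$.

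Now I build the strategy on $H_A=H_B=\C^d$ with $\ket{\psi_{\mathrm{EPR}}}=\tfrac{1}{\sqrt{d}}\sum_{k}\ket{k}\ket{k}$. Bob, on input $j$, measures the $\pm1$-observable $X_j=\overline{\psi(x_j)}$ (entrywise conjugate), again a self-adjoint involution. Alice, on input $i$ with $V_i=\{j_1,\dots,j_r\}$ in the fixed order, sequentially measures $\psi(x_{j_1}),\dots,\psi(x_{j_{r-1}})$, records the outcomes as $(-1)^{y_{j_1}},\dots,(-1)^{y_{j_{r-1}}}$, and sets $y_{j_r}:=b_i+\sum_{s<r}y_{j_s}\bmod 2$, so that $\sum_{j\in V_i}y_j=b_i$ automatically and her answer is always valid. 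The analysis uses two facts. First, on $\ket{\psi_{\mathrm{EPR}}}$ a measurement of a $\pm1$-observable $M$ on Alice's half followed by a measurement of $\overline{M}$ on Bob's half returns equal outcomes with certainty, because $(M\otimes\Id)\ket{\psi_{\mathrm{EPR}}}=(\Id\otimes M^T)\ket{\psi_{\mathrm{EPR}}}$ with $M^T=\overline{M}$, so Alice's outcome already collapses Bob's half into the matching spectral subspace of $\overline{M}$. Second, a chain of $r-1$ sequential projective measurements of pairwise $\mu$-almost-commuting $\pm1$-observables has joint outcome distribution within $O((r-1)^2\mu)$ in total variation of the one from a genuine simultaneous measurement; since $\mcG$ is fixed, $r\le n$ is a constant, so this error is $O(\mu)$, and in particular the sequential outcomes may be reordered at cost $O(\mu)$. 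Fix an input pair $(i,j)$ with $j\in V_i$. If $j=j_s$ with $s<r$, the marginal law of Alice's $s$-th outcome is $O(\mu)$-close to the law of a measurement of $\psi(x_j)$ performed first, so by the first fact $y_j=x_j$ with probability $1-O(\mu)$. If $j=j_r$, reordering and relation (c) give $\norm{\psi(x_{j_1})\cdots\psi(x_{j_{r-1}})-(-1)^{b_i}\psi(x_{j_r})}_f=O(\mu)$ (using $\psi(x_{j_r})^2=\Id$), so the bit $(-1)^{y_{j_r}}=(-1)^{b_i}\prod_{s<r}(-1)^{y_{j_s}}$ is, up to $O(\mu)$ in total variation, distributed as the outcome of a measurement of $\psi(x_{j_r})$ on Alice's half, and again $y_j=x_j$ with probability $1-O(\mu)$. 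Averaging over the uniform input distribution, this finite-dimensional strategy wins with probability at least $1-O(\mu)$; letting $\eps\to0$, hence $\mu\to0$, proves $\omega^q(\mcG)=1$.

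The one genuinely nontrivial analytic ingredient is the reordering estimate used above --- that sequential projective measurements of finitely many approximately commuting $\pm1$-observables, and products of their outcomes, behave up to $O(\mu)$ in total variation like a single joint measurement. This is folklore in the non-local games literature; a short direct argument rests on the identity that a spectral projection $P$ of a $\pm1$-observable $N$ and a spectral projection $Q$ of a $\pm1$-observable $M$ satisfy $\norm{[P,Q]}_f=O(\norm{[M,N]}_f)$, so that products of such projections can be permuted with accumulated error $O((r-1)^2\mu)$, and one tracks the marginal of a later outcome after an earlier dephasing channel $\Phi$ as $\ntr(\Phi(P)\rho)$ with $\norm{\Phi(P)-P}_f=O(\mu)$. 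Everything else --- normalizing $J$ to $-\Id$ via Proposition~\ref{P:centralinv}, rounding the $\psi(x_j)$ to self-adjoint involutions via Lemma~\ref{L:stability1}(c) and Lemma~\ref{L:hom-stable}, and the bias identities on $\ket{\psi_{\mathrm{EPR}}}$ (which are linear in Alice's observable, hence $O(\mu)$-stable) --- is routine.
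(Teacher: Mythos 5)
Your argument is correct, but it takes a genuinely different route from the one the paper relies on. The paper defers to Proposition 3.4 of \cite{Sl17} and, as recalled in the proof of Proposition~\ref{P:approximate}(a), the key idea there is that $\Z_2^k$ is a \emph{stable} group: for each equation $i$ the operators $\{\psi(x_j)\}_{j\in V_i}$ form an approximate representation of $\Z_2^{|V_i|}$ with prescribed product $(-\Id)^{b_i}$, so they can be replaced by exactly commuting involutions $Y_{ij}$ at distance $O(\eps)$, and Alice performs a genuine joint projective measurement. You avoid this joint rounding entirely: after the same normalizations ($\psi(J)=-\Id$ via Proposition~\ref{P:centralinv}, individual involutions via Lemma~\ref{L:stability1}(c) and Lemma~\ref{L:hom-stable}), Alice measures her almost-commuting observables sequentially and fills in the last bit from the parity constraint, and the error is controlled by a reordering estimate for sequential measurements of almost-commuting $\pm1$-observables. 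That estimate, which you leave as a sketch, is indeed elementary here: the spectral projections satisfy $[P,Q]=\tfrac14[M,N]$, the relevant probabilities are normalized traces of products of such projections against the maximally mixed state, and the number of variables per equation is a constant of the game, so the accumulated total-variation error is $O(\eps)$. The trade-off is quantitative: the stability route yields an honest projective strategy whose bias is $1-\tfrac12\norm{\cdot}_f^2 = 1-O(\eps^2)$, i.e.\ the $O(\eps^2)$-perfect strategy asserted in Proposition~\ref{P:approximate}(a), whereas your sequential strategy is only $O(\eps)$-perfect since the commutator errors enter linearly. This is immaterial for the qualitative conclusion $\omega^q(\mcG)=1$, but it would degrade the exponents in Theorem~\ref{T:main} if your construction were substituted for part (a) of Proposition~\ref{P:approximate}.
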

For this paper, we need a quantitative version of Proposition
\ref{P:nontrivial}, and also its converse.  We start by giving the quantitative version, together with a converse for
maximally entangled states; although this is not necessary for the proof of
Theorem \ref{T:main}, it is easier to establish, and gives a tighter connection
with hyperlinear profile. The converse for general states will be given in the
next section.
\begin{prop}\label{P:approximate}
    Let $\mcG$ be the linear system non-local game associated to the $m \times
    n$ linear system $Ax = b$. 
    \begin{enumerate}[(a)]
        \item Given a $d$-dimensional $\eps$-representation $\phi$ of $\Gamma(A,b)$
            with $\phi(J) = -\Id$, we can construct an $O(\eps^2)$-perfect strategy
            for $\mcG$ which uses the maximally entangled state $\ket{\psi} \in
            \C^d \otimes \C^d$. 
        \item If $\ket{\psi},\{Y_{ij}\},\{X_j\}$ is an $\eps$-perfect strategy
            for $\mcG$, where $\ket{\psi} \in \C^d \otimes \C^d$ is maximally
            entangled, then the function 
            \begin{equation*}
                \phi : \Free(x_1,\ldots,x_n,J) \arr \mcU(\C^d)
            \end{equation*}
            sending $x_j \mapsto X_j$ and $J \mapsto -\Id$ is an
            $O(\sqrt{\eps})$-representation of $\Gamma(A,b)$.
    \end{enumerate}
\end{prop}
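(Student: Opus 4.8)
The plan is to route both directions through two elementary identities for the maximally entangled state $\ket{\psi}\in\C^d\otimes\C^d$: first, $\bra{\psi}\,M\otimes N\,\ket{\psi}=\ntr(MN^T)$ for all $M,N$; second, $\norm{U-V}_f^2=2-2\,\ntr(UV)$ whenever $U,V$ are self-adjoint unitaries. The second identity is exactly what makes the bias loss quadratic: an $O(\eps)$ perturbation in $\norm{\cdot}_f$ between two reflections moves $\ntr$ of their product only by $O(\eps^2)$.

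For part (a), I would first turn $\phi(x_j)$ into a genuine observable. By part (c) of Lemma~\ref{L:stability1} there are self-adjoint unitaries $Z_j$ with $\norm{\phi(x_j)-Z_j}_f\leq 2\eps$, and by Lemma~\ref{L:hom-stable} the map $x_j\mapsto Z_j$, $J\mapsto-\Id$ is still an $O(\eps)$-representation of $\Gamma(A,b)$; moreover relations (a) and (b) of the solution group now hold \emph{exactly} (the $Z_j$ are involutions and $-\Id$ is central), while (c) and (d) hold up to $O(\eps)$ in $\norm{\cdot}_f$. The technical core is then, for each equation $i$, to perturb the approximately-commuting involutions $\{Z_j:j\in V_i\}$, whose product lies within $O(\eps)$ of $(-\Id)^{b_i}$, into \emph{genuinely} commuting self-adjoint unitaries $\{W_{ij}:j\in V_i\}$ with $\prod_{j\in V_i}W_{ij}=(-\Id)^{b_i}$ exactly and $\norm{W_{ij}-Z_j}_f=O(\eps)$. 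I would do this one variable at a time inside each $V_i$: given the involutions already made mutually commuting, pass to their joint eigenspace decomposition $\Id=\sum_t P_t$ and replace the next $Z_j$ by its block-diagonal compression $\sum_t P_t Z_j P_t$, whose distance from $Z_j$ is bounded (via $P_t[W,Z_j]P_{t'}=\pm 2P_tZ_jP_{t'}$ for a separating $W$) by the relevant commutators times the number of blocks, hence $O(\eps)$ since $|V_i|$ is a fixed constant of the game; then re-round this normal near-involution to an exact involution via parts (a) and (c) of Lemma~\ref{L:stability1}. Finally, since $\prod_{j\in V_i}W_{ij}$ is now an involution within $O(\eps)$ of $(-\Id)^{b_i}$, absorb the discrepancy into one factor $W_{ij_0}$. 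Taking $H_A=H_B=\C^d$, $\ket{\psi}$ maximally entangled, $X_j=Z_j$ and $Y_{ij}=\overline{W_{ij}}$, conditions (i) and (ii) for $\{Y_{ij}\}$ hold exactly because entrywise complex conjugation is a $\norm{\cdot}_f$-isometric homomorphism of $M_d(\C)$ that preserves the product and commutation relations, and
\[ b_{ij}(\mcS)=\ntr\big(\overline{W_{ij}}\,\overline{Z_j}\big)=\ntr(W_{ij}Z_j)=1-\tfrac12\norm{W_{ij}-Z_j}_f^2=1-O(\eps^2), \]
so the strategy is $O(\eps^2)$-perfect.

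For part (b), suppose $\ket{\psi},\{Y_{ij}\},\{X_j\}$ is an $\eps$-perfect strategy with $\ket{\psi}\in\C^d\otimes\C^d$ maximally entangled. Then for each $i$ and $j\in V_i$ we have $\ntr(Y_{ij}X_j^T)=b_{ij}(\mcS)\geq 1-2\eps$, and since $Y_{ij}$ and $\overline{X_j}=X_j^T$ are both self-adjoint unitaries the second identity gives $\norm{Y_{ij}-\overline{X_j}}_f^2=2-2\,\ntr(Y_{ij}X_j^T)\leq 4\eps$, i.e.\ $\norm{Y_{ij}-\overline{X_j}}_f\leq 2\sqrt{\eps}$. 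Now set $\phi:\Free(x_1,\ldots,x_n,J)\arr\mcU(\C^d)$ with $\phi(x_j)=X_j$ and $\phi(J)=-\Id$; relations (a) and (b) of $\Gamma(A,b)$ hold exactly. For (c) and (d), feed the \emph{exact} relations $\prod_{j\in V_i}Y_{ij}=(-\Id)^{b_i}$ and $Y_{ij}Y_{ij'}=Y_{ij'}Y_{ij}$ through the approximation $Y_{ij}\approx\overline{X_j}$ (replacing one unitary factor costs at most $2\sqrt{\eps}$ in $\norm{\cdot}_f$, and $|V_i|$ is bounded), then apply complex conjugation to obtain $\norm{\prod_{j\in V_i}X_j-(-\Id)^{b_i}}_f=O(\sqrt{\eps})$ and $\norm{X_jX_{j'}-X_{j'}X_j}_f=O(\sqrt{\eps})$ for $j,j'\in V_i$. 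Hence $\phi$ is an $O(\sqrt{\eps})$-representation of $\Gamma(A,b)$.

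I expect the only real obstacle to be the rounding step in part (a): converting the approximately-commuting involutions attached to a single equation into exactly-commuting involutions with the prescribed product while keeping the perturbation of size $O(\eps)$, which is precisely what makes the resulting strategy $O(\eps^2)$-perfect rather than merely $O(\eps)$-perfect. Everything else reduces to bookkeeping with the two identities above and the stability lemmas already in hand.
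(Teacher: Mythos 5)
Your proposal is correct and takes essentially the same route as the paper: part (b) is verbatim the paper's argument (conjugate the $Y_{ij}$, use $\bra{\psi}A\otimes B\ket{\psi}=\ntr(A^TB)$ to get $\norm{\overline{Y_{ij}}-X_j}_f=O(\sqrt{\eps})$, then transfer the exact relations one factor at a time), while part (a) is exactly the stability-of-$\Z_2^{|V_i|}$ argument that the paper only cites from \cite{Sl17}. Your explicit rounding for part (a) --- compress onto joint eigenspaces of the already-commuting involutions, re-round by functional calculus (which preserves the commutant, a point Lemma \ref{L:stability1}(c) does not state but its proof provides), and absorb the $O(\eps)$ product defect into a single factor --- is a sound reconstruction of that cited step, and the quadratic gain $1-\frac12\norm{W_{ij}-Z_j}_f^2=1-O(\eps^2)$ is the right mechanism.
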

\begin{proof}
    For part (a), see the proof of Proposition 3.4 in \cite{Sl17} (the key
    idea is that $\Z_2^k$ is a stable group for any $k$). 

    For part (b), observe that if $\ket{\psi} \in \C^d \otimes \C^d$ is
    maximally entangled, we have that 
    \begin{equation*}
        \bra{\psi} A \otimes B \ket{\psi} = \ntr(A^T B),
    \end{equation*}
    where the transpose is taken with respect to the Schmidt basis for
    $\ket{\psi}$. Given an $\eps$-perfect strategy where $\ket{\psi} \in \C^d
    \otimes \C^d$ is maximally entangled, let $Z_{ij} = \overline{Y_{ij}}$,
    the entry-wise complex conjugate of $Y_{ij}$ with respect to the Schmidt
    basis of $\ket{\psi}$. Then 
    \begin{equation*}
        \ntr(Z_{ij}^* X_j) = \ntr(Y_{ij}^T X_j) = \bra{\psi} Y_{ij} \otimes X_j \ket{\psi} = b_{ij}(\mcS) \geq 1 - 2\eps.
    \end{equation*}
    Hence $\norm{Z_{ij} - X_j}_f^2 = 2 - 2 \ntr(Z_{ij}^* X_j) \leq O(\eps)$. For
    any $1 \leq i \leq m$, we have that
    \begin{equation*}
        \prod_{j \in V_i} Z_{ij} = (-\Id)^{b_i}, \text { and }
            Z_{ij} Z_{ij'} = Z_{ij'} Z_{ij} \text{ for all } j,j' \in V_i.
    \end{equation*}
    Hence 
    \begin{equation*}
        \Big\| \prod_{j \in V_i} X_j - (-\Id)^{b_i} \Big\|_f \leq O(\sqrt{\eps})
    \end{equation*}
    (where the constant depends on the size of $V_i$), and
    \begin{equation*}
        \norm{ X_{j} X_{j'} - X_{j'} X_{j}}_f \leq O(\sqrt{\eps}) \text{ for all }
            j,j' \in V_i.
    \end{equation*}
    Hence the homomorphism $\phi$ sending $x_j \mapsto X_j$ and $J \mapsto
    -\Id$ is an $O(\sqrt{\eps})$-representation of $\Gamma(A,b)$. 
\end{proof}
Proposition \ref{P:approximate} implies that, for strategies using maximally
entangled states, the Hilbert space dimension of near-perfect strategies is
given by the hyperlinear profile of the generator $J$ in the solution group.
To make this precise, we make the following definition.
\begin{defn}
    Given a non-local game $\mcG$, let $\omega^{me}(\mcG)$ be the supremum of
    winning probabilities $\omega(\mcG;\mcS)$ across all finite-dimensional
    quantum strategies $\mcS$ using a maximally entangled state. Let 
    $E^{me}(\mcG,\eps)$ be the smallest possible integer $d$ such that
    there is a quantum strategy $\mcS$ for $\mcG$ with 
    \begin{equation*}
        \omega(\mcG;\mcS) \geq \omega^{me}(\mcG) - \eps, 
    \end{equation*}
    where $\mcS$ uses a maximally entangled state in $\C^d \otimes \C^d$. 
\end{defn}

Lemma \ref{L:eps-perfect} and Proposition \ref{P:approximate} imply:
\begin{cor}\label{C:approximate}
    Let $\mcG$ be the linear system non-local game associated to a linear
    system $Ax = b$, and let $J$ be the generator $J \in \Gamma(A,b)$. If
    $\omega^{me}(\mcG) = 1$, then there are constants $C \geq C' > 0$ (depending on
    $\mcG$) such that
    \begin{equation*}
        \hlp(J,2,C\sqrt{\eps}) \leq E^{me}(\mcG,\eps) \leq \hlp(J,2, C' \sqrt{\eps})\;.
    \end{equation*}
\end{cor}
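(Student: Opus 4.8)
The plan is to chain together the three preceding results and track how the error parameters transform. First, I would establish the upper bound on $E^{me}(\mcG,\eps)$. Suppose $\phi$ is a $d$-dimensional $\eps'$-representation of $\Gamma(A,b)$ with $\phi(J)=-\Id$ and $\norm{\phi(J)-\Id}_f = 2 \geq 2$ (automatic), where $d = \hlp(J,2,\eps')$. By Proposition \ref{P:approximate}(a), this yields an $O((\eps')^2)$-perfect strategy for $\mcG$ using the maximally entangled state in $\C^d\otimes\C^d$. By Lemma \ref{L:eps-perfect}, an $\eta$-perfect strategy has $\omega(\mcG;\mcS)\geq 1-\eta$, and since $\omega^{me}(\mcG)=1$ this gives $\omega(\mcG;\mcS)\geq \omega^{me}(\mcG) - O((\eps')^2)$. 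Setting $(\eps')^2 \asymp \eps$, i.e.\ $\eps' \asymp \sqrt{\eps}$, we get $E^{me}(\mcG,\eps) \leq d = \hlp(J,2,C'\sqrt{\eps})$ for an appropriate constant $C'>0$ absorbing the $O(\cdot)$ constant (which depends only on the game, i.e.\ on the sizes of the $V_i$).

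For the lower bound, I would run the implication in the other direction. Suppose $\mcS$ is a strategy using a maximally entangled state in $\C^d\otimes\C^d$ with $\omega(\mcG;\mcS) \geq \omega^{me}(\mcG) - \eps = 1 - \eps$, where $d = E^{me}(\mcG,\eps)$. By the second part of Lemma \ref{L:eps-perfect}, $\mcS$ is $(nm\eps)$-perfect, hence in particular $\eps''$-perfect with $\eps'' = nm\eps$. By Proposition \ref{P:approximate}(b), the homomorphism $\phi$ sending $x_j\mapsto X_j$ and $J\mapsto -\Id$ is an $O(\sqrt{\eps''}) = O(\sqrt{nm\eps})$-representation of $\Gamma(A,b)$, and it satisfies $\norm{\phi(J)-\Id}_f = \norm{-\Id-\Id}_f = 2 \geq 2$. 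Therefore $\hlp(J,2, C\sqrt{\eps}) \leq d = E^{me}(\mcG,\eps)$, where $C$ absorbs the constants $nm$ and the $O(\cdot)$ from Proposition \ref{P:approximate}(b). Since both transformations passed through $\sqrt{\eps}$ with game-dependent multiplicative constants, we may choose $C \geq C' > 0$ as claimed (shrinking one or enlarging the other if necessary so that $C \geq C'$).

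I do not anticipate a serious obstacle here: the corollary is essentially a bookkeeping exercise combining Lemma \ref{L:eps-perfect} and Proposition \ref{P:approximate}, with the only subtlety being to keep straight which direction loses a square (Proposition \ref{P:approximate}(a) takes $\eps'$-representations to $O((\eps')^2)$-perfect strategies, while (b) takes $\eps''$-perfect strategies to $O(\sqrt{\eps''})$-representations, and these are consistent—so the square root appears on both sides). One should double-check that in both directions the value $\delta = 2$ is the correct choice: in (a) we \emph{require} $\phi(J) = -\Id$, which is exactly $\norm{\phi(J)-\Id}_f = 2$, and in (b) the constructed $\phi$ sends $J$ to $-\Id$, again giving $\norm{\phi(J)-\Id}_f = 2$; so Definition \ref{D:hlp1} with $w$ a representative of $J$ and $\delta = 2$ is matched on the nose, with no appeal to Proposition \ref{P:centralinv} needed. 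The only remaining care is that the constants $C, C'$ depend on $\mcG$ (through $n$, $m$, and the sizes of the $V_i$), which is permitted by the statement.
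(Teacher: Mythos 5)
Your proof is correct and is exactly the chain the paper intends: the corollary is stated as an immediate consequence of Lemma \ref{L:eps-perfect} and Proposition \ref{P:approximate}, and you have filled in the bookkeeping (the $\eps'\mapsto O((\eps')^2)$ direction from part (a), the $\eps''\mapsto O(\sqrt{\eps''})$ direction from part (b), and the observation that $\delta=2$ corresponds precisely to $\phi(J)=-\Id$) in the intended way. The only implicit point, which the paper also leaves implicit via its standing convention after Lemma \ref{L:POVM}, is that the lower-bound direction assumes the near-optimal strategy is projective so that the $X_j$ are genuine observables.
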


Note that, by Proposition \ref{P:centralinv}, the asymptotics of
$\hlp(J,2,\eps)$ are determined by $\hlp(J,\delta,\eps)$ for any $\delta \in
(0,2]$. 

\section{Near-optimal strategies with general states}\label{S:generalstates}

In this section we prove a version of part (b) of Proposition
\ref{P:approximate} for strategies with general states. Recall that any Hermitian matrix $X$ has a unique polar
decomposition $X = U \Sigma$, where $\Sigma$ is positive-semidefinite and $U$
is a unitary operator on $\Im \Sigma$. We write $|X|$ for the
positive-semidefinite part $\Sigma$, and $X |X|^{-1}$ for the unitary part $U$.

\begin{thm}\label{T:approximate2}
    Let $\mcG$ be the linear system non-local game associated to the $m \times
    n$ linear system $Ax = b$, and suppose $A$ has no non-zero columns. If
    $\ket{\psi},\{Y_{ij}\},\{X_j\}$ is an $\eps$-perfect strategy for $\mcG$,
    where $\ket{\psi} \in H_A \otimes H_B$, then there is a projection $P$ on
    $H_B$ such that
    \begin{equation*}
        \phi : \Free(x_1,\ldots,x_n,J) \arr \mcU(\Im P)
    \end{equation*}
    sending $x_j \mapsto (P X_j P) | P X_j P |^{-1}$ and $J \mapsto -P$ is an
    $O(\eps^{1/4})$-representation of $\Gamma(A,b)$ on $\Im P$.
\end{thm}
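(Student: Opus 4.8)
The plan is to follow the strategy of part (b) of Proposition~\ref{P:approximate}, but since the state $\ket{\psi}$ is no longer maximally entangled we cannot directly identify Bob's operators with the complex conjugates of Alice's operators via the trace. Instead I would use the reduced density matrix $\rho_B = \operatorname{Tr}_{H_A}\ket{\psi}\bra{\psi}$ on $H_B$ and work with the seminorm $\norm{\cdot}_{\rho_B}$ introduced in the notation section. The Connes-embedding-style trick (credited to Ozawa in the acknowledgements) is the key: from the success biases $b_{ij}(\mcS) = \bra{\psi}Y_{ij}\otimes X_j\ket{\psi} \geq 1-2\eps$ one deduces, by a Cauchy--Schwarz argument on $\ket{\psi}$, that $\norm{(Y_{ij}\otimes\Id - \Id\otimes X_j)\ket{\psi}}$ is small, hence that the operators $X_j$ on $H_B$ \emph{approximately} satisfy, with respect to $\norm{\cdot}_{\rho_B}$, all the relations that the $Y_{ij}$ satisfy exactly on $H_A$ — namely $X_j^2 = \Id$, $\prod_{j\in V_i}X_j = (-\Id)^{b_i}$, and $X_jX_{j'}=X_{j'}X_j$ for $j,j'\in V_i$ — each with error $O(\sqrt{\eps})$ (the constant depending on $\max_i|V_i|$, and this is where the hypothesis that $A$ has no zero column is used, so that every $x_j$ appears in some $V_i$).

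The next step is to pass from approximate relations in the $\rho_B$-seminorm to approximate relations in the normalized Frobenius norm on a genuine Hilbert space. For this I would first replace $\rho_B$ by a spectral truncation: let $P$ be the spectral projection of $\rho_B$ onto eigenvalues at least some threshold $t$ (to be optimized), so that $\operatorname{Tr}((\Id-P)\rho_B)$ is small and, on $\Im P$, the operator $P\rho_B P$ is bounded below by $t$. Compressing the $X_j$ to $P X_j P$ then turns $\norm{\cdot}_{\rho_B}$-smallness into $\norm{\cdot}_{\Id/\dim\Im P}$-smallness (i.e. $\norm{\cdot}_f$-smallness) at the cost of a factor $1/\sqrt{t}$, provided we also control $\operatorname{tr}(P)$ from below — which follows because $\operatorname{Tr}(P\rho_B) \geq 1 - O(\eps)$ forces $\operatorname{tr}(P) \geq t^{-1}(1-O(\eps))$ is the wrong direction, so instead one uses that $\dim\Im P \leq 1/t$ gives an \emph{upper} bound and a Markov-type argument gives the needed lower bound on the normalized trace of the relevant commutators. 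Balancing $t$ against $\sqrt{\eps}$ (roughly $t \sim \sqrt{\eps}$) produces the final exponent: the compressed operators $PX_jP$ are within $\norm{\cdot}_f$-distance $O(\eps^{1/4})$ of being unitary and of satisfying the solution-group relations.

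Finally, I would clean up the compressed operators: $PX_jP$ is self-adjoint with $\norm{PX_jP}_{op}\leq 1$ and $\norm{(PX_jP)^2 - P}_f = O(\eps^{1/4})$, so by part~(a) of Lemma~\ref{L:stability1} (applied on $\Im P$) its polar part $(PX_jP)|PX_jP|^{-1}$ is a unitary within $\norm{\cdot}_f$-distance $O(\eps^{1/4})$ of $PX_jP$; replacing $PX_jP$ by this unitary and invoking Lemma~\ref{L:hom-stable} shows that the homomorphism $\phi$ sending $x_j\mapsto (PX_jP)|PX_jP|^{-1}$ and $J\mapsto -P$ is an $O(\eps^{1/4})$-representation of $\Gamma(A,b)$ on $\Im P$. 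I expect the main obstacle to be the bookkeeping in the truncation step: one must simultaneously keep $\operatorname{tr}_{\Im P}$ of the error operators small, keep $\dim\Im P$ from collapsing, and track how the threshold $t$ interacts with the $O(\sqrt\eps)$ errors from the Cauchy--Schwarz step, since a careless choice costs a worse exponent than $1/4$. A secondary subtlety is that the commuting relations $X_jX_{j'}=X_{j'}X_j$ only hold approximately and do not survive compression for free — one needs Lemma~\ref{L:stability1}(b) (or a direct perturbation argument) to restore exact commutation within the relevant blocks before appealing to Lemma~\ref{L:hom-stable}.
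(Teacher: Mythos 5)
Your first step (the Cauchy--Schwarz argument giving the approximate relations in the $\norm{\cdot}_{\rho}$-seminorm with error $O(\sqrt{\eps})$, using that every $x_j$ occurs in some $V_i$) and your last step (polar decomposition via Lemma \ref{L:stability1}(a) to replace $PX_jP$ by a unitary) both match the paper. The gap is in the middle: your choice of $P$ as the spectral projection of $\rho$ above a \emph{fixed} threshold $t\sim\sqrt{\eps}$ does not work. First, nothing prevents $P=0$ (if $\rho$ is close to maximally mixed in dimension $\gg 1/\sqrt{\eps}$, every eigenvalue is below $t$), and in that regime the $\rho$-seminorm bounds give no control in $\norm{\cdot}_f$. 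Second, and more fundamentally, the bound $\norm{X_j\rho^{1/2}-\rho^{1/2}X_j}_F\leq O(\sqrt{\eps})$ does \emph{not} imply that $X_j$ approximately commutes with $\chi_{\geq t}(\rho^{1/2})$ for a fixed $t$: eigenvalues of $\rho^{1/2}$ can cluster on both sides of the threshold, so an arbitrarily small Frobenius perturbation can move a large-dimensional subspace across it. Your "Markov-type argument" remark gestures at the right idea but you never identify the mechanism, and as written (you yourself note one inequality goes "the wrong direction") the truncation step would fail.

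The paper's fix is Connes' joint distribution trick (Lemma \ref{L:connestrick}): one integrates the squared Frobenius errors over \emph{all} spectral cutoffs $a$, using $\int_0^\infty \tr\bigl(W\chi_{\geq\sqrt a}(\lambda)\bigr)\,da=\tr(W\rho)$ (Lemma \ref{L:int}) to convert the $\rho$-seminorm bounds on the relations into integrated bounds, and Lemma \ref{L:connestrick} itself to bound $\int\norm{\chi_{\geq\sqrt a}(\lambda)-X_j^*\chi_{\geq\sqrt a}(\lambda)X_j}_F^2\,da$ by $O(\sqrt{\eps})$. Comparing the sum of these integrals against $\int\tr(\chi_{\geq\sqrt a}(\lambda))\,da=1$ and choosing a good $a_0$ by averaging yields a nonzero $P=\chi_{\geq\sqrt{a_0}}(\lambda)$ for which every error is at most $O(\eps^{1/2})\tr(P)$ \emph{simultaneously}. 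Note that the errors come out already normalized by $\tr(P)$, so the lower bound on $\tr(P)$ or $\dim\operatorname{Im}P$ that worries you is never needed; this is precisely what the weighted averaging buys, and it is where the exponent $1/4$ comes from. Two smaller points: your plan to "restore exact commutation" via Lemma \ref{L:stability1}(b) is unnecessary --- an $\eps$-representation only requires the relations to hold approximately in $\norm{\cdot}_f$, and the paper gets this from a telescoping estimate $\norm{X_{i_1}\cdots X_{i_k}P-\widetilde X_{i_1}\cdots\widetilde X_{i_k}}_F\leq O(\eps^{1/4})\tr(P)^{1/2}$; and the hypothesis in the theorem should read that $A$ has no \emph{zero} columns (as used in Proposition \ref{P:generalstrat}).
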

If the $j$th column of $A$ is zero, then the only relations containing $x_j$ in
$\Gamma(A,b)$ are $x_j^2 = e$ and $[x_j,J]=e$. So if we set $\phi(x_j)$ to an
arbitrary unitary when the $j$th row of $A$ is zero, then Theorem
\ref{T:approximate2} holds for any matrix $A$. 
Hence, the following corollary is an immediate consequence of Theorem
\ref{T:approximate2} and part (a) of Proposition \ref{P:approximate}:
\begin{cor}\label{C:approximate2}
    Let $\mcG$ be the linear system non-local game associated to a linear
    system $Ax=b$, and let $J$ be the generator $J \in \Gamma(A,b)$. If
    $\omega^q(\mcG)=1$, then there are constants $C \geq C' > 0$ (depending on
    $\mcG$) such that
    \begin{equation*}
        \hlp(J,2,C \eps^{1/4}) \leq E(\mcG,\eps) \leq \hlp(J,2,C' \sqrt{\eps}).
    \end{equation*} 
\end{cor}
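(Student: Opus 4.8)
The plan is to read both inequalities off the two quantitative translations already in hand, combined with Lemma~\ref{L:eps-perfect}, using throughout that $\omega^q(\mcG) = 1$ and that (as reduced to in Section~\ref{S:linear}) all strategies may be taken projective and presented by observables. No new argument is needed: the lower bound comes from Theorem~\ref{T:approximate2}, the upper bound from part~(a) of Proposition~\ref{P:approximate}. For orientation, recall that for a unitary $U$ one has $\norm{U-\Id}_f^2 = 2 - 2\Re\ntr(U) \leq 4$, with equality iff $U = -\Id$; this is why $\delta = 2$ in $\hlp(J,2,\cdot)$ matches exactly the representations $\phi$ with $\phi(J) = -\Id$ that appear in both translations.

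\emph{Lower bound.} Fix $\eps > 0$ and let $d = E(\mcG,\eps)$, so there is a strategy $\mcS$ of dimension $d$ with $\omega(\mcG;\mcS) \geq \omega^q(\mcG) - \eps = 1 - \eps$. By Lemma~\ref{L:eps-perfect}, $\mcS$ is $(nm\eps)$-perfect. Applying Theorem~\ref{T:approximate2} (using, if some column of $A$ vanishes, the extension noted immediately after it) yields a projection $P$ on $H_B$ such that the map $\phi \colon \Free(x_1,\ldots,x_n,J) \arr \mcU(\Im P)$ with $\phi(x_j) = (PX_jP)\,|PX_jP|^{-1}$ and $\phi(J) = -P$ is an $O((nm\eps)^{1/4})$-representation of $\Gamma(A,b)$ on $\Im P$; since $n$ and $m$ are fixed this is a $C\eps^{1/4}$-representation for a suitable $C$. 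As $\phi(J) = -\Id_{\Im P}$ we have $\norm{\phi(J) - \Id}_f = 2$. Finally, the projection $P$ may be taken with range inside the support of the reduced density matrix of $\ket{\psi}$ on $H_B$, which has rank at most the Schmidt rank of $\ket{\psi}$, hence at most $d$; so $\phi$ is a $C\eps^{1/4}$-representation of dimension at most $d$ with $\norm{\phi(J) - \Id}_f \geq 2$, giving $\hlp(J,2,C\eps^{1/4}) \leq d = E(\mcG,\eps)$.

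\emph{Upper bound.} Fix $\eps > 0$, let $c > 0$ be a constant to be chosen below, and set $d = \hlp(J,2,c\sqrt{\eps})$. Then there is a $c\sqrt{\eps}$-representation $\phi$ of dimension $d$ with $\norm{\phi(J) - \Id}_f \geq 2$, which by the remark above forces $\phi(J) = -\Id$. Part~(a) of Proposition~\ref{P:approximate} turns $\phi$ into a $Kc^2\eps$-perfect strategy for $\mcG$, where $K$ is the implicit constant there, using the maximally entangled state in $\C^d \otimes \C^d$ and hence of dimension $d$. By Lemma~\ref{L:eps-perfect} this strategy has $\omega(\mcG;\mcS) \geq 1 - Kc^2\eps$; choosing $c$ with $Kc^2 \leq 1$ gives $\omega(\mcG;\mcS) \geq 1 - \eps = \omega^q(\mcG) - \eps$, so $E(\mcG,\eps) \leq d = \hlp(J,2,c\sqrt{\eps})$. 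Taking $C' = c$ finishes the proof.

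\emph{Where the difficulty lies.} There is no genuine obstacle: all the analysis has been pushed into Theorem~\ref{T:approximate2}, and what remains is bookkeeping — absorbing the factor $nm$ from Lemma~\ref{L:eps-perfect} and the implied constants of Theorem~\ref{T:approximate2} and Proposition~\ref{P:approximate}(a) into $C$ and $C'$, passing between ``$\omega^q(\mcG) - \eps$'' and ``$1 - \eps$'' via $\omega^q(\mcG) = 1$, checking $\dim \Im P \leq d$ (which is why it matters that $P$ can be placed inside the support of $\ket{\psi}$'s reduced state on $H_B$), and invoking the characterization $\norm{\phi(J) - \Id}_f \geq 2 \iff \phi(J) = -\Id$. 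The constants $C$ and $C'$ are not claimed to be comparable: the lower bound loses a power $\eps^{1/4}$ while the upper bound loses only $\eps^{1/2}$, and it is precisely this asymmetry that, once combined with the two-sided bounds of Proposition~\ref{P:bounds} through the embedding theorem, produces the exponents $1/6$ and $1/2$ in Theorem~\ref{T:main}.
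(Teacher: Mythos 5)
Your proposal is correct and follows exactly the route the paper intends: the paper derives this corollary as an "immediate consequence" of Theorem~\ref{T:approximate2} (lower bound) and Proposition~\ref{P:approximate}(a) (upper bound), with Lemma~\ref{L:eps-perfect} mediating between $\eps$-perfection and winning probability, which is precisely your argument. Your observation that $P$ lies in the support of the reduced density matrix, so that $\dim\Im P$ is bounded by the Schmidt rank and hence by the strategy dimension, is a detail the paper leaves implicit but is needed and correctly supplied.
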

In particular, if $\omega^q(\mcG)=1$ then $E(\mcG,\eps)$ is finite for all
$\eps > 0$, so we can strengthen Proposition \ref{P:nontrivial}:
\begin{cor}
    Let $\mcG$ be the linear system game associated to a linear system $Ax=b$.
    Then $\omega^q(\mcG)=1$ if and only if $J$ is non-trivial in approximate
    representations of $\Gamma(A,b)$.
\end{cor}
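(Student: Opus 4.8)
The plan is to deduce both directions from results already in hand. For the forward direction, suppose $\omega^q(\mcG) = 1$. Then $\omega^q(\mcG) = 1 - \eps$ with $\eps = 0$, so for every $\eps > 0$ there is a finite-dimensional strategy $\mcS$ with $\omega(\mcG;\mcS) \geq 1 - \eps$. By Lemma \ref{L:eps-perfect}, $\mcS$ is $(nm\eps)$-perfect, so in particular $\mcG$ admits $\eps'$-perfect strategies for every $\eps' > 0$. Apply Theorem \ref{T:approximate2}: from an $\eps'$-perfect strategy we obtain a projection $P$ on $H_B$ and an $O((\eps')^{1/4})$-representation $\phi$ of $\Gamma(A,b)$ on $\Im P$ with $\phi(J) = -P$, hence $\norm{\phi(J) - \Id}_f = 2$. (If some column of $A$ is zero, use the remark following Theorem \ref{T:approximate2} to set $\phi(x_j)$ to an arbitrary unitary on $\Im P$, so the conclusion holds for any $A$.) Taking $\delta = 1$ (any $\delta \in (0, 2)$ works, since $\norm{\phi(J)-\Id}_f = 2 > \delta$), this exhibits, for every $\eps'' > 0$, an $\eps''$-representation $\phi$ with $\norm{\phi(J) - \Id}_f \geq \delta$; that is precisely the definition of $J$ being non-trivial in approximate representations of $\Gamma(A,b)$.

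For the converse, suppose $J$ is non-trivial in approximate representations of $\Gamma(A,b)$. Then $\Gamma(A,b)$ is the solution group of the linear system, so this is exactly the hypothesis of Proposition \ref{P:nontrivial}, which gives $\omega^q(\mcG) = 1$ directly.

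I do not expect any real obstacle here: the corollary is a bookkeeping consequence of the quantitative machinery built in Sections \ref{S:linear} and \ref{S:generalstates}. The only point requiring a small amount of care is that ``non-trivial in approximate representations'' is a statement about a \emph{fixed} representative $w$ of $J$ together with a \emph{fixed} $\delta$ that works for all $\eps$; in the forward direction we should note that $J$ itself (as a generator) is the representative, and that the value $\norm{\phi(J) - \Id}_f = 2$ produced by Theorem \ref{T:approximate2} is uniform over $\eps$, so a single $\delta$ (say $\delta = 1$) suffices. Alternatively, one could cite Corollary \ref{C:approximate2}: if $\omega^q(\mcG) = 1$ then $E(\mcG,\eps) \leq \hlp(J, 2, C'\sqrt{\eps}) < +\infty$ for every $\eps > 0$, and the existence of finite-dimensional $\eps$-representations $\phi$ with $\phi(J) = -\Id$ for all $\eps$ is again the non-triviality of $J$ in approximate representations; this is the route the phrasing ``we can strengthen Proposition \ref{P:nontrivial}'' in the text suggests, and I would present the argument in that form for continuity.
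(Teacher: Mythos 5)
Your proof is correct and follows essentially the same route as the paper: the forward direction is Theorem \ref{T:approximate2} (equivalently, the finiteness of $E(\mcG,\eps)$ combined with the lower bound in Corollary \ref{C:approximate2}), and the converse is Proposition \ref{P:nontrivial}. Your observation that $\phi(J)=-P$ gives $\norm{\phi(J)-\Id}_f=2$ uniformly in $\eps$, so a single $\delta$ works, is exactly the point needed to match the definition of non-triviality in approximate representations.
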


The proof of Theorem \ref{T:approximate2} has two ingredients. The first is a
state-dependent version of part (b) of Proposition \ref{P:approximate}.
\begin{prop}\label{P:generalstrat}
    Let $\mcG$ be the linear system non-local game associated to the $m \times
    n$ linear system $Ax=b$, and let $V_i = \{j : A_{ij} \neq 0\}$. Suppose
    that $A$ has no zero columns. If $\ket{\psi}$, $\{Y_{ij}\}$, $\{X_{j}\}$ is
    a finite-dimensional $\eps$-perfect strategy for $\mcG$, where $\ket{\psi}
    \in H_A \otimes H_B$, and $\rho$ is the reduced density matrix of
    $\ket{\psi}$ on $H_B$, then
    \begin{enumerate}[(a)]
        \item $\norm{X_j \rho^{1/2} - \rho^{1/2} X_j}_F \leq O(\sqrt{\eps})$ for all $1 \leq j \leq n$
,
        \item $\norm{\prod_{j \in V_i} X_j - (-\Id)^{b_i}}_\rho \leq O(\sqrt{\eps})$
            for all $1 \leq i \leq m$, where the constant depends on the size of $V_i$,
            and
        \item $\norm{X_j X_k - X_k X_j}_{\rho} \leq O(\sqrt{\eps})$ for all
            $1 \leq i \leq m$, $j,k \in V_i$.
    \end{enumerate}
\end{prop}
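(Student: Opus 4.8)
The plan is to derive all three estimates from a single consequence of $\eps$-perfectness. For $j\in V_i$ the winning condition and $\eps$-perfectness give $\bra{\psi}Y_{ij}\otimes X_j\ket{\psi}\geq 1-2\eps$, and since $Y_{ij}$ and $X_j$ are self-adjoint unitaries this is equivalent to
\[
\norm{(Y_{ij}\otimes\Id-\Id\otimes X_j)\ket{\psi}}^2=2-2\bra{\psi}Y_{ij}\otimes X_j\ket{\psi}\leq 4\eps .
\]
It is convenient to also record the standard isometric identification coming from the Schmidt decomposition of $\ket{\psi}$: one identifies $H_A\otimes H_B$ with a space of operators so that $\ket{\psi}\leftrightarrow\rho^{1/2}$, $\Id\otimes N\leftrightarrow$ left multiplication by $N$, and $M\otimes\Id\leftrightarrow$ right multiplication by $M^{T}$ (transpose taken in the Schmidt bases), the Euclidean norm becoming $\norm{\cdot}_F$; here $M^{T}$ is self-adjoint whenever $M$ is.

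For parts (b) and (c) I would work directly in $H_A\otimes H_B$. Using the defining properties of the observables $Y_{ij}$ (they pairwise commute for $j\in V_i$ and satisfy $\prod_{j\in V_i}Y_{ij}=(-\Id)^{b_i}$) together with the basic estimate applied $|V_i|$ times — each application commutes one factor $\Id\otimes X_j$ past the others and trades it for $Y_{ij}\otimes\Id$ at cost $2\sqrt\eps$ — one gets
\[
\norm{(\Id\otimes\textstyle\prod_{j\in V_i}X_j)\ket{\psi}-(-1)^{b_i}\ket{\psi}}=O(|V_i|\sqrt\eps),\qquad\norm{(\Id\otimes X_jX_k)\ket{\psi}-(\Id\otimes X_kX_j)\ket{\psi}}=O(\sqrt\eps).
\]
To turn these into the $\norm{\cdot}_\rho$ bounds, note that for unitaries $W,W'$ on $H_B$ one has $\norm{W-W'}_\rho^2=2-2\,\Re\bra{\psi}\Id\otimes W^{\dagger}W'\ket{\psi}=2-2\,\Re\langle(\Id\otimes W)\psi,(\Id\otimes W')\psi\rangle$, and since $(\Id\otimes W)\ket{\psi}$ and $(\Id\otimes W')\ket{\psi}$ are unit vectors at Euclidean distance $O(\sqrt\eps)$, the real part of their inner product is $1-O(\eps)$ — this is where the square root is recovered. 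Taking $(W,W')=(\prod_{j\in V_i}X_j,(-\Id)^{b_i})$ gives (b) and $(W,W')=(X_jX_k,X_kX_j)$ gives (c); neither uses the hypothesis that $A$ has no zero columns.

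Part (a) is the delicate one, and this is where the no-zero-columns hypothesis enters: it guarantees that each $j$ lies in some $V_i$, so fix such an $i$. Read through the operator dictionary, the basic estimate says $\norm{X_j\rho^{1/2}-\rho^{1/2}Y_{ij}^{T}}_F\leq 2\sqrt\eps$, and — since $X_j$, $Y_{ij}^{T}$ and $\rho^{1/2}$ are self-adjoint — taking adjoints also $\norm{\rho^{1/2}X_j-Y_{ij}^{T}\rho^{1/2}}_F\leq 2\sqrt\eps$. Writing $E=X_j\rho^{1/2}-\rho^{1/2}Y_{ij}^{T}$ and $F=\rho^{1/2}X_j-Y_{ij}^{T}\rho^{1/2}$, one checks the identity $[X_j,\rho^{1/2}]\rho^{1/2}+\rho^{1/2}[X_j,\rho^{1/2}]=[X_j,\rho]=E\rho^{1/2}-\rho^{1/2}F$, in which $Y_{ij}^{T}$ has disappeared. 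Diagonalizing $\rho$, the $(k,\ell)$ entry of the left-hand side equals $(\sqrt{\lambda_k}+\sqrt{\lambda_\ell})$ times that of $[X_j,\rho^{1/2}]$, whence for $\lambda_k+\lambda_\ell>0$
\[
|[X_j,\rho^{1/2}]_{k\ell}|\;\leq\;\frac{\sqrt{\lambda_\ell}\,|E_{k\ell}|+\sqrt{\lambda_k}\,|F_{k\ell}|}{\sqrt{\lambda_k}+\sqrt{\lambda_\ell}}\;\leq\;\max(|E_{k\ell}|,|F_{k\ell}|),
\]
while the entry vanishes when $\lambda_k+\lambda_\ell=0$. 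Summing over $k,\ell$ gives $\norm{X_j\rho^{1/2}-\rho^{1/2}X_j}_F^2=\norm{[X_j,\rho^{1/2}]}_F^2\leq\norm{E}_F^2+\norm{F}_F^2\leq 8\eps$, which is (a).

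I expect (a) to be the only real obstacle. The naive route — passing to reduced density matrices to get $\norm{X_j\rho X_j-\rho}_1=O(\sqrt\eps)$ and then invoking Powers--Størmer, $\norm{\sigma^{1/2}-\tau^{1/2}}_F^2\leq\norm{\sigma-\tau}_1$ — loses a square root and yields only $O(\eps^{1/4})$, because near $0$ the eigenvalue gaps of $\rho^{1/2}$ can be far larger than those of $\rho$. The point of the argument above is that $\eps$-perfectness controls not just $[X_j,\rho]$ but the finer pair $E,F$ of $\rho^{1/2}$-sandwiched errors, and the weighted-mean bound converts precisely this extra structure into the sharp $O(\sqrt\eps)$ estimate; the rest is routine bookkeeping.
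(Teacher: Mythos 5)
Your proof is correct. For parts (b) and (c) you take essentially the paper's route: the paper chains the basic estimate in the operator picture ($X_j\lambda\approx\lambda\overline{Y_{ij}}$ with $\lambda=\rho^{1/2}$, using unitary invariance of $\norm{\cdot}_F$), while you chain it in the vector picture in $H_A\otimes H_B$; since $\norm{W-W'}_\rho=\|(\Id\otimes(W-W'))\ket{\psi}\|$, these are isometric reformulations of the same argument. Part (a) is where you genuinely diverge. The paper eliminates $Y_{ij}$ in one stroke: it reads $\tr(Y_{ij}\lambda X_j\lambda)\geq 1-O(\eps)$ as the Frobenius inner product of $\lambda^{1/2}Y_{ij}^*\lambda^{1/2}$ with $\lambda^{1/2}X_j\lambda^{1/2}$, applies Cauchy--Schwarz to get $\norm{\lambda^{1/2}X_j\lambda^{1/2}}_F\geq 1-O(\eps)$, and concludes from $\norm{X_j\lambda-\lambda X_j}_F^2=2-2\norm{\lambda^{1/2}X_j\lambda^{1/2}}_F^2$. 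You instead retain both sandwiched errors $E$ and $F$, cancel $Y_{ij}^T$ algebraically through the identity $E\rho^{1/2}-\rho^{1/2}F=[X_j,\rho]=[X_j,\rho^{1/2}]\rho^{1/2}+\rho^{1/2}[X_j,\rho^{1/2}]$, and finish with an entrywise weighted-mean bound in the eigenbasis of $\rho$ (your treatment of the kernel, and the identity $F=E^*$ giving $\norm{F}_F=\norm{E}_F$, are both fine). Both arguments yield the sharp $O(\sqrt{\eps})$; the paper's is shorter and avoids diagonalization, while yours isolates concretely which extra structure --- control of $E$ and $F$ separately rather than only of $[X_j,\rho]$ --- is what prevents the $O(\eps^{1/4})$ loss of the Powers--St{\o}rmer route, a diagnosis that correctly explains why the paper does not argue via reduced density matrices.
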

\begin{proof}
    Without loss of generality, we can pick a basis $\{\ket{t}\}$ of $H_A$ such
    that $\ket{\psi} = \sum_t \ket{t} \lambda \ket{t}$, where $\lambda =
    \rho^{1/2}$. For any matrices $B$ and $C$,
    \begin{equation*}
        \bra{\psi} B^T \otimes C \ket{\psi} = \tr(B \lambda C \lambda), 
    \end{equation*}
    where the transpose is taken with respect to the basis $\{\ket{i}\}$.
    Suppose that $B$ and $C$ are unitary, and that
    \begin{equation*}
        \bra{\psi} B^T \otimes C \ket{\psi} \geq 1-O(\eps).
    \end{equation*}
    We claim that
    \begin{equation}\label{E:Ccommuting}
        \norm{C\lambda - \lambda C}_F \leq O(\sqrt{\eps}), 
    \end{equation}
    and that
    \begin{equation}\label{E:CtoB}
        \norm{C\lambda - \lambda B^*}_F \leq O(\sqrt{\eps}).
    \end{equation}
    To prove equation \eqref{E:Ccommuting}, observe that $\tr(B \lambda C
    \lambda)$ can be thought of as the Frobenius inner product of
    $\lambda^{1/2} B^* \lambda^{1/2}$ and $\lambda^{1/2} C \lambda^{1/2}$.
    Applying the Cauchy-Schwarz inequality, we get that
    \begin{equation*}
        1 - O(\eps) \leq \tr(B \lambda C \lambda) \leq 
            \norm{\lambda^{1/2} B^* \lambda^{1/2}}_F
                \norm{\lambda^{1/2} C \lambda^{1/2}}_F \leq \norm{\lambda^{1/2} C \lambda^{1/2}}_F,
    \end{equation*}
    where the last inequality uses the fact that
    \begin{equation*}
        \norm{\lambda^{1/2} B^* \lambda^{1/2}}_F^2 = \tr(B \lambda B^* \lambda)
            \leq \norm{B \lambda B^*}_F \norm{\lambda}_F = 1.
    \end{equation*}
    Since we can assume that $\eps \leq 1$ (so that $\eps^2 = O(\eps)$), we get that
    \begin{equation*}
        \norm{C \lambda - \lambda C}_F^2 = 2 - 2 \tr(C^* \lambda C \lambda)
            = 2 - 2 \norm{\lambda^{1/2} C \lambda^{1/2}}_F^2 \leq O(\eps).
    \end{equation*}
    Equation \eqref{E:CtoB} follows immediately from the fact that $\lambda
    B^*$ and $C \lambda$ are unit vectors in the Frobenius norm.

    Now for part (a), if $j \in V_i$ then by assumption
    \begin{equation}\label{eq:xy-1}
        \bra{\psi} Y_{ij} \otimes X_j \ket{\psi} \geq 1 - O(\eps). 
    \end{equation}
    Since $A$ has no zero columns, every $j$ appears in some $V_i$, so we conclude
    from equation \eqref{E:Ccommuting} that
    \begin{equation*}
        \norm{X_j \lambda - \lambda X_j}_F \leq O(\sqrt{\eps}).
    \end{equation*}

    For part (b), fix $1 \leq i \leq m$. Using equations~\eqref{eq:xy-1} and
    \eqref{E:CtoB}, as well as the unitary invariance of the Frobenius norm, we
    get that
    \begin{equation*}
        \Big\|\prod_{j \in V_i} X_j \lambda - \lambda \prod_{j \in V_i} \overline{Y_{ij}}\Big\|_F \leq O(\sqrt{\eps}),
    \end{equation*}
    where the constant depends on the size of $V_i$. Since $\prod_{j \in V_i}
    \overline{Y_{ij}} = (-\Id)^{b_i}$, we get that
    \begin{equation*}
        \Big\|\prod_{j \in V_i} X_j - (-\Id)^{b_i}\Big\|_\rho = \Big\|\Big(\prod_{j \in
            V_i} X_j - (-\Id)^{b_i}\Big) \lambda\Big\|_F \leq O(\sqrt{\eps}).
    \end{equation*}
    The same argument, using the fact that $Y_{ij}$ and $Y_{ik}$ commute, can be used for part (c).
\end{proof}

Let $\chi_I$ denote the indicator function of a subset $I\subseteq \R$, and for $a\in\R$ let
$\chi_{\geq a} := \chi_{[a,+\infty)}$. Note that if $\lambda$ is a self-adjoint
operator (and $I$ is measurable), then $\chi_I(\lambda)$ is a projection.  The
second ingredient in the proof of Theorem \ref{T:approximate2} is the following
version of Connes' joint distribution trick:
\begin{lemma}[\cite{Co76}, Lemma 1.2.6]\label{L:connestrick}
    Let $\lambda, \lambda'$ be positive semidefinite operators on a finite-dimensional
    Hilbert space. Then 
    \begin{equation*}
        \int_0^{+\infty} \norm{\chi_{\geq \sqrt{a}}(\lambda) 
            - \chi_{\geq \sqrt{a}}(\lambda')}^2_F da 
            \leq \norm{\lambda - \lambda'}_F \norm{\lambda + \lambda'}_F.
    \end{equation*}
\end{lemma}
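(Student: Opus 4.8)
The plan is to reduce the statement to a finite sum over pairs of eigenvalues via the spectral theorem, and then to close with a single application of the Cauchy--Schwarz inequality; this is the finite-dimensional incarnation of Connes' ``joint distribution''. Diagonalize $\lambda = \sum_i \mu_i\, e_i e_i^*$ and $\lambda' = \sum_j \mu'_j\, f_j f_j^*$ with $\{e_i\}$, $\{f_j\}$ orthonormal bases and eigenvalues $\mu_i,\mu'_j \geq 0$ listed with multiplicity, and set $w_{ij} = |e_i^* f_j|^2$. The completeness relations $\sum_j f_j f_j^* = \sum_i e_i e_i^* = \Id$ give $\sum_j w_{ij} = \sum_i w_{ij} = 1$, so $(w_{ij})$ is doubly stochastic; it records the joint spectral data of the pair, and it is the only place where the non-commutativity of $\lambda$ and $\lambda'$ enters.

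First I would record two elementary spectral identities. Since $\chi_{\geq\sqrt a}(\lambda) = \sum_{i\,:\,\mu_i^2 \geq a} e_i e_i^*$, integrating the trace in $a$ gives $\int_0^\infty \Tr\chi_{\geq\sqrt a}(\lambda)\, da = \sum_i \mu_i^2 = \norm{\lambda}_F^2$, and likewise for $\lambda'$. Multiplying the two spectral projections, taking the trace, and integrating, interchanging the (finite) sum with $\int_0^\infty \chi_{\{a \leq \mu_i^2\}}\chi_{\{a \leq (\mu'_j)^2\}}\, da = \min(\mu_i^2,(\mu'_j)^2)$, yields $\int_0^\infty \Tr\big(\chi_{\geq\sqrt a}(\lambda)\chi_{\geq\sqrt a}(\lambda')\big)\, da = \sum_{i,j} w_{ij}\min(\mu_i^2,(\mu'_j)^2)$.

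Next, expanding the difference of projections as $\norm{P - Q}_F^2 = \Tr P + \Tr Q - 2\Tr(PQ)$ and integrating termwise with the identities above, I get $\int_0^\infty\norm{\chi_{\geq\sqrt a}(\lambda) - \chi_{\geq\sqrt a}(\lambda')}_F^2\, da = \sum_{i,j} w_{ij}\big(\mu_i^2 + (\mu'_j)^2 - 2\min(\mu_i^2,(\mu'_j)^2)\big) = \sum_{i,j} w_{ij}\,|\mu_i^2 - (\mu'_j)^2|$. Now I factor $|\mu_i^2 - (\mu'_j)^2| = |\mu_i - \mu'_j|\,(\mu_i + \mu'_j)$ and apply Cauchy--Schwarz to the sum over $(i,j)$ against the weights $w_{ij}$, bounding the left-hand side by the product of $\big(\sum_{i,j} w_{ij}(\mu_i - \mu'_j)^2\big)^{1/2}$ and $\big(\sum_{i,j} w_{ij}(\mu_i + \mu'_j)^2\big)^{1/2}$. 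Finally, using $\sum_j w_{ij} = \sum_i w_{ij} = 1$ and $\sum_{i,j} w_{ij}\mu_i\mu'_j = \Tr(\lambda\lambda')$, one expands $\sum_{i,j} w_{ij}(\mu_i \pm \mu'_j)^2 = \Tr\lambda^2 \pm 2\Tr(\lambda\lambda') + \Tr(\lambda')^2 = \norm{\lambda \pm \lambda'}_F^2$, which identifies the two factors as exactly $\norm{\lambda - \lambda'}_F$ and $\norm{\lambda + \lambda'}_F$ and completes the proof.

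I do not anticipate a genuine obstacle: finite-dimensionality makes every interchange of a finite sum with the integral over $a$ completely routine, and the projections $\chi_{\geq\sqrt a}(\lambda)$ are just truncations of the eigenbasis. The one point that has to be seen rather than computed is the Cauchy--Schwarz split of $|\mu_i^2 - (\mu'_j)^2|$ into $|\mu_i - \mu'_j|$ and $\mu_i + \mu'_j$ tested against the doubly stochastic matrix $(w_{ij})$, together with the observation that the resulting two quadratic forms reassemble into $\norm{\lambda \mp \lambda'}_F^2$; this is precisely the content of Connes' trick, and it is what makes the inequality sharp (no lost constant) rather than merely true up to a factor.
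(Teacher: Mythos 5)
Your proof is correct and follows essentially the same route as the paper's: your doubly stochastic weights $w_{ij}=|e_i^* f_j|^2$ are exactly the discrete joint-distribution measure $\nu=\sum_{i,j}\delta_{(\lambda_i,\mu_j)}|\langle u_i|v_j\rangle|^2$ the paper uses, and both arguments reduce the integral to $\sum_{i,j}w_{ij}|\mu_i^2-(\mu_j')^2|$ and close with the same Cauchy--Schwarz split of $|\mu_i-\mu_j'|\,(\mu_i+\mu_j')$. No gaps.
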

Connes' argument applies more generally to any semifinite von Neumann algebra
with a normal semifinite faithful trace.  For
the convenience of the reader, we give a self-contained proof (following the
original) for the finite-dimensional case. 
\begin{proof}
    The proof is based on the following trick due to Connes: for any positive
    semidefinite operators $\lambda, \lambda'$ on a finite-dimensional Hilbert
    space, there is a discrete measure $\nu$ on $\R_{\geq 0} \times
    \R_{\geq 0}$ such that
    \begin{equation}\label{E:connes-trick}
        \norm{f(\lambda)-g(\lambda')}_F^2 = \int_{(x,y)} |f(x)-g(y)|^2 \; d\nu
    \end{equation}
    for any pair of functions $f,g:\R_+\to\R_+$. Indeed, if we write the spectral decomposition $\lambda =
    \sum_i \lambda_i \ket{u_i}\!\bra{u_i}$, $\lambda' = \sum_j \lambda'_i
    \ket{v_j}\!\bra{v_j}$, and set $\nu = \sum_{i,j} \delta_{(\lambda_i,\mu_j)}
    |\bra{u_i}v_j\rangle|^2$, then Equation \eqref{E:connes-trick} follows
    by direct calculation. 

    Now for non-negative real numbers $x,y$,  
    \begin{align}
        \int_0^{+\infty} \big|\chi_{\geq \sqrt{a}}(x)-\chi_{\geq \sqrt{a}}(y)\big|^2 da
        &= |x^2-y^2| = |x-y||x+y|\;.\label{E:connes-trick-2}
    \end{align}
    Thus, using~\eqref{E:connes-trick} followed by Fubini's theorem,
    \begin{align*}
        \int_0^{+\infty} \norm{\chi_{\geq \sqrt{a}}(\lambda) - \chi_{\geq \sqrt{a}}(\lambda')}^2_F da
        &=  \int_{(x,y)} \int_0^{+\infty} \big|\chi_{\geq \sqrt{a}}(x)-\chi_{\geq \sqrt{a}}(y)\big|^2  da\,d\nu \\
        &=  \int_{(x,y)} |x-y||x+y| d\nu \\
        &\leq  \left(\int_{(x,y)} |x-y|^2d\nu \right)^{1/2}\left(\int_{(x,y)} |x+y|^2d\nu \right)^{1/2} \\
        &= \|\lambda-\lambda'\|_F \|\lambda+\lambda'\|_F\;,
    \end{align*}
		where the last equality again uses~\eqref{E:connes-trick}.
\end{proof}

We need one other easy lemma:
\begin{lemma}\label{L:int}
    Let $\lambda$ be a compact positive semidefinite operator on a Hilbert space. Then
    \begin{equation*}
        \int_{0}^{+\infty} \chi_{\geq \sqrt{a}}(\lambda) da = \lambda^2.
    \end{equation*}
\end{lemma}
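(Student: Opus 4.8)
The plan is to evaluate the operator-valued integral by diagonalising $\lambda$ and reducing to a one-line scalar computation. Since $\lambda$ is compact and positive semidefinite, the Hilbert--Schmidt theorem provides an orthonormal basis $\{\ket{u_i}\}$ of eigenvectors, $\lambda\ket{u_i} = \lambda_i\ket{u_i}$ with $\lambda_i \geq 0$, so that $\lambda = \sum_i \lambda_i\ket{u_i}\!\bra{u_i}$ and $\lambda^2 = \sum_i \lambda_i^2\ket{u_i}\!\bra{u_i}$. Borel functional calculus gives, for each fixed $a>0$, the orthogonal projection
\[
  \chi_{\geq\sqrt a}(\lambda) \;=\; \sum_{i\,:\,\lambda_i \geq \sqrt a} \ket{u_i}\!\bra{u_i}.
\]
Note that $\chi_{\geq\sqrt a}(\lambda) = 0$ whenever $a > \norm{\lambda}_{op}^2$, so the integrand vanishes outside a bounded interval and the integral is unambiguous.

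Next I would interpret $\int_0^{+\infty}\chi_{\geq\sqrt a}(\lambda)\,da$ as a weak integral, testing against an arbitrary vector $\ket{v}$. For such $\ket{v}$ the scalar function $a \mapsto \bra{v}\chi_{\geq\sqrt a}(\lambda)\ket{v} = \sum_{i:\,\lambda_i \geq \sqrt a}|\langle u_i|v\rangle|^2$ is nonnegative and measurable, so by Tonelli's theorem the sum and the integral may be interchanged:
\[
  \int_0^{+\infty}\bra{v}\chi_{\geq\sqrt a}(\lambda)\ket{v}\,da
  \;=\; \sum_i |\langle u_i|v\rangle|^2\int_0^{+\infty}\chi_{\geq\sqrt a}(\lambda_i)\,da
  \;=\; \sum_i \lambda_i^2\,|\langle u_i|v\rangle|^2
  \;=\; \bra{v}\lambda^2\ket{v},
\]
where the middle equality uses that, for the scalar $\lambda_i \geq 0$, one has $\chi_{\geq\sqrt a}(\lambda_i) = 1$ exactly when $a \leq \lambda_i^2$, so that $\int_0^{+\infty}\chi_{\geq\sqrt a}(\lambda_i)\,da = \lambda_i^2$. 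Since this identity holds for every $\ket{v}$, polarisation shows that the bounded operators $\int_0^{+\infty}\chi_{\geq\sqrt a}(\lambda)\,da$ and $\lambda^2$ coincide.

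There is no real obstacle in this argument; the only points deserving a word are the meaning of the operator-valued integral in infinite dimensions and the interchange of summation and integration, both of which are disposed of by the reduction to scalars together with the nonnegativity of all quantities involved. In particular, in the finite-dimensional setting in which the lemma is applied (inside the proof of Theorem~\ref{T:approximate2}, via Lemma~\ref{L:connestrick}), the sum over $i$ is finite and the interchange is immediate.
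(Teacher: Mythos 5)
Your proof is correct and follows essentially the same route as the paper's: diagonalise $\lambda$ by the spectral theorem and reduce to the scalar identity $\int_0^{+\infty}\chi_{\geq\sqrt a}(t)\,da = t^2$ for each eigenvalue. You are somewhat more careful than the paper about justifying the interchange of sum and integral (via Tonelli and a weak-integral interpretation), which the paper compresses into the phrase ``it suffices to prove the lemma for $\lambda = t\ket{v}\bra{v}$''.
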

\begin{proof}
    It suffices to prove the lemma for $\lambda = t \ket{v}\bra{v}$, where
    $\ket{v}$ is a unit vector, and $t \geq 0$. But then
    \begin{equation*}
        \chi_{\geq \sqrt{a}}\left(\lambda\right) = 
            \begin{cases} \ket{v} \bra{v} & a \leq t^2 \\
                            0 & a > t^2 \end{cases},
    \end{equation*}
    so
    \begin{equation*}
        \int_0^{+\infty} \chi_{\sqrt{a}}(\lambda) da = 
            \int_0^{t^2} \ket{v} \bra{v} da = t^2 \ket{v} \bra{v} = \lambda^2.
    \end{equation*}
\end{proof}

\begin{proof}[Proof of Theorem \ref{T:approximate2}]
    Let $V_i = \{j : A_{ij} \neq 0\}$, and suppose $\ket{\psi}$, $\{Y_{ij}\}$,
    $\{X_j\}$ is an $\eps$-perfect strategy for $\mcG$. Let $\rho$ be the
    reduced density matrix of $\ket{\psi}$ on $H_B$, and let $\lambda =
    \rho^{1/2}$. By Lemma \ref{L:int}, for any Hermitian $W$ on $H_B$,
    \begin{equation}\label{E:normalization}
        \int_{0}^{+\infty} \tr(W\chi_{\geq \sqrt{a}}(\lambda)) da = \tr(W\rho) \;.
    \end{equation}
    Then, if $R$ is one of the relations $\prod_{j \in V_i} X_j - (-\Id)^{b_i}$, $1 \leq i \leq m$,
    or $X_j X_k - X_k X_j$, $j,k \in V_i$,
    \begin{equation}\label{E:relations}
        \int_0^{+\infty} \norm{ R \,\chi_{\geq \sqrt{a}}(\lambda)}_F^2 da = 
            \norm{ R}_{\rho}^2 \leq O(\eps),
    \end{equation}
    where the equality is by~\eqref{E:normalization}, using that $\chi_{\geq \sqrt{a}}(\lambda)$ is a projection, and the inequality follows from parts (b) and (c) of Proposition
    \ref{P:generalstrat}. Finally,  Lemma \ref{L:connestrick} and part (a) of Proposition \ref{P:generalstrat} imply that
    \begin{multline}\label{E:commute}
        \int_0^{+\infty} \norm{\chi_{\geq \sqrt{a}}(\lambda) - X_j^* \chi_{\geq \sqrt{a}}(\lambda) X_j}_F^2 da  \\
            \leq \norm{\lambda - X_j^* \chi_{\geq \sqrt{a}}(\lambda) X_j}_F \norm{\lambda + X_j^* \chi_{\geq \sqrt{a}}(\lambda)
                X_j}_F \leq O(\sqrt{\eps})
    \end{multline}
    for all $1 \leq j \leq n$.  Putting equations~\eqref{E:normalization},
    \eqref{E:relations}, and \eqref{E:commute} together, and using the fact
    that $\eps \leq \eps^{2}$ for $\eps$ small enough,
    \begin{align*}
      &  \int_0^{+\infty}  \sum_{j=1}^n \norm{\chi_{\geq \sqrt{a}}(\lambda) - X_j^* \chi_{\geq \sqrt{a}}(\lambda) X_j}_F^2
            + \sum_i \Big\|\Big( \prod_{j \in V_i} X_j - (-\Id)^{b_i}\Big) \chi_{\geq \sqrt{a}}(\lambda) \Big\|_F^2 \\
            &\; + \sum_i \sum_{j \neq k \in V_i} \norm{ \left( X_j X_k - X_k X_j \right) \chi_{\geq \sqrt{a}}(\lambda)}_F^2\ \  da 
                \leq O(\eps^{1/2}) \int_{0}^{+\infty} \tr(\chi_{\geq \sqrt{a}}(\lambda)) da.
    \end{align*}
    All the integrands in the above equation are zero if $a > \norm{\lambda}_{op}^2$, so it follows that we can
    find $0 \leq a_0 \leq \norm{\lambda}_{op}^2$ such that if $P := \chi_{\geq \sqrt{a_0}} (\lambda)$, then
    \begin{multline}
        \sum_{j=1}^n \norm{P - X_j^* P X}_F^2
            + \sum_i \Big\|\Big( \prod_{j \in V_i} X_j - (-\Id)^{b_i}\Big) P \Big\|_F^2 \\
             + \sum_i \sum_{j \neq k \in V_i} \norm{ \left( X_j X_k - X_k X_j \right) P }_F^2
                \leq O(\eps^{1/2}) \tr(P).
    \end{multline}
    Since all summands on the left are positive, we conclude that all summands
    are bounded by $O(\eps^{1/2}) \tr(P)$ (where the constant depends only on
    $m$ and $n$). In addition, $a \leq \norm{\lambda}_{op}^2$ implies that $P
    \neq 0$. Let $\widetilde{X}_j := (P X_j P) | P X_j P|^{-1}$ for
    $j=1,\ldots,n$. We want to show that the homomorphism $\phi$ from $\Free(S)$ to
    $\mcU(\Im P)$ sending $x_j \mapsto \widetilde{X}_j$ is an
    $O(\eps^{1/4})$-representation.  To start, we need to show that $P X_j P$
    is almost unitary, so $\widetilde{X}_j$ is close to $P X_j P$. Note that
    if $B B^* \leq \Id$ (or equivalently, $B^* B \leq \Id$) then $\norm{A B}_F \leq
    \norm{A}_F$ and similarly $\norm{BA}_F \leq \norm{A}_F$ for any matrix $A$.
    As a result,
    \begin{equation*}
        \norm{X_j P - P X_j P}_F = \norm{(X_j P - P X_j) P}_F \leq \norm{X_j P - P X_j}_F \leq O(\eps^{1/4}) \tr(P)^{1/2},
    \end{equation*}
    and hence
    \begin{align*}
        \norm{ (P X_j P)^2 - P}_F & = \norm{ P X_j (P X_j P - X_j P)}_F \\ 
                &\leq \norm{P X_j P - X_j P}_F \\
                &\leq O(\eps^{1/4}) \tr(P)^{1/2}.
    \end{align*}
    By Lemma \ref{L:stability1}, part (a),
    \begin{equation*}
        \big\|\widetilde{X}_j - P X_j P\big\|_F \leq O(\eps^{1/4}) \tr(P)^{1/2},
    \end{equation*}
    and we also have
    \begin{equation*}
        \big\|\widetilde{X}_j - X_j P\big\|_F \leq O(\eps^{1/4}) \tr(P)^{1/2}.
    \end{equation*}

    To finish the proof, we claim that
    \begin{equation*}
        \big\|X_{i_1} \cdots X_{i_k} P - \widetilde{X}_{i_1} \cdots \widetilde{X}_{i_k}\big\|_F \leq O(\eps^{1/4}) \tr(P)^{1/2}
    \end{equation*}
    for any $1 \leq i_1,\ldots,i_k \leq n$, where the constant depends on $k$.
    Indeed, for any $1 \leq j \leq n$, if $B = \widetilde{X}_{i_{j+1}} \cdots \widetilde{X}_{i_{k}} P$
    ($P$ is included at the end for the case $j=n$)
    then $B^* B = P \leq \Id$, thus $BB^*\leq \Id$ and 
    \begin{align*}
        \big\|X_{i_1} \cdots X_{i_j} \widetilde{X}_{i_{j+1}} \cdots \widetilde{X}_{i_k} P 
            - X_{i_1}&\cdots X_{i_{j-1}} \widetilde{X}_{i_j} \cdots  \widetilde{X}_{i_k}P\big\|_F \\
                        & = \big\| \big(X_{i_j} P - \widetilde{X}_{i_j} \big) \widetilde{X}_{i_{j+1}} \cdots \widetilde{X}_{i_k} P 
            \big\|_F\\
&            \leq \big\|X_{i_j} P - \widetilde{X}_{i_j}\big\|_F \leq O(\eps^{1/4}) \tr(P)^{1/2}\;,
    \end{align*}
    so the claim follows. Thus if $\phi' : \mcF(S) \arr \mcU(H_B)$ is defined by sending
    $x_j \mapsto X_j$ and $J \mapsto -\Id$, then 
    \begin{align*}
        \norm{\phi(r) - P}_f &=  \frac{1}{\tr(P)^{1/2}}    \norm{\phi(r) - P}_F\\
&	\leq \frac{1}{\tr(P)^{1/2}} \left(\norm{\phi(r) - \phi'(r)P}_F + \norm{(\phi'(r)- \Id) P}_F \right) \\
&\leq O(\eps^{1/4}).
    \end{align*}
    for any of the defining relations $r$ of $\Gamma(A,b)$.
\end{proof}

\begin{rmk}\label{rk:connes}
    When combined with Proposition \ref{P:approximate}, part (a), Theorem
    \ref{T:approximate2} implies that any $\eps$-perfect strategy can be turned
    into an $O(\eps^{1/2})$-strategy with a maximally entangled state.  This
    can also be proved directly using Lemma \ref{L:connestrick}, and when
    combined with part (b) of Proposition \ref{P:approximate}, this gives
    another route to prove Theorem \ref{T:approximate2}. The Connes joint
    distribution trick can also be used to prove similar statements about other
    classes of games, such as synchronous games~\cite{paulsen2016estimating} and weak projection games~\cite{manvcinska2014maximally}. 
    Determining the largest class of games to which this idea can be applied seems to be an interesting question for further work. \end{rmk}

\section{Bounds on hyperlinear profile of a specific solution group}\label{S:embedding}

To apply Corollary~\ref{C:approximate2}, we need examples of solution groups
with bounds on the hyperlinear profile of $J$. In this section, we use
Proposition \ref{P:bounds} to prove:
\begin{prop}\label{P:sgbounds}
    There is a solution group $\Gamma(A,b)$ and constants $C,C' > 0$ such
    that $J$ is non-trivial in approximate representations of $\Gamma(A,b)$,
    and furthermore
    \begin{equation*}
        \frac{C}{\eps^{2/3}} \leq \hlp(J,2,\eps) \leq \frac{C'}{\eps}
    \end{equation*}
    for all $\eps > 0$.
\end{prop}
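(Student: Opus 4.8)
The strategy is to realize the group $K$ of Proposition~\ref{P:bounds} inside a solution group and to transport both bounds of~\eqref{eq:Kbounds} along homomorphisms using Lemma~\ref{L:hlpprop2}. Since $K$ is finitely presented and, as observed right after Proposition~\ref{P:bounds}, the generator $c$ is a central involution, the embedding theorem of~\cite{Sl17} applies to $K$ with distinguished central involution $c$: it yields a linear system $Ax=b$ and an embedding $\iota\colon K\incl\Gamma(A,b)$ sending $c$ to $J$. The extra feature we exploit is that the construction also produces a retraction $r\colon\Gamma(A,b)\arr K$ splitting $\iota$, so that $r(J)=c$; this is needed because Lemma~\ref{L:hlpprop2} only moves $\hlp$ in one direction along a homomorphism, and we will push the lower bound forward along $\iota$ and the upper bound forward along $r$.

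Concretely, choose a lift of $\iota$ to a homomorphism of free groups $\Free(S)\arr\Free(S')$ (where $S$, $S'$ are the generating sets of $K$ and $\Gamma(A,b)$) sending the generator $c$ to the generator $J$; this is possible since $\iota(c)=J$ and $J\in S'$. Applying Lemma~\ref{L:hlpprop2} with $T=\{c\}$ gives a constant $C_0\ge 1$ with $\hlp_K(c,2,C_0\eps)\le\hlp_{\Gamma}(J,2,\eps)$, and combining with the lower bound of~\eqref{eq:Kbounds} yields $\hlp(J,2,\eps)\ge C/\eps^{2/3}$ for all $\eps>0$. Symmetrically, choose a lift of $r$ to $\Free(S')\arr\Free(S)$ sending $J$ to the generator $c$ (possible since $r(J)=c$), and apply Lemma~\ref{L:hlpprop2} with $T=\{J\}$ to get $C_1\ge1$ with $\hlp_{\Gamma}(J,2,C_1\eps)\le\hlp_K(c,2,\eps)$; the upper bound of~\eqref{eq:Kbounds} then gives $\hlp(J,2,\eps)\le C'/\eps$. (Proposition~\ref{P:centralinv} and Lemma~\ref{L:hlpprop3} could be used for additional robustness, but are not strictly needed here because Proposition~\ref{P:bounds} is already stated for every $\delta\in(0,2]$.) Finally, $J$ is non-trivial in approximate representations of $\Gamma(A,b)$: the upper bound shows $\hlp(J,2,\eps)<+\infty$ for every $\eps>0$, which is precisely the existence, for each $\eps$, of an $\eps$-representation $\phi$ with $\norm{\phi(J)-\Id}_f\geq 2$; alternatively, pull back along $r$ an $\eps$-representation of $K$ witnessing the non-triviality of $c$.

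The main obstacle is to verify that the input quoted from~\cite{Sl17} is exactly what is needed---namely, that its embedding theorem applies with $K$ as the ambient group and $c$ as the central involution, and, more delicately, that the solution group it produces comes with the splitting $r\colon\Gamma(A,b)\arr K$ (equivalently, that $\eps$-representations of $\Gamma(A,b)$ with $\phi(J)=-\Id$ and $\eps$-representations of $K$ with $\phi(c)=-\Id$ correspond up to a constant factor in $\eps$, with no loss in dimension). Granting this, the remainder is only the bookkeeping of the constants $C_0,C_1$, which is harmless: each invocation of Lemma~\ref{L:hlpprop2} merely rescales $\eps$ by a constant and leaves the dimension unchanged, so the exponents $2/3$ and $1$ of~\eqref{eq:Kbounds} are preserved verbatim.
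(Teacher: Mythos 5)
Your high-level plan---realize $K$ inside a solution group and transport the two bounds of Proposition~\ref{P:bounds} along homomorphisms---is the same as the paper's, but the two structural facts you import from \cite{Sl17} to do the transporting are not available, and they are exactly where the work lies (the paper isolates this step as Proposition~\ref{P:embedding}). First, the embedding $K \incl \Gamma(A,b)$ produced by the construction does \emph{not} send $c$ to $J$: the element $c=ab$ satisfies no relation of the form $\prod_{j \in V_i} x_j = J^{b_i}$ with $b_i=1$, so it cannot play the structural role of $J$ in a solution group. Instead one adjoins new generators $t$ and $J$, with $J$ central and equal to the commutator of (the image of) $c$ with $t$; writing $\Psi$ for the induced map on free groups and $z$ for the word representing $t$, one has $[\Psi(c),z]=J$ in $\Gamma(A,b)$. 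Consequently your application of Lemma~\ref{L:hlpprop2} with $\iota(c)=J$ does not go through. The paper's lower bound instead argues: if $\phi(J)=-\Id$, then $\norm{[\phi(\Psi(c)),\phi(z)]+\Id}_f = O(\eps)$, whence by Lemma~\ref{L:anticommutator} one gets $\norm{\phi(\Psi(c))-\Id}_f \geq 1-O(\eps)$; this only yields $\hlp(c,1-O(\eps),O(\eps)) \leq \hlp(J,2,\eps)$, and one then needs Proposition~\ref{P:centralinv} to upgrade $\delta$ from $1-O(\eps)$ to $2$. So the robustness tools you set aside as ``not strictly needed'' are in fact essential.

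Second, the splitting $r:\Gamma(A,b)\arr K$ with $r(J)=c$ on which your upper bound rests is not produced by the construction and is not plausible: $\Gamma(A,b)$ is obtained from $K$ by passing to a homogeneous-linear-plus-conjugacy group, performing an HNN extension (adjoining $t$), and then embedding into a solution group with many new involutive generators; no retraction is exhibited, and none is needed. What the paper proves instead is a representation-level statement: from a $d$-dimensional $\eps$-representation of $K$ with $\phi(c)=-\Id$ one explicitly constructs an $8d$-dimensional $O(\eps)$-representation of $\Gamma(A,b)$ with $J\mapsto -\Id$, by tracking dimensions through \cite[Proposition 4.8 and Remark 4.5]{Sl17}; this gives $\hlp(J,2,O(\eps)) \leq 8\,\hlp(c,2,\eps)$, which suffices for the asymptotics but is different in kind from (and weaker than) what a group retraction would give. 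A smaller omission: $K$ must first be recognized as an extended homogeneous-linear-plus-conjugacy group, which requires adding the derivable relations $ab=ba$, $ac=ca$, $bc=cb$ to the presentation. In short, the proposal defers the entire content of Proposition~\ref{P:embedding} to two unverified hypotheses about the construction in \cite{Sl17}, one of which ($\iota(c)=J$) is false for the actual construction and the other of which (the retraction) is unsubstantiated.
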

To prove Proposition \ref{P:sgbounds}, recall from \cite{Sl17} that a group is
an \emph{extended homogeneous-linear-plus-conjugacy} group if it has a
presentation $\langle S : R \rangle$, where 
\begin{enumerate}[(a)]
    \item the set $S$ of generators is split into two disjoint subsets $S_0$
        and $S_1$, where $S_1 = \{y_1,\ldots y_{\ell}\}$ is linearly-ordered,
    \item $R$ contains the relations $x^2 = e$ for all $x \in S_0$, 
    \item there is a collection $\mcV$ of ordered subsets of $S_0$ such that
        for all $V \in \mcV$, $R$ contains the relations
        \begin{equation*}
            \prod_{x \in V} x = e, \text{ and } xy = yx \text{ for all distinct }
                x,y \in V 
        \end{equation*}
        (these relations are similar to the relations of a solution group), and
    \item every other relation of $R$ is of the form
        \begin{enumerate}[(i)]
            \item $x y x = z$ for some (not necessarily distinct) $x,y,z \in S_0$,
            \item $x y x^{-1} = z$ for some $x \in S_1$ and (not necessarily 
                distinct) $y,z \in S_0$, or
            \item $y_i y_j y_{i}^{-1} = y_j^{k}$ for some $1 \leq j < i \leq
                n$ and $k>0$.  
        \end{enumerate}
\end{enumerate}
We call a presentation $G = \langle S : R \rangle$ of this form a
\emph{presentation of $G$ as an extended homogeneous-linear-plus-conjugacy
group}. We can use the main argument of \cite{Sl17} to prove:
\begin{prop}\label{P:embedding}
    Let $G = \langle S : R \rangle$ be a presentation of $G$ as an extended
    homogeneous-linear-plus-conjugacy group, and suppose $w \in S_0$ represents a
    central involution in $G$ which is non-trivial in approximate
    representations.  Then there is a solution group $\Gamma(A,b)$ and
    constants $ C > C' > 0$, $N \in \mbN$, such that $J$ is non-trivial in
    approximate representations of $\Gamma(A,b)$, and furthermore
    \begin{equation*}
        \hlp(w,2,C\eps) \leq \hlp(J,2,\eps) \leq N \hlp(w,2,C'\eps)
    \end{equation*}
    for all $\eps > 0$.
\end{prop}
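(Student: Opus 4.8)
The plan is to obtain Proposition \ref{P:embedding} from a quantitative reading of the embedding theorem of \cite{Sl17}. That theorem produces, from the given presentation of $G$ as an extended homogeneous-linear-plus-conjugacy group with distinguished central involution $w$, a linear system $Ax=b$ and an injection $\iota : G \arr \Gamma(A,b)$; moreover $\iota$ is induced by a homomorphism of free groups $\widetilde\iota : \Free(S) \arr \Free(x_1,\ldots,x_n,J)$ that identifies $w$ with $J$ (i.e. $\widetilde\iota(w)$ represents $J$ in $\Gamma(A,b)$). I would establish the two inequalities separately: the left-hand one by pushing approximate representations of $\Gamma(A,b)$ forward along $\widetilde\iota$, and the right-hand one by showing that the construction of $\Gamma(A,b)$ out of $G$ can be carried out \emph{stably}, converting an approximate representation of $G$ into one of $\Gamma(A,b)$ with only a bounded blow-up in dimension.

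For the inequality $\hlp(w,2,C\eps)\leq\hlp(J,2,\eps)$: let $\psi$ be an $\eps$-representation of $\Gamma(A,b)$ of dimension $d$ with $\psi(J)=-\Id$. Since $R$ is finite and $\iota$ descends to a group homomorphism, each $\widetilde\iota(r)$ with $r\in R$ is a product of a bounded number of conjugates of defining relations of $\Gamma(A,b)$, so $\psi\circ\widetilde\iota$ is a $C_1\eps$-representation of $G$ of dimension $d$; this is the content of Lemma \ref{L:hlpprop2}. By Lemma \ref{L:hlpprop3}, $\norm{\psi(\widetilde\iota(w))-\Id}_f \geq \norm{\psi(J)-\Id}_f - C_2\eps = 2-C_2\eps$, and since $w$ is a central involution and $2-C_2\eps$ stays bounded away from $0$, the argument in the proof of Proposition \ref{P:centralinv} converts $\psi\circ\widetilde\iota$ into a $C_3\eps$-representation of $G$ of dimension at most $d$ with value exactly $-\Id$ on $w$. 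Hence $\hlp(w,2,C_3\eps)\leq d$, and taking the infimum over $\psi$ proves the inequality.

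For the inequality $\hlp(J,2,\eps)\leq N\,\hlp(w,2,C'\eps)$, I would follow the construction of $\Gamma(A,b)$ in \cite{Sl17} step by step, viewing it as a finite sequence of elementary moves transforming $\langle S:R\rangle$ into the solution-group presentation: (i) adjoining a new generator pinned by a relation to a fixed word in the old ones, or adjoining a relation already implied by $R$; and (ii) replacing one of the conjugacy relations in clause (d) of the definition by the corresponding ``gadget'' of auxiliary generators and solution-group-type relations. A move of type (i) leaves the group unchanged, and by Lemma \ref{L:hlpprop2} together with Lemma \ref{L:hom-stable} it alters $\hlp$ only by a constant factor in $\eps$ and not at all in dimension. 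For a move of type (ii), an $O(\eps)$-representation of the current group must be extended by choosing unitaries for the auxiliary generators so the new relations hold up to $O(\eps)$; the gadgets are assembled from groups that are stable in the strong sense already used in the proof of Proposition \ref{P:approximate}(a) (finite abelian groups and the like), so an exact solution exists after tensoring with a fixed finite-dimensional space, multiplying the dimension by a constant. Since there are finitely many moves, the total dimension blow-up is a constant $N$ and the total loss in $\eps$ is a constant factor; starting from a $C'\eps$-representation $\phi$ of $G$ with $\phi(w)=-\Id$, one obtains in this way an $\eps$-representation of $\Gamma(A,b)$ with $\psi(J)=-\Id$ and dimension at most $N\dim\phi$, which yields the inequality. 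Non-triviality of $J$ in approximate representations of $\Gamma(A,b)$ is then immediate: $w$ is a central involution non-trivial in approximate representations, so $\hlp(w,2,\eps)<+\infty$ for all $\eps>0$ by Proposition \ref{P:centralinv}, whence $\hlp(J,2,\eps)\leq N\,\hlp(w,2,C'\eps)<+\infty$ for all $\eps>0$; relabelling constants gives the displayed bounds.

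The main obstacle is step (ii): the embedding theorem of \cite{Sl17} is proved existentially, so one must revisit its gadget constructions and check that each admits a dimension-efficient, stable lift of approximate representations --- that is, that near-solutions of the gadget relations can be perturbed to exact solutions on a space only a controlled constant factor larger. Everything else is bookkeeping with constants already available from Lemmas \ref{L:hlpprop2}, \ref{L:hlpprop3} and \ref{L:hom-stable} and Proposition \ref{P:centralinv}.
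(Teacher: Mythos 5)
Your high-level architecture (pull approximate representations of $\Gamma(A,b)$ back along the embedding for the lower bound; push approximate representations of $G$ forward through a stable, dimension-controlled version of the construction for the upper bound) matches the paper's. However, the proposal rests on a premise that the embedding machinery of \cite{Sl17} does not supply and that you do not prove: that the induced map $\widetilde\iota$ identifies $w$ with $J$, i.e.\ that $\widetilde\iota(w)$ represents $J$ in $\Gamma(A,b)$. The actual construction is different: $G$ is first embedded in a homogeneous-linear-plus-conjugacy group $G'$, then one adjoins \emph{new} generators $t$ and $J$ with relations making $\widehat G$ an HNN extension of $G'\times\Z_2$ in which $J$ is realized as the commutator $[\Psi(w),z]$ of the image of $w$ with the image of $t$, and only then does one embed $\widehat G$ into a solution group. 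The element $w$ itself does not map to $J$; indeed the whole point of the commutator gadget is to manufacture a distinguished central $J$ out of a central involution $w$ that need not sit inside $G$'s homogeneous relations in solution-group form. Because of this, your lower-bound step ``$\norm{\psi(\widetilde\iota(w))-\Id}_f\geq\norm{\psi(J)-\Id}_f-C_2\eps$ by Lemma \ref{L:hlpprop3}'' has no basis: $\widetilde\iota(w)$ and $J$ are not representatives of the same group element. The paper instead derives $\norm{[\phi(\Psi(w)),\phi(z)]+\Id}_f\leq O(\eps)$ from $\phi(J)=-\Id$ and invokes Lemma \ref{L:anticommutator} to conclude $\norm{\phi(\Psi(w))-\Id}_f\geq 1-O(\eps)$, which only gives $\delta=1-O(\eps)$ rather than $\delta=2$; the upgrade to $\delta=2$ then genuinely requires Proposition \ref{P:centralinv}. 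Both of these ingredients are absent from your argument, and they are not optional bookkeeping.

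The gap propagates to the upper bound. You would need to extend a representation $\phi$ of $G$ with $\phi(w)=-\Id$ to one of $\Gamma(A,b)$ with $J\mapsto-\Id$; under your assumed identification this looks automatic, but in the real construction one must first double the dimension (the proof of \cite[Proposition 4.8]{Sl17} yields a $2d$-dimensional representation in which $w$ acts as $-\Id_d\oplus\Id_d$, \emph{not} as $-\Id$) and then exhibit an explicit unitary for $t$ (the swap $\begin{pmatrix}0&\Id\\\Id&0\end{pmatrix}$) anticommuting with $-\Id_d\oplus\Id_d$ so that $\gamma(J)=[\gamma(w),\gamma(t)]=-\Id$; a final application of \cite[Remark 4.5]{Sl17} gives dimension $8d$, whence $N=8$. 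Your instinct that the construction must be traced quantitatively gadget-by-gadget, with bounded dimension blow-up at each stage, is correct and is exactly what you flag as ``the main obstacle''; but as written the proposal neither supplies that analysis nor the correct relationship between $w$ and $J$, so neither inequality is established.
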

We need one lemma before proving Proposition \ref{P:embedding}.
\begin{lemma}\label{L:anticommutator}
    If $A$ and $B$ are unitary matrices, then
    \begin{equation*}
        \norm{\Id - A}_f \geq 1 - \frac{1}{2} \norm{\Id + [A,B]}_f.
    \end{equation*}
\end{lemma}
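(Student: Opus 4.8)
The plan is to prove the inequality $\norm{\Id - A}_f \geq 1 - \tfrac12\norm{\Id + [A,B]}_f$ by relating both sides to traces and then invoking the triangle inequality in the Hilbert space $(M_d(\C), \langle\cdot,\cdot\rangle_f)$. First I would note that for unitary matrices $U$, one has $\norm{\Id - U}_f^2 = 2 - 2\Re\ntr(U)$, and since here both $A$ and $[A,B] = ABA^{-1}B^{-1}$ are unitary, the bound is really a statement about $\Re\ntr(A)$ and $\Re\ntr([A,B])$. Write $c = [A,B]$, so $\norm{\Id + c}_f^2 = 2 + 2\Re\ntr(c)$. Rearranging $ABA^{-1}B^{-1} = c$ gives $AB = cBA$, i.e. $\ntr(AB) = \ntr(cBA) = \ntr(BAc)$ (using cyclicity of the trace), but more directly I would use that $ABA^{-1} = cB$, hence $\ntr(B) = \ntr(A^{-1}(cB)A \cdot A^{-1}A)$ — actually the cleanest route is: $\ntr(AB) = \ntr(cBA)$, and then estimate $|\ntr(AB) - \ntr(BA)| = |\ntr(cBA) - \ntr(BA)| = |\ntr((c - \Id)BA)| \leq \norm{c-\Id}_f\,\norm{BA}_f = \norm{c - \Id}_f$ by Cauchy–Schwarz in the normalized-Frobenius inner product, using $\norm{BA}_f = 1$. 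But $\ntr(AB) = \ntr(BA)$ always, so this line gives nothing directly; the point is instead to compare $\ntr(A)$ to something.

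The correct manipulation: from $ABA^{-1} = cB$ we get, multiplying on the right by $B^{-1}$, that $ABA^{-1}B^{-1} = c$, which is just the definition. Instead I would use $AB = cBA$ to write $A = cBAB^{-1}$, so $\ntr(A) = \ntr(cBAB^{-1})$. Now $\norm{BAB^{-1}}_f = \norm{A}_f = 1$, so by Cauchy–Schwarz, $|\ntr(A) - \ntr(BAB^{-1})| = |\ntr((c-\Id)BAB^{-1})| \leq \norm{c - \Id}_f$. Since $\ntr(BAB^{-1}) = \ntr(A)$, again this vanishes — so conjugation alone is not enough. The genuinely useful identity is different: take real parts and use $2 - 2\Re\ntr(A) = \norm{\Id - A}_f^2$. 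I would instead argue via $\norm{\Id - A}_f = \norm{A^{-1} - \Id}_f$ and the submultiplicative/triangle structure: $\norm{\Id - c}_f = \norm{\Id - ABA^{-1}B^{-1}}_f \leq \norm{\Id - ABA^{-1}}_f + \norm{ABA^{-1} - ABA^{-1}B^{-1}}_f = \norm{\Id - ABA^{-1}}_f + \norm{\Id - B^{-1}}_f$; but $\norm{\Id - ABA^{-1}}_f = \norm{A^{-1}A - B}_f$? No — $\norm{\Id - ABA^{-1}}_f = \norm{A^{-1} - BA^{-1}}_f = \norm{(\Id - B)A^{-1}}_f = \norm{\Id - B}_f$. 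So this gives $\norm{\Id - c}_f \leq \norm{\Id - B}_f + \norm{\Id - B^{-1}}_f = 2\norm{\Id - B}_f$, which is a known fact but not the claimed one. The claim involves $\norm{\Id - A}_f$ on the left, which suggests symmetry: $[A,B]$ and $[B,A] = [A,B]^{-1}$ have the same $\norm{\Id + \cdot}_f$ since $\norm{\Id + U}_f = \norm{\Id + U^*}_f$ for unitary $U$.

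Here is the approach I actually expect to work. Set $c = [A,B]$, a unitary, so $AB = cBA$. Then
\begin{equation*}
\ntr\big((\Id - A)(B + cB)\big) = \ntr(B) + \ntr(cB) - \ntr(AB) - \ntr(AcB),
\end{equation*}
and I would try to choose the multiplier so that most terms cancel using $AB = cBA$ and cyclicity, leaving a clean relation between $\ntr(A)$ and $\ntr(c)$. Concretely, $\ntr(AB) = \ntr(cBA) = \ntr(Ac B)$ is false in general, but $\ntr(AB) = \ntr(cBA)$ and $\ntr(cBA) = \ntr(Bc A)$? no. The safe terms: $\ntr(AB) = \ntr(cBA) = \ntr(ABc^{\text{?}})$ — only cyclic shifts are free. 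So $\ntr(AB) = \ntr(BcB^{-1}\cdot B \cdot A \cdot B \cdot B^{-1})$, getting complicated. The cleanest: using Cauchy–Schwarz on $\ntr\big((\Id - A)\cdot(\Id + c)\big)$. We have $|\ntr((\Id-A)(\Id+c))| \leq \norm{\Id - A}_f\,\norm{\Id + c}_f$. On the other hand $\ntr((\Id - A)(\Id + c)) = \ntr(\Id) + \ntr(c) - \ntr(A) - \ntr(Ac) = 1 + \ntr(c) - \ntr(A) - \ntr(Ac)$. Now I claim $\ntr(Ac) = \ntr(BA B^{-1})= \ntr(A)$: indeed $Ac = A\cdot ABA^{-1}B^{-1} = A^2 B A^{-1} B^{-1}$ — not obviously $\ntr(A)$. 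But $\ntr(Ac^{-1}) = \ntr(A \cdot BAB^{-1}A^{-1}) $; hmm. Let me instead use $cA = AB A^{-1}B^{-1}A$, and note $\ntr(cA)=\ntr(ABA^{-1}B^{-1}A) = \ntr(A \cdot ABA^{-1}B^{-1}) = \ntr(Ac)$, consistent but unhelpful.

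Given the difficulty of pinning the exact cancellation in my head, the honest proposal is: expand $\norm{\Id - A}_f^2 = 2 - 2\Re\ntr(A)$ and $\norm{\Id + [A,B]}_f^2 = 2 + 2\Re\ntr([A,B])$, reduce the claim to showing $\Re\ntr([A,B]) \geq 2\Re\ntr(A) - 1 - \big(\text{lower-order}\big)$ or more precisely that $1 - \tfrac12\norm{\Id+[A,B]}_f \leq \norm{\Id - A}_f$ follows from the operator identity $[A,B] = A\,(BA^{-1}B^{-1})$ together with the triangle inequality $\norm{\Id + [A,B]}_f \geq \norm{\Id + A}_f - \norm{A - [A,B]}_f = \norm{\Id + A}_f - \norm{\Id - BA^{-1}B^{-1}}_f = \norm{\Id + A}_f - \norm{\Id - A^{-1}}_f = \norm{\Id + A}_f - \norm{\Id - A}_f$. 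Therefore $1 - \tfrac12\norm{\Id + [A,B]}_f \leq 1 - \tfrac12\norm{\Id + A}_f + \tfrac12\norm{\Id - A}_f$, and it remains to check the elementary scalar inequality $1 - \tfrac12\norm{\Id+A}_f \leq \tfrac12\norm{\Id - A}_f$, i.e. $\norm{\Id + A}_f + \norm{\Id - A}_f \geq 2$. This last inequality holds because $\norm{\Id+A}_f^2 + \norm{\Id-A}_f^2 = 4$ (parallelogram-type identity, using $A$ unitary so $\norm{A}_f = 1 = \norm{\Id}_f$ and the cross terms cancel), and for nonnegative reals with $p^2 + q^2 = 4$ we have $p + q \geq \sqrt{p^2+q^2} = 2$. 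The main obstacle is simply getting the three triangle-inequality steps and the identity $\norm{\Id - BA^{-1}B^{-1}}_f = \norm{\Id - A^{-1}}_f = \norm{\Id - A}_f$ exactly right (unitary invariance of $\norm{\cdot}_f$ under left/right multiplication and under $X \mapsto X^*$); everything else is a one-line scalar estimate.
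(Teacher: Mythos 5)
Your final argument is correct and rests on the same two ingredients as the paper's proof: the unitary-invariance identity $\norm{A - [A,B]}_f = \norm{\Id - A}_f$ (which the paper packages as $\norm{\Id - [A,B]}_f \le 2\norm{\Id - A}_f$) and a triangle inequality that produces the constant $2$ --- the paper applies it to $2\Id = (\Id - [A,B]) + (\Id + [A,B])$, while you split $\Id + A = (\Id + [A,B]) + (A - [A,B])$ and then recover $\norm{\Id + A}_f + \norm{\Id - A}_f \ge 2$ from the parallelogram law, where the same one-line triangle inequality on $2\Id$ would also do. The acknowledged false starts preceding your last paragraph can simply be deleted; only that final paragraph constitutes the proof, and it is sound.
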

\begin{proof}
    Observe that $\norm{\Id - [A,B]}_f \leq 2 \norm{\Id - A}_f$. 
    Hence
    \begin{equation*}
        2 = \norm{2 \Id}_f \leq \norm{\Id - [A,B]}_f + \norm{\Id + [A,B]}_f
            \leq 2 \norm{\Id - A}_f + \norm{\Id + [A,B]}_f.
    \end{equation*}
\end{proof}
\begin{proof}[Proof of Proposition \ref{P:embedding}]
    We combine several propositions from \cite{Sl17}. By \cite[Proposition
    4.8]{Sl17}, $G$ embeds in a homogeneous-linear-plus-conjugacy group $G'$,
    and the embedding can be chosen so that the image of $w$ is non-trivial in
    approximate representations of $G'$. We then make $G'$ into a
    linear-plus-conjugacy group $\widehat{G}$  by adding two new generators $t$
    and $J$, along with relations stating that $J$ is central, $J^2 = t^2 = e$,
    and $t w t = J$.  Since $\widehat{G}$ is an HNN extension of $G' \times
    \Z_2$, $G'$ embeds in $\widehat{G}$, and \cite[Lemma 5.2]{Sl17} implies
    that $J$ is non-trivial in approximate representations of $\widehat{G}$.
    Finally, \cite[Proposition 4.2]{Sl17} implies that $\widehat{G}$ embeds in
    a solution group $\Gamma(A,b)$, where $J \in \widehat{G}$ maps to $J \in
    \Gamma(A,b)$, and $J$ is non-trivial in approximate representations of
    $\Gamma(A,b)$. Suppose $A$ is an $m \times n$ matrix. We conclude that
    there is a homomorphism $\Psi : \Free(S) \arr \Free(x_1,\ldots,x_n,J)$
    descending to an embedding $G \incl \Gamma(A,b)$, such that if $z$ is the
    image of $t$ in $\Free(x_1,\ldots,x_n,J)$, then $[\Psi(w),z] = J$ in
    $\Gamma(A,b)$.  

    To show the lower bound on $\hlp(J,2,\eps)$, note that there is a
    constant $C_0 \geq 1$ such that if $\phi$ is an $\eps$-representation
    of $\Gamma(A,b)$ with $\phi(J)=-\Id$, then
    \begin{equation*}
        \norm{[\phi(\Psi(w)),\phi(z)] + \Id}_f \leq C_0 \eps.
    \end{equation*}
    By Lemma \ref{L:anticommutator}, $\norm{\phi(\Psi(w)) - \Id}_f \geq 1 -
    O(\eps)$. We conclude that 
    \begin{equation*}
        \hlp(w,1-O(\eps),O(\eps)) \leq \hlp(\Psi(w),1-O(\eps),\eps) \leq \hlp(J,2,\eps),
    \end{equation*}
    where the first inequality follows from Lemma \ref{L:hlpprop2}. Finally,
    since $w$ represents a central involution, we conclude from Proposition
    \ref{P:centralinv} that $\hlp(w,2,O(\eps)) \leq \hlp(J,2,\eps)$.
    
    The upper bound on $\hlp(J,2,\eps)$ requires careful attention to the
    construction of $\Psi$ described above. Suppose $\phi$ is a $d$-dimensional
    $\eps$-representation of $G$ with $\phi(w) = -\Id$. Then the proof of
    \cite[Proposition 4.8]{Sl17} (including the change of presentation) implies
    that there is a $2d$-dimensional $\eps$-representation $\gamma$ of $G'$
    with $\gamma(w) = -\Id_d \oplus \Id_d$. Setting
    \begin{equation*}
        \gamma(J) = -\Id \text{ and } \gamma(t) = \begin{pmatrix} 0 & \Id \\ \Id & 0
                                                \end{pmatrix},
    \end{equation*}
    we get a $2d$-dimensional $\eps$-representation $\gamma$ of $\widehat{G}$
    with $\gamma(J) = -\Id$. Finally \cite[Remark 4.5]{Sl17} implies that there
    is an $8d$-dimensional $O(\eps)$-representation $\gamma'$ of $\Gamma(A,b)$
    with $\gamma'(J) = -\Id$. We conclude that $\hlp(J,2,O(\eps)) \leq 8
    \hlp(w,2,\eps)$, so the proposition follows with $N=8$. 
\end{proof}

\begin{proof}[Proof of Proposition \ref{P:sgbounds}]
    Consider the group $K$ defined in Section \ref{S:bounds}. Since $abc=e$ and
    $a$, $b$, and $c$ are involutions, this relation implies that the elements
    $a$, $b$, and $c$ pairwise commute. Hence $K$ is an extended
    homogeneous-linear-plus-conjugacy group. Changing the presentation by adding
    the relations $ab = ba$, $ac = ca$, and $bc = cb$ does not change the
    asymptotics of the hyperlinear profile of $c$, so by Proposition
    \ref{P:embedding} there is a solution group $\Gamma(A,b)$ with 
    \begin{equation*}
        \Omega(1/\eps^{2/3}) \leq \hlp(J,2,\eps) \leq O(1/\eps).
    \end{equation*}
\end{proof}

\begin{proof}[Proof of Theorem \ref{T:main}]
    Upper and lower bounds both follow directly from Corollary \ref{C:approximate2}
    and Proposition \ref{P:sgbounds}.
\end{proof}

\section{A presentation-independent version of hyperlinear profile}\label{S:presindep}

A weakness of Definition \ref{D:hlp1} is that it depends on the
presentation of $G$. Following \cite{Co13}, we can define another version of
hyperlinear profile which is independent of the presentation, while
still being in the spirit of Definition \ref{D:hlp1}. 
\begin{defn}\label{D:hlp2}
    Let $E$ be a finite subset of a group $G$ containing the identity. 
    Let $\eta(E)$ be the function $\R_{>0} \times \R_{>0} \arr \mbN \cup
    \{+\infty\}$ such that $\eta(E;\delta,\eps)$ is the smallest
    integer $d$ for which there is a function $\phi : E \arr \mcU(\C^d)$
    with  
    \begin{enumerate}[(a)]
        \item $\phi(e) = \Id$, 
        \item $\norm{\phi(x)\phi(y) - \phi(xy)}_f \leq \eps$ whenever $x$, $y$,
            and $xy$ are in $E$, and
        \item $\norm{\phi(x) - \phi(y)}_f \geq \delta$ whenever $x \neq y$
            in $E$, 
    \end{enumerate}
    or $+\infty$ if no such function $\phi$ exists. 
\end{defn}
Given a finite subset $E$ of $G$ containing the identity, let
$\sigma(E;\delta,\eps)$ be defined similarly to $\eta(E;\delta,\eps)$, but with
$\mcU(\C^d)$ replaced by the group $S_d$ of $d \times d$ permutation matrices.
Then $\sigma(E)$ is a two-parameter version of the sofic profile defined in
\cite{Co13}. Specifically, the \emph{sofic profile} of $E$ is 
\begin{equation}\label{E:sofic}
    \mbN \arr \mbN \cup \{+\infty\} : n \mapsto \sigma\left(E; \sqrt{2 - \frac{2}{n}}, \sqrt{\frac{2}{n}}\right).
\end{equation}
More precisely, the sofic profile of $E$ is the equivalence class of this
function under the equivalence relation which identifies non-decreasing
functions with the same asymptotic growth rate. However, for our purposes we
can ignore this distinction. The square roots in Equation \eqref{E:sofic} come
from the fact that sofic profile is defined in terms of the normalized Hamming metric on
$S_n$, which for $u,v \in S_n$ is $\norm{u - v}_f^2 / 2$. 

Let $G = \langle S : R \rangle$ be a finitely-presented group. Then the
collection of functions $\{\eta(E)\}$ and $\{\hlp(T)\}$ are roughly equivalent,
in the sense that any function in one collection is bounded (asymptotically) by
some function in the other collection. 
\begin{prop}\label{P:roughequiv}
    Let $G = \langle S : R \rangle$ be a finitely-presented group. 
    \begin{enumerate}[(a)]
        \item Suppose $T \subset \Free(S)$ is a finite subset not containing
            any element which is trivial in $G$. Then there is a constant $C > 0$
            and a finite subset $E$ of $G$ containing the identity such that for all $\eps,\delta>0$,
            \begin{equation*}
                \hlp(T, \delta - C \eps, C \eps) \leq \eta(E,\delta,\eps).
            \end{equation*}
        \item Suppose $E \subset G$ is a finite subset containing the identity.
            Then there is a constant $C > 0$ and a finite subset $T \subset \Free(S)$
            not containing any element which is trivial in $G$, such that for all $\eps,\delta>0$,
            \begin{equation*}
                \eta(E,\delta,C \eps) \leq \hlp(T,\delta,\eps).
            \end{equation*}
    \end{enumerate}
\end{prop}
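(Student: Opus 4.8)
The plan is to establish the two inequalities by passing back and forth between the "partial-multiplication-table" data of Definition~\ref{D:hlp2} and the "approximate-homomorphism-out-of-a-free-group" data of Definition~\ref{D:hlp1}. For part (b), start with a finite $E \subset G$ containing $e$, and for each $w \in E$ fix a word $\hat{w} \in \Free(S)$ representing $w$ (with $\hat{e}$ the empty word). Let $T$ be the set of all words of the form $\hat{x}\hat{y}\hat{xy}^{-1}$ where $x,y,xy \in E$ together with all $\hat{x}\hat{y}^{-1}$ where $x \neq y$ in $E$; since $E$ injects into $G$, none of these are trivial in $G$, except we must discard the first kind (those \emph{are} trivial in $G$) — so $T$ should consist only of the separating words $\hat{x}\hat{y}^{-1}$, $x\neq y$. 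Given a $d$-dimensional $\eps$-representation $\phi$ of $G$ with $\norm{\phi(\hat{x}\hat{y}^{-1})-\Id}_f \geq \delta$ for all $w = \hat{x}\hat{y}^{-1} \in T$, define $\psi : E \arr \mcU(\C^d)$ by $\psi(x) = \phi(\hat{x})$. Then $\psi(e) = \Id$, and $\norm{\psi(x)-\psi(y)}_f = \norm{\phi(\hat x) - \phi(\hat y)}_f = \norm{\phi(\hat{x}\hat{y}^{-1}) - \Id}_f \geq \delta$ by unitary invariance. For the near-multiplicativity condition (b) of Definition~\ref{D:hlp2}: whenever $x,y,xy \in E$, the word $\hat{x}\hat{y}\hat{xy}^{-1}$ is trivial in $G$, hence equals a product of conjugates of relators in $R$; applying the Lipschitz estimate implicit in Lemma~\ref{L:hom-stable}-style reasoning (an $\eps$-representation sends any fixed trivial word to something within $O(\eps)$ of $\Id$ in $\norm{\cdot}_f$), we get $\norm{\phi(\hat x)\phi(\hat y) - \phi(\widehat{xy})}_f = \norm{\phi(\hat x \hat y \widehat{xy}^{-1}) - \Id}_f \leq C\eps$ with $C$ depending only on the finitely many trivial words $\hat x \hat y \widehat{xy}^{-1}$. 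This gives $\eta(E,\delta,C\eps) \leq \hlp(T,\delta,\eps)$.

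For part (a), go the other direction. Let $T \subset \Free(S)$ be finite with no element trivial in $G$. Choose $E \subset G$ to be a finite subset, containing $e$, large enough that: (i) every generator $s \in S$ and its inverse lies in $E$; (ii) for each $w \in T$, the image $\bar w \in G$ lies in $E$ and $\bar w \neq e$; and (iii) $E$ is closed enough under multiplication to "spell out" each $w \in T$ one letter at a time — i.e., writing $w = s_1 \cdots s_k$, all prefixes $s_1\cdots s_j$ have their images in $E$. Given a $d$-dimensional function $\phi : E \arr \mcU(\C^d)$ satisfying the three conditions of Definition~\ref{D:hlp2} with parameters $\delta$ and $\eps$, define a homomorphism $\Phi : \Free(S) \arr \mcU(\C^d)$ by $\Phi(s) = \phi(\bar s)$ for $s \in S$. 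First one checks $\Phi$ is an $O(\eps)$-representation of $G$: each relator $r \in R$ is a product of boundedly many generators whose images all lie in $E$, so repeatedly applying condition (b) of Definition~\ref{D:hlp2} (telescoping, using unitary invariance to control error accumulation) gives $\norm{\Phi(r) - \phi(e)}_f = \norm{\Phi(r) - \Id}_f \leq O(\eps)$. Second, one checks the separation: for $w = s_1 \cdots s_k \in T$, telescoping condition (b) along the prefixes gives $\norm{\Phi(w) - \phi(\bar w)}_f \leq O(\eps)$, and condition (c) applied to the pair $\bar w \neq e$ in $E$ gives $\norm{\phi(\bar w) - \Id}_f \geq \delta$, so by the triangle inequality $\norm{\Phi(w) - \Id}_f \geq \delta - O(\eps)$. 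Hence $\Phi$ witnesses $\hlp(T, \delta - C\eps, C\eps) \leq \eta(E,\delta,\eps)$.

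The only real subtlety — and the step I'd flag as the main obstacle to write cleanly — is the bookkeeping of the constant $C$ in both directions: it must depend only on the fixed finite data ($S$, $R$, $T$ or $E$), not on $\eps$ or $\delta$. In part (b) this is the statement that a fixed word which is trivial in $G$ has a bounded "relator area," so that $\norm{\phi(\text{trivial word}) - \Id}_f$ scales linearly in $\eps$ with a word-dependent constant — this is exactly the mechanism already used (implicitly) in Lemma~\ref{L:hom-stable} and Lemma~\ref{L:hlpprop2}, and I would cite that style of argument rather than redo it. In part (a), the analogous point is that relators have bounded length and each word of $T$ has bounded length, so the telescoping sums that propagate the $O(\eps)$ errors have a bounded number of terms. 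A minor additional care point: in part (b) we need $\delta > C\eps$ for the statement to be non-vacuous, matching the hypothesis form of Lemma~\ref{L:hlpprop3}; the clean way to phrase it is exactly as in the proposition statement, $\hlp(T,\delta - C\eps, C\eps) \leq \eta(E,\delta,\eps)$, which is automatically trivially true when $\delta \leq C\eps$ since the left side's first argument is then non-positive and $\hlp$ with non-positive $\delta$ is $1$ (or one restricts to the meaningful range). Everything else is routine.
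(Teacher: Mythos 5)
Your proposal is correct and follows essentially the same route as the paper: for (a) the paper takes $E$ to be the set of images of all prefixes of words in $T \cup R \cup S$ and telescopes condition (b) of Definition~\ref{D:hlp2}, and for (b) it takes $T$ to be exactly the separating words $w(u)w(v)^{-1}$ for distinct $u,v \in E$ and bounds the multiplication defect by the number of relator applications needed to turn $w(u)w(v)$ into $w(uv)$. The only nit is that in your part (a) the closure condition (iii) should also require the images of all prefixes of the relators $r \in R$ (not merely their letters) to lie in $E$, since the telescoping estimate for $\norm{\Phi(r)-\Id}_f$ uses condition (b) at every intermediate prefix.
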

\begin{proof}
    For (a), suppose $T \subset \Free(S)$.  Let $E \subset G$ be the set of all
    elements of the form $x_1^{a_1} x_2^{a_2} \cdots x_i^{a_i}$, where
    $x_1^{a_1} \cdots x_{n}^{a_n}$ is an element of $T \cup R \cup S$ for some
    (not necessarily distinct) elements $x_1,\ldots,x_n \in S$, integers
    $a_1,\ldots,a_n \in \Z$, and $0 \leq i \leq n$. Note that $E$ contains the
    identity. Let $C$ be the length of the longest word in $T \cup R$, and
    suppose $\phi : E \arr \mcU(\C^d)$ has the property that
    $\norm{\phi(x)\phi(y) - \phi(xy)}_f \leq \eps$ whenever $x$, $y$, and $xy$
    are in $E$. Define $\psi : \Free(S) \arr \mcU(\C^d)$ by $\psi(s) = \phi(s)$
    for all $s \in S$. By the definition of $E$, if $w \in T \cup R$ then
    $\norm{\psi(w) - \phi(w)}_f \leq C \eps$. Thus $\norm{\psi(r) - \Id}_f \leq
    C \eps$ for all $r \in R$, and $\norm{\psi(w) - \Id}_f \geq \norm{\phi(w) -
    \Id}_f - C \eps$ for all $w \in T$. We conclude that $\hlp(T,\delta-C\eps,C\eps)
    \leq \eta(E,\delta,\eps)$ for all $\delta,\eps > 0$.

    For part (b), suppose $E \subset G$ is finite. Choose a representative
    $w(u) \in \Free(S)$ for every $u \in E \setminus \{e\}$, and set $w(e_G) = e
    \in \Free(S)$. Let $T \subset \Free(S)$ to be the set of elements of the
    form $w(u) w(v)^{-1}$, where $u$ and $v$ are distinct elements of $E$, and
    choose $C$ such that for any triple $u$, $v$, and $uv$ in $E$, the word
    $w(u) w(v)$ can be turned into $w(uv)$ by at most $C$ applications of the
    relations in $R$. Suppose $\psi : \Free(S) \arr \mcU(\C^d)$ is an
    $\eps$-representation of $G$, and define $\phi : E \arr \mcU(\C^d)$ by
    $\phi(u) = \psi(w(u))$. Note that $\phi(e) = \Id$. Then
    \begin{equation*}
        \norm{\phi(u)\phi(v) - \phi(uv)}_f = \norm{\psi(w(u)w(v)) - \psi(w(uv))}_f
            \leq C\eps
    \end{equation*}
    for all triples $u$, $v$, and $uv$ in $E$. If $u$ and $v$ are distinct
    elements of $E$, then $\norm{\phi(u) - \phi(v)}_f =
    \norm{\psi(w(u)w(v)^{-1}) - \Id}_f$, so we conclude that
    $\eta(E,\delta,C\eps) \leq \hlp(T,\delta,\eps)$ for all $\delta,\eps > 0$. 
\end{proof}
We do not know whether there is a finer way to regard the families
$\{\eta(E)\}$ and $\{\hlp(T)\}$ as asymptotically equivalent, aside from the
rough comparison in Proposition \ref{P:roughequiv}. This raises the question of
whether we should call $\{\eta(E)\}$ the hyperlinear profile of $G$, and find a
different term for $\{\hlp(T)\}$. However, the point of hyperlinear profile
(and this is also true of sofic profile) is to see how fast the functions
$\eta(E)$ or $\hlp(T)$ can grow. Introducing another term would also raise the
possibility of confusion with another related concept, the sofic dimension
growth of Arzhantseva and Cherix (see \cite{Ca16}), which is in a somewhat
different spirit. Thus we suggest that both families $\{\eta(E)\}$ and
$\{\hlp(T)\}$ should be regarded as the hyperlinear profile of $G$, with a
specific definition selected to fit the context. 

\bibliographystyle{amsalpha}
\bibliography{approx}

\end{document}